\renewcommand\@formatdoi[1]{\ignorespaces}
\renewcommand\@setauthorsaddresses{\ignorespaces}
\title{Online Detection of Effectively Callback Free Objects with Applications to Smart Contracts}         
\author{Shelly Grossman}
\affiliation{
  \department{Computer Science}              
  \institution{Tel Aviv University}            
  \country{Israel}
}
\author{Ittai Abraham}
\affiliation{
  \institution{VMware Research}            
  \country{USA}
}
\author{Guy Golan-Gueta}
\affiliation{
  \institution{VMware Research}            
  \country{USA}
}
\author{Yan Michalevsky}
\affiliation{
  \institution{Stanford University}            
  \country{USA}
}
\author{Noam Rinetzky}
\affiliation{
  \department{Computer Science}              
  \institution{Tel Aviv University}            
  \country{Israel}
}
\author{Mooly Sagiv}
\affiliation{
  \department{Computer Science}              
  \institution{Tel Aviv University}            
  \country{Israel}
}
\affiliation{
  \institution{VMware Research}            
  \country{USA}
}
\author{Yoni Zohar}
\affiliation{
  \department{Computer Science}              
  \institution{Tel Aviv University}            
  \country{Israel}
}
\keywords{Program analysis, Modular reasoning, Smart contracts}  
\newcommand{\ignore}[1]{}
\newcommand{\shellyold}[1]{}
\newcommand{\red}[1]{\color{red}#1\color{black}}
\newcommand{\code}[1]{\texttt{#1}}
\newcommand{\tbd}[1]{}
\newcommand{\changed}[1]{{#1}}
\newcommand{\@@eqLikeDef}[3]{%
    \ensuremath{\overset{\mathclap{\text{\scalebox{#1}{#2}}}}{#3}}%
}
\newcommand*{\@eqLikeDef}[2]{
    \mathchoice
        {
            \@@eqLikeDef{0.7}{#1}{#2}
        }
        {
            \@@eqLikeDef{0.7}{#1}{#2}
        }
        {
            \@@eqLikeDef{0.6}{#1}{#2}
        }
        {
            \@@eqLikeDef{0.5}{#1}{#2}
        }
}
\newcommand*{\eqdef}{\@eqLikeDef{def}{=}}
\newcommand{\PDA}{\text{PDA}}
\newcommand{\B}[1]{\langle #1 \rangle}
\newcommand{\ST}[1]{(#1)}
\newcommand{\SET}[1]{\{#1\}}
\newcommand{\partialto}{\rightharpoonup}
\newcommand{\dom}{\operatorname{dom}}
\newcommand{\res}{\code{res}}
\newcommand{\ret}{\code{ret}}
\newcommand{\inarg}{\code{arg}}
\newcommand{\formal}[1]{\operatorname{formal}(o)}
\newcommand{\inctx}{\vdash}
\newcommand{\PCmd}{\syn{PCmd}}
\newcommand{\Cmd}{\syn{Cmd}}
\newcommand{\Contract}{\syn{Contract}}
\newcommand{\SAssign}[2]{#1 \coloneqq  #2}
\newcommand{\SAssert}[1]{\code{assert}(#1)}
\newcommand{\SSkip}[0]{\code{skip}}
\newcommand{\SCallC}[3]{{#1 \coloneqq #2(#3)}}
\newcommand{\SReturn}[0]{\code{return}}
\newcommand{\SReturnV}{\code{return}}
\newcommand{\SEnterV}{\code{enter}}
\newcommand{\cdone}{\code{done}}
\newcommand{\PLname}{\ensuremath{\mathbf{SMAC}}}
\newcommand{\syn}[1]{\mathsf{#1}}
\newcommand{\Var}[0]{\syn{LVar}}
\newcommand{\semtrue}[0]{\it{true}}
\newcommand{\semfalse}[0]{\it{false}}
\newcommand{\SelStack}{\Stk}
\newcommand{\SelObject}{\objectid}
\newcommand{\SelCmd}{C}
\newcommand{\SelStore}{\store}
\newcommand{\SelEnv}{\reg}
\newcommand{\sem}[1]{\mathbf{#1}}
\newcommand{\Reg}[0]{\sem{Reg}}
\newcommand{\Stack}{\sem{Stack}}
\newcommand{\Store}{\sem{Store}}
\newcommand{\Frame}{\sem{Frame}}
\newcommand{\State}{\sem{State}}
\DeclareFontFamily{U} {MnSymbolA}{}
\DeclareFontShape{U}{MnSymbolA}{m}{n}{
  <-6> MnSymbolA5
  <6-7> MnSymbolA6
  <7-8> MnSymbolA7
  <8-9> MnSymbolA8
  <9-10> MnSymbolA9
  <10-12> MnSymbolA10
  <12-> MnSymbolA12}{}
\DeclareFontShape{U}{MnSymbolA}{b}{n}{
  <-6> MnSymbolA-Bold5
  <6-7> MnSymbolA-Bold6
  <7-8> MnSymbolA-Bold7
  <8-9> MnSymbolA-Bold8
  <9-10> MnSymbolA-Bold9
  <10-12> MnSymbolA-Bold10
  <12-> MnSymbolA-Bold12}{}
\DeclareFontFamily{U}  {MnSymbolB}{}
\DeclareFontShape{U}{MnSymbolB}{m}{n}{
    <-6>  MnSymbolB5
   <6-7>  MnSymbolB6
   <7-8>  MnSymbolB7
   <8-9>  MnSymbolB8
   <9-10> MnSymbolB9
  <10-12> MnSymbolB10
  <12->   MnSymbolB12}{}
\DeclareFontShape{U}{MnSymbolB}{b}{n}{
    <-6>  MnSymbolB-Bold5
   <6-7>  MnSymbolB-Bold6
   <7-8>  MnSymbolB-Bold7
   <8-9>  MnSymbolB-Bold8
   <9-10> MnSymbolB-Bold9
  <10-12> MnSymbolB-Bold10
  <12->   MnSymbolB-Bold12}{}
\DeclareSymbolFont{MnSyA} {U} {MnSymbolA}{m}{n}
\DeclareSymbolFont{MnSyB}         {U}  {MnSymbolB}{m}{n}
\DeclareMathSymbol{\rcirclearrowright}{\mathrel}{MnSyA}{248}
\DeclareMathSymbol{\nrcirclearrowright}{\mathrel}{MnSyB}{248}
\DeclareMathSymbol{\leftrightarrows}{\mathrel}{MnSyA}{152}
\DeclareMathSymbol{\nleftrightarrows}{\mathrel}{MnSyB}{152}
\newcommand{\ocode}{\kappa}
\newcommand{\ObjVar}{\sem{Heap}}
\newcommand{\oovar}{\operatorname{\psi}}
\newcommand{\objectid}{o}
\newcommand{\cmd}[0]{c}
\newcommand{\Stk}[0]{\Gamma}
\newcommand{\stk}[0]{\gamma}
\newcommand{\topstk}{\operatorname{top}}
\newcommand{\reg}[0]{\rho}
\newcommand{\defostatestack}[5]{\langle \frame{#1}{#2}{#3}\cdot #4, #5 \rangle}
\newcommand{\defostate}[4]{\defostatestack{#1}{#2}{#3}{\Stk}{#4}}
\newcommand{\ostatedefault}{\langle \Stk,\store \rangle}
\newcommand{\store}{\sigma}
\newcommand{\Tr}{\mathit{Tr}}
\newcommand{\execdepth}{\it{Depth}}
\newcommand{\inv}{\it{I}}
\newcommand{\sstate}{s}
\renewcommand{\frame}[3]{\ST{#1,\,#2,\,#3}}
\newcommand{\src}{\operatorname{src}}
\newcommand{\trg}{\operatorname{trg}}
\newcommand{\exec}[0]{\pi}
\newcommand{\Conflict}{\it{Conflict}}
\newcommand{\cep}{\varphi}
\newcommand{\transition}{\tau}
\newcommand{\trid}{\iota}
\newcommand{\event}{e}
\newcommand{\trace}{T}
\newcommand{\action}{a}
\newcommand{\subexec}{\sqsubseteq}
\newcommand{\fseq}[0]{\simeq_{\scriptscriptstyle FS}}
\newcommand{\fseqo}[1]{\fseq^{#1}}
\newcommand{\ceq}[0]{\simeq_{C}}
\newcommand{\ECF}{\textsc{ECF}}
\newcommand{\SECF}[1][]{\textsc{sECF}#1}
\newcommand{\DECF}[1][]{\textsc{dECF}#1}
\newcommand{\DECFo}[2][]{\textsc{dECF}^{#2}\!\!#1}
\newcommand{\FSECF}{\textsc{ECF}\subFS}
\newcommand{\CECF}{\textsc{ECF}\subC}
\newcommand{\subC}{\textsubscript{\textsc{c}}}
\newcommand{\subFS}{\textsubscript{\textsc{fs}}}
\newcommand{\segment}[0]{t}
\newcommand{\readset}[1]{R\ifthenelse{\equal{#1}{}}{}{(#1)}}
\newcommand{\writeset}[1]{W\ifthenelse{\equal{#1}{}}{}{(#1)}}
\newcommand{\depth}[1]{D\ifthenelse{\equal{#1}{}}{}{(#1)}}
\newcommand{\depthNoArg}{D}
\newcommand{\indexInCall}[1]{i\ifthenelse{\equal{#1}{}}{}{(#1)}}
\newcommand{\indexInExec}[1]{Idx\ifthenelse{\equal{#1}{}}{}{(#1)}}
\newcommand{\indexInExecNoArg}{Idx}
\newcommand{\prefix}[1]{\it{prefix}(#1)}
\newcommand{\suffix}[1]{\it{suffix}(#1)}
\newcommand{\Commute}[2]{#1\ensuremath{\leftrightarrows}#2}
\newcommand{\nCommute}[2]{#1\ensuremath{\nleftrightarrows}#2}
\newcommand{\hb}[2]{#1\ensuremath{\prec_{\it{Inv}}}#2}
\begin{document}



\begin{abstract}
Callbacks are essential in many programming environments, but drastically
complicate program understanding and reasoning because they allow to mutate
object's local states by external objects in unexpected fashions, thus breaking modularity.
The famous DAO bug in the cryptocurrency framework \emph{Ethereum},  employed callbacks to steal \$150M.
We define the notion of Effectively Callback Free (ECF) objects in order to allow
callbacks without preventing modular reasoning.

An object is ECF in a given execution trace if there exists an
equivalent execution trace without callbacks to this object. 
An object is ECF if it is ECF in every possible execution trace. 
We study the decidability of dynamically checking ECF in a given execution trace
and statically checking if an object is ECF.
We also show that dynamically checking ECF in Ethereum is feasible and can be done online.
By running  the history of all execution traces in Ethereum, 
we were able to verify that virtually all existing contract executions,
 excluding these of the DAO or of contracts with similar known vulnerabilities, are ECF.
Finally, we show that ECF, whether it is verified dynamically or statically,
 enables modular reasoning about objects with encapsulated state.
\end{abstract}

\maketitle



\section{Introduction}
The theme of this paper is enabling modular reasoning about the
correctness of objects with encapsulated state. 
This is inspired by platforms like Ethereum~\cite{Eth:EthYellowPaperGavWood} 
that facilitate execution of \emph{Smart Contracts}~\cite{Szabo1997formalizing} on top of a blockchain-based distributed ledger~\cite{nakamoto2008bitcoin}.
A key property in Ethereum Smart Contracts is the lack of global
mutable shared state, in contrast to common standard programming
environments such as C and Java. 
A smart contract is analogous to an object with encapsulated state.

However, the Ethereum blockchain, and many other dynamic environments, implement event-driven
programming using callbacks.
These callbacks are necessary for functionality, but can compromise security.
For example, the famous bug in the DAO contract exploited callbacks to steal \$150M~\cite{Dao}.

Indeed, callbacks may break modularity which is essential for good programming style and extendibility.
In the context of Blockchain, modularity is even more important since contracts are contributed by different sources,
some of which may be malicious. 
Accordingly, the bug in the DAO allowed an adversarially crafted contract to mutate the DAO's state by calling back to it.

The DAO contract, that implemented a crowd-funding platform, was attacked by a `callback loop-hole' (to be precisely described below). 
This attack, the recovery from which required a controversial hard-fork of the blockchain,\footnote{A hard fork can be thought of as taking an agreed history of transactions, and manually change it.} 
exhibits a vulnerability that is peculiar to 
decentralized consensus systems, like Ethereum: 
	in such systems, a buggy contract cannot be updated or fixed 
		(except for extreme measures like hard-forking), 
	which makes validation and verification of smart contracts 
	of even greater importance for this application.

\paragraph{Effectively Callback-Free Objects}
We identify a natural generic correctness criteria for objects 
which enables modular reasoning in environments with local-only mutable states, 
and expect most correct objects to satisfy this requirement.
Informally, if an object $o$ calls another object $o'$, and the execution of $o'$ calls $o$ again, this second call to $o$ is defined as a callback.
The main idea is to allow callbacks in $o$ only when they cannot affect the serial
non-interruptible behavior of $o$.
Thus, such callbacks can be considered harmless and do not affect the set of local reachable states
of the object $o$.
In particular, the behavior of such objects is independent of the client environments and of other objects.
It is possible to reproduce all behaviors of the object using a most general client and without analyzing external objects.

We say that an execution is
\textbf{Dynamically Effectively Callback Free} (\DECF{}) when there exists ``an
equivalent'' execution without callbacks which
starts in the same state and reaches the same final state.
By \emph{equivalent}, we refer to the behavior of a particular object
as an external observer may perceive.
We say that an object is
\textbf{Statically Effectively Callback Free} (\SECF{}) when all its possible
executions are dynamically ECF.
We do not distinguish between dynamic and static ECF when the context is clear.
Both definitions are useful.
Dynamic ECF in particular is applicable to the blockchain environment,
since static ECF is undecidable in the general case.
We ran experiments on Ethereum, proving that checking dynamic ECF is inexpensive,
and thus can be done efficiently in-vivo. 
This, combined with Ethereum's built-in rollback feature,
would have allowed to prevent the DAO bug from occurring,
without invalidating legitimate executions.
(In fact, we found just one such legitimate non-ECF contract,  discussed in \Cref{Sec:Evaluation}).


We show that the
vulnerable DAO contract is non-ECF while no non-ECF executions are detected
after applying the suggested corrections to it.
Notice that the ECF notion is similar to the notion of atomic
transactions in concurrent systems.
Indeed, despite the fact that contract languages do not usually
support concurrency, modularity and callbacks require similar kind
of reasoning.

The ECF property's usefulness is not limited to bug-finding; 
once ECF is established, it can be served to simplify reasoning on the object in isolation of other objects:
We show that the set of reachable local states in ECF objects can
be determined without considering the code of other objects and
thus enable modular reasoning. This modular reasoning can be
performed automatically using abstract interpretation e.g., as
suggested in \citet{LogozzoCLSS09} or by using deductive verification
which is supported by Dafny~\cite{Leino2010}.
We demonstrate this by
verifying an interesting invariant of the DAO contract. (See \Cref{Se:Overview}). 

\paragraph{Online Detection of ECF executions}
A na{\"i}ve detection of \DECF{} may be costly because of the need to
enumerate subexecution traces.
Therefore, we develop an effective polynomial online algorithm for checking if an
execution is ECF.
The main idea is to detect conflicting memory accesses and utilize commutativity in an effective manner.
We integrated the algorithm into the \emph{Ethereum Virtual Machine (EVM)}~\cite{Eth:EthYellowPaperGavWood}.
We ran the algorithm on all executions kept in the Ethereum blockchain until 23 June 2017, and demonstrate that:
(i)~the vulnerable DAO contract and other buggy contracts are non-ECF.
(ii)~very few correct contracts are non-ECF,
(iii)~callbacks are not esoteric and are used in many contracts, 
and (iv)~the runtime overhead of our implementation is negligible and
thus can be integrated as an online check.

This online detection can thus be used to prevent incidents like the theft from the DAO 
at the cost of slightly more restricted form of programming.

As far as we are aware, our tool is also the most precise and effective tool 
for finding such vulnerable behaviors due to callbacks.
We compared it to the Oyente tool~\cite{RW:LuuCCS16,Eth:OyenteOnline}, by giving it both ECF and non-ECF contracts  based on the DAO object (\Cref{Fi:DaoContract}). 
We found that it has false positives, as it detects a `reentrancy bug' (the common name of the DAO vulnerability in the blockchain community) for any one of the fixes that render our example contract ECF.

\paragraph{Decidability of \SECF{} for objects}
We also consider the problem of checking \SECF{} algorithmically.
Obviously, since modern contract languages, such as Solidity~\cite{Eth:Solidity}, are
Turing complete, checking if a contract is ECF
is undecidable.

We show that checking that a contract is \SECF{} in a language with finite local states
is decidable.
This is interesting since many contracts only use small local states or maps with uniform data
independent accesses.
Technically, this result is non-trivial since the nesting of contract calls is unbounded,
and since ECF requires reasoning about permutations of nested invocations.
The reason for the decidability is that non-ECF executions which occur in high depth of nesting
must also occur in depth $2$.

\paragraph{Main Results}
Our results can be summarized as follows:
\begin{enumerate}
\item We define a general safety property, called ECF, for objects (\SECF{}) and executions (\DECF{}).
Our definition is inspired by the Blockchain environment but it may also be useful for other environments with encapsulated states, such as Microservices.
\item We show that objects with encapsulated data, under the assumption that they satisfy ECF, can be verified using modular reasoning in a sound manner.
\item A stronger notion of ECF, based on conflict-equivalence, 
enabling efficient verification of \DECF{} in real-life environments, 
and for which \SECF{} is decidable for programs with finite state and unbounded stack.
\item A polynomial time and space algorithm for online checking of \DECF{} 
and prototype implementation of it as a dynamic monitor of \DECF{}, built on top of an Ethereum client.
\item Evaluation of the algorithm on the entire history of the Ethereum blockchain (both main and `Classic' forks, see \Cref{Sec:Evaluation}).
The monitor detects true bad executions (the infamous DAO and others) as non-ECF, and has near-zero false positives.
Based on this result, it can be inferred that, in practice, most non-ECF executions correspond to bad executions.
We also show that our monitor has a very small runtime overhead.
By retroactively running the \DECF{} monitor on the available history, 
we were able to prove its effectiveness in preventing the exploitation of the vulnerability in the DAO,
and even more importantly, 
the feasibility of leveraging it in other applications, e.g., simplifying 
modular contract verification.{} 
\end{enumerate}

\section{Overview}
\label{Se:Overview}
\newcommand{\icredit}{\textit{credit}}
\newcommand{\ibalance}{\textit{balance}}

This section provides some necessary background and an informal overview of our approach.

\subsection{The DAO Bug}

\Cref{Fi:DaoContract} shows pseudocode illustrating the vulnerability in the DAO.\footnote{\emph{DAO} is acronym for \emph{decentralized autonomous organization}, and its purpose is to facilitate voting on proposals and on investments by the owners of the DAO.}
The contract stores a \code{credit} for each object, as well as the current \code{balance}.\footnote{In
programming languages like Solidity, balance is a predefined field of every contract, maintained by the runtime system. We write it explicitly for clarity.}
The \code{credit} represents individual investments per object.
To align with the Ethereum terminology, the unit of currency represented by \code{credit} and \code{balance} is called \emph{ether}.
The contract maintains a representation invariant, where the sum of the credits equals to the current balance,
i.e.,
\begin{equation}
\Sigma_{o\in\dom(\icredit)} \icredit[o] = \ibalance
\label{eq:DaoInvariant}
\end{equation}
The contract offers two methods for manipulating states:
\texttt{deposit} for depositing money and \texttt{withdrawAll} for
withdrawing all available funds of a specific object.

\begin{figure}
\begin{minipage}{1in}
\begin{footnotesize}
\begin{alltt}
\begin{tabbing}
XX\=XX\=XX\=XX\=XX\=XX\=XX\=XX\=XX\=XX\=XX\=\kill
Object DAO \+\\
Map<Object, int> credit\\
int balance\\
\textbf{Invariant} \textit{(sum o: credit[o]) = balance}\\
\\
Method withdrawAll(Object o)         Method deposit(Object o, int amount)  \+\\
2: if (oCredit > 0)                    6: credit[o] += amount   \+\\
// 2.5: credit[o] = 0                7: this.balance += amount\\
3: this.balance -= oCredit \\
4: o.pay(oCredit) \\
5: credit[o] = 0 \-\\
\end{tabbing}
\end{alltt}
\end{footnotesize}
\end{minipage}
\vspace*{-5mm}
\caption{\label{Fi:DaoContract}%
A contract illustrating the DAO bug.
The representation invariant may be violated by callbacks from malicious contracts.
Line $2.5$ fixes the bug.}
\end{figure}

\begin{figure}  
\centering
  \begin{subfigure}[t]{2.5in}
    \centering
    \begin{footnotesize}
    \begin{alltt}
\begin{tabbing}
Object GoodClient \\
  Object Dao, int balance \\
  Method init(Object dao) \\
    1: this.Dao = dao \\
  Method pay(int profit)  \\
    2: this.balance += profit \\
  Method depositCredit(\=Object dao, int amount) \\
    3: Dao.deposit(this, amount) \\
  Method getCredit(Object dao) \\
    4: Dao.withdrawAll(this) 
\end{tabbing}
    \end{alltt}
    \end{footnotesize}
\caption{\label{Fi:DaoGoodClient}%
An innocent client using the \code{DAO} object without violating its representation
invariant.}
  \end{subfigure}
  \quad
  \begin{subfigure}[t]{2.5in}
    \centering
\begin{footnotesize}
\begin{alltt}
\begin{tabbing}
Object Attacker \\
  Object Dao, bool stop, int balance\\
  Method init(Object dao)   \\
    1: Dao = dao \\
    2: stop = false  \\
  Method pay(int profit)  \\
    3: this.balance += profit\\
    4: if (!stop)    \\
       5: stop = true   \\
       6: Dao.withdrawAll(this) \\
    7: stop = false 
\end{tabbing}
\end{alltt}
\end{footnotesize}
    \caption{\label{Fi:DaoAttack}%
A snippet of an \code{Attacker} object. It is stealing money from the \code{DAO} object by violating its representation
invariant.}
  \end{subfigure}
  \caption{An innocent and a malicious client using the \code{DAO} object}\label{Fi:DaoClients}
\end{figure}



\newcommand{\statefig}[1]{\hbox{$\begin{array}{l}#1\end{array}$}}

\begin{figure}
\[
\resizebox{\textwidth}{!}{%
\xymatrix@C14pt{
*+[F=]\statefig{D.c[G]=100\\D.c[A]=100\\D.b=200\\A.b=0\\A.s=\semfalse}
                \ar[r]^-{w \{}_-1 &*+[F]\statefig{\text{read }D.c[A]=100\\D.b=100}
                \ar[r]^-{p \{ }_-2&*+[F]\statefig{A.b=100\\A.s=\semtrue}
                \ar[r]^-{w \{ }_-3&*+[F]\statefig{\text{read }D.c[A]=100\\D.b=0}
                \ar[r]^-{p \{ }_-4 
&*+[F]\statefig{A.b=200\\A.s=\semtrue}
    \ar[dlll]^-{p \} }_-5 \\
    &*[F]\statefig{D.c[A]=0} 
    \ar[r]^-{w \} }_-6&*+[F]\statefig{A.s=\semfalse}
    \ar[r]^-{p \} }_-7&*+[F]\statefig{D.c[A]=0}
    \ar[r]^-{w \} }_-8&*+[F=]\statefig{D.c[G]=100\\D.c[A]=0\\D.b=0\\A.b=200\\A.s=\semfalse}
}
}
\]
\caption{\label{Fi:StealTrace}%
A trace of calls illustrating an attack on the \code{DAO}.
Nodes are labeled by local changed states and edges are labeled by actions and by the corresponding order in the original trace.
$D$ denotes the \code{DAO}, $G$ denotes a \texttt{GoodClient} and $A$ is an \texttt{Attacker} object.
$w$ denotes the \texttt{withdrawAll} operation and $p$ denotes the \texttt{pay} operation.
$b$ is a shorthand for \texttt{balance}, $c$ is a shorthand for \texttt{credit}, and $s$ is a shorthand for \code{stop}. 
}
\end{figure}
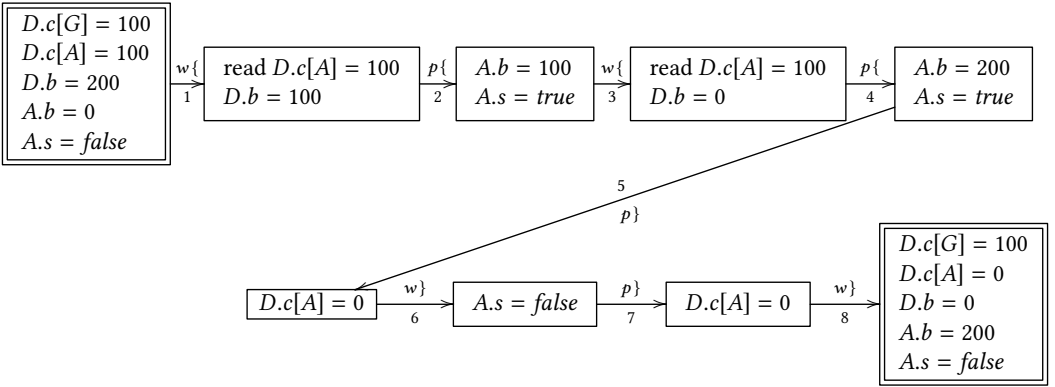

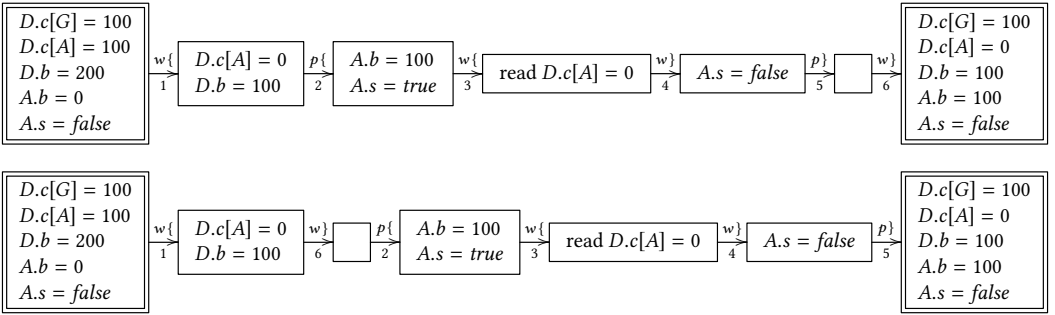
\begin{figure}
\[
\resizebox{\textwidth}{!}{%
\xymatrix@C14pt{
*+[F=]\statefig{D.c[G]=100\\D.c[A]=100\\D.b=200\\A.b=0\\A.s=\semfalse}
                \ar[r]^-{w \{}_-1 &*+[F]\statefig{D.c[A]=0\\D.b=100}
                \ar[r]^-{p \{ }_-2 &*+[F]\statefig{A.b=100\\A.s=\semtrue}
                \ar[r]^-{w \{ }_-3 &*+[F]\statefig{\text{read }D.c[A]=0}
                \ar[r]^-{w \} }_-4 &*+[F]\statefig{A.s=\semfalse}
    \ar[r]^-{p \} }_-5 
    & *+[F]\statefig{\,}
    \ar[r]^-{w \} }_-6 &*+[F=]\statefig{D.c[G]=100\\D.c[A]=0\\D.b=100\\A.b=100\\A.s=\semfalse}
}
}
\]

\[
\resizebox{\textwidth}{!}{%
\xymatrix@C14pt{
*+[F=]\statefig{D.c[G]=100\\D.c[A]=100\\D.b=200\\A.b=0\\A.s=\semfalse}
                \ar[r]^-{w \{}_-1 &*+[F]\statefig{D.c[A]=0\\D.b=100}
                \ar[r]^-{w \} }_-6 &*+[F]\statefig{\,}
                \ar[r]^-{p \{ }_-2 &*+[F]\statefig{A.b=100\\A.s=\semtrue}
                \ar[r]^-{w \{ }_-3 &*+[F]\statefig{\text{read }D.c[A]=0}
                \ar[r]^-{w \} }_-4 &*+[F]\statefig{A.s=\semfalse}
    \ar[r]^-{p \}}_-5 
    & *+[F=]\statefig{D.c[G]=100\\D.c[A]=0\\D.b=100\\A.b=100\\A.s=\semfalse}
}
}
\]
\caption{\label{Fi:FailedStealInECF}%
Two traces of calls illustrating the original and callback-free versions of a failed attack on the ECF version of the \code{DAO}.
}
\end{figure}

\Cref{Fi:DaoGoodClient} shows a simple client illustrating the expected usages of the \code{DAO} object.
\Cref{Fi:DaoAttack} shows a simple attack on the \code{DAO} object.
The code callbacks to the \code{DAO} method \texttt{withdrawAll} to steal money.
\Cref{Fi:StealTrace} depicts a concrete trace of attacking the \code{DAO} assuming that the \code{DAO}'s initial balance is $200$ ether.
We reached that state after a \code{GoodClient} object and an \code{Attacker} object deposited each $100$ ether.
In the first call to \code{withdrawAll}, the attacker will get the amount he invested originally in the attack ($100$ ether).
The \code{DAO} then calls to the \code{Attacker} object's \code{pay} method, which increases the attacker's balance by $100$ ether, and calls \code{withdrawAll} again.
The \code{pay} method is designed to call \code{withdrawAll} at most once in a trace by updating the \code{stop} variable, and avoid infinite recursion.\footnote{For clarity, we avoid technical discussion of the semantics of executions and exceptions in Ethereum/Solidity, to allow us to focus on the ECF property.}
The code of \code{withdrawAll} in the second run will transfer an additional $100$ ether from the \code{DAO} object to the attacker.
In the end of the trace, the \code{DAO} was depleted of its funds completely, and the attacker managed to illegitimately receive the funds that belonged to \code{GoodClient}.

\subsection{Effectively Callback Free Contracts}

In principle, semi-automatic program verification and abstract interpretation can be used to verify the absence of malicious attacks
like the one in the \texttt{Attacker} object.
However, this requires reasoning about the whole code.\footnote{In the case of Ethereum, it is in fact impossible to reason about the whole code, as new contracts can be added at any time, and these contracts could interact with the contract being checked.}
This paper advocates a different solution by exploring modularity.
The idea is to require stronger conditions from the contracts which prevent the need to reason about other objects at all.

Specifically, we define the notion of \textbf{effectively
callback free} (ECF) objects.
Our definition is inspired by Blockchain contracts but is applicable to enforce modularity in other environments with local states. 

We say that an execution of an object with an initial state $s_0$ and final state $s$ is
\textbf{Dynamically Effectively Callback Free} (\DECF{}) when there exists ``an
equivalent'' execution of the contract without callbacks which
starts in the same initial state $s_0$ and
reaches the final state $s$.
We say that an object is
\textbf{Statically Effectively Callback Free} (\SECF{}) when all its possible
executions are effectively callback free.

The \texttt{DAO} object is not ECF. For example, the trace depicted in \Cref{Fi:StealTrace} cannot be reproduced without callbacks to reach the same state.
In contrast, the fix to the \texttt{DAO} object by uncommenting line $2.5$ and deleting line $5$ makes the contract ECF.
This contract is now ECF since all its traces can be reordered to avoid callbacks.
For example, \Cref{Fi:FailedStealInECF} shows a trace of an attempt to perform an attack similar to the attack in \Cref{Fi:StealTrace} and its corresponding reordering that avoids callbacks. 
Note, that in the reordered trace, \code{withdrawAll} did not execute line $4$.
Omitting calls is allowed for the sake of proving an execution is ECF, 
as our goal is to be able to reproduce, assuming there are no callbacks, the same behaviors that are feasible with callbacks.

\subsection{Online ECF Detection}
It is possible to check in a na{\"i}ve way that an execution is ECF by recording the trace and checking the ECF property at the end of the execution, by enumerating all possible permutations.
However, this is costly both in space and in time, since the number of permutations grows exponentially with the size of the trace.
In particular, it is hard to see if such a solution can be integrated into a virtual machine.

In order to obtain a feasible online algorithm, we check a stronger requirement than ECF, which is inspired by conflict serializability of database transactions.
The main idea is to explore commutativity of operations for efficient online checking of a correctness condition which guarantees that the callback-free trace results in the same state as the original trace. 

Consider a trace $\exec$ with potential callbacks and a reordered trace $\exec'$ which does not include callbacks.
$\exec'$ is not necessarily feasible, unless we permit to ignore external calls by objects and force the clients to perform these calls instead.
We say that $\exec$ and $\exec'$ are 
 conflict equivalent 
if every pair of conflicting read/write operations in $\exec$ appear in the same order in $\exec'$.
Operations conflict when they are not commutative.
Commutativity is mechanically checked by comparing the read and the write sets of operations, and forbidding intersection of read/write conflicts. 
For example, in \Cref{Fi:StealTrace}, the read operation of \code{D.c[A]} in the \code{withdrawAll} action labeled 1 (lines 1-3) does not commute with the write operation of \code{D.c[A]} in the \code{withdrawAll} action labeled 5 (line 5).
However, the top trace in \Cref{Fi:FailedStealInECF}, depicting a trace of the ECF version of \code{DAO}, the operations in the \code{withdrawAll} action labeled 3 (lines 1-2) commute with the operations in the \code{withdrawAll} action labeled 6, which has an empty read and write sets (no code was executed since line 5 was deleted).
The information regarding commutativity of different subtraces is used to build a constraint graph on the ordering of object invocations. 
When this constraint graph contains no cycles, it is possible to perform topological sort to find a concrete callback-free trace.
A full description of the algorithm and its complexity is available in \Cref{Sec:Dynamic}.

We integrated this algorithm into the EVM (the \emph{Ethereum Virtual Machine}) and applied it to all available executions in the blockchain.
The results are summarized in \Cref{Sec:Evaluation}.
They indicate that the vast majority of non-ECF executions come from erroneous contracts. 
They also indicate that the runtime overhead of our instrumentation is neglectable.  
From these encouraging results, we concluded that if the ECF check was part of the Ethereum protocol, 
	it could have prevented the vulnerability in the \code{DAO} from being exploited.
	Its clearly beneficiary for an environment like Ethereum, which handles sensitive financial transactions, 
	and in which code is virtually impossible to upgrade.

\subsection{Deciding ECF Contracts}

We also investigated the possibility to verify at compile-time that a contract is ECF (the \SECF{}  property).
In general, this is undecidable, since languages such as \emph{Solidity}, a high-level front-end to EVM bytecode, are Turing-complete.
However, we show that for contracts with finite local states, checking ECF is decidable.
This result is non-trivial as the model allows for an unbounded stack length.
The decision procedure devised provides insight on additional techniques for checking ECF in practice.

\subsection{Verifying Properties of ECF Contracts}
In this paper, we show that reasoning about ECF contracts can be performed
in a modular fashion.
The local reachable states of an ECF contract are only affected by the 
code of the contract, 
and cannot be changed by external contracts.

This is useful for program verification and program analysis, treating 
external calls 
as 
non-deterministic operations that may return an arbitrary value, but cannot change the local state.
We utilized this property using Dafny~\cite{Leino2010}, to verify correctness of the revised \code{DAO} object from \Cref{Fi:DaoContract} (including line 2.5, excluding line 5).
When doing so, we ignored the call in line 4, because the return value was not used. 
We provide a deeper discussion on verifying this example using Dafny in \Cref{Sec:Modularity}.




\subsection{Summary of the Rest of This Paper}
The paper is organized as follows:
In \Cref{Sec:Preliminaries} we formally present the syntax and semantics of our programming language for contracts,
called $\PLname$. 
Notions of equivalence are presented in \Cref{Se:EqExec}. 
The ECF property and its different `flavors' (dynamic vs. static, and for different equivalence notions) are presented in \Cref{Sec:CorrectnessConditions}.
We discuss decidability results for ECF in \Cref{Sec:Decidability}.
\Cref{Sec:Modularity} shows the application of the ECF property to achieve modular object-level analysis.
The algorithm for online verification of \DECF{} is given in full in \Cref{Sec:Dynamic}.
We discuss our experimental results obtained by running the algorithm on the Ethereum blockchain in \Cref{Sec:Evaluation}.
Related work is provided in \Cref{Sec:RelatedWork} and we conclude in \Cref{Sec:Conclusion}.



\newcommand{\cid}{k}
\newcommand{\kid}{k}
\newcommand{\oid}{o}
\newcommand{\CID}{\mathsf{Cnt}}
\newcommand{\FID}{\mathsf{Fld}}
\newcommand{\VID}{\mathsf{Var}}
\newcommand{\statedepth}{\operatorname{Depth}}
\newcommand{\minexecdepth}{\operatorname{minDepth}}
\newcommand{\maxexecdepth}{\operatorname{maxDepth}}
\newcommand{\execstates}{\operatorname{States}}
\newcommand{\exectrs}{\operatorname{Transitions}}
\newcommand{\tr}{\transition}

\section{Programming Language}\label{Se:PL}\label{Sec:Preliminaries}

We formalize our results for \(\PLname\), a simple imperative object-based programming language with
pass-by-value parameters with integer-typed local variables and data members (fields).
For simplicity, and without loss of generality, every method has a single formal parameter named $\inarg$ and 
returns a value by assigning it to a designated variable $\ret$.
Even though we present our theoretical development for contracts in $\PLname$, for readability we use
a Java-like notation in our examples, which can be easily desugared.

\begin{figure}
\[
\begin{array}{lclcl}
c & \in & \PCmd & \eqdef & \SAssign{x}{e} \mid \SAssign{F}{x} \mid \SAssign{x}{F}  \mid \SAssert{b}  \mid  
             \SCallC{x}{\oid}{e} \mid \SSkip \mid \SEnterV \mid \SReturnV
\\
C & \in & \Cmd & \eqdef & c \mid C \, ; \, C \mid \code{if}\,\,\,b\,\,\,\code{then}\,\,C \,\,\code{else}\,\, C \mid \code{while}\,b\,\code{do}\,\,C 
\\
K & \in & \Contract & \eqdef & \cid \colon \overline{f} \, \,\, \, \SEnterV\, \,\code{var}\,\,\overline{x}\,\,\, \, C \,\,\,\, \SReturnV
\end{array}
\]
\caption{Syntax. 
}
\label{fig:Syntax}
\end{figure}

\subsection{Syntax}\label{Sec:Syntax}

\Cref{fig:Syntax} defines the syntax of $\PLname$.
We assume infinite syntactic domains of 
$\cid\in\CID$, $f \in \FID$, and $x\in\VID$ 
\emph{contract identifiers}, \emph{field names}, and \emph{variable identifiers}, respectively.
A contract $K$ is identified by a (unique) \emph{contract identifier} $\cid$,    
and contains a sequence of \emph{field} definitions $\overline{f}$ 
and  a (single) nameless \emph{contract method}.
The contract method is comprised of a sequence of \emph{local variable} definitions $\overline{x}$ and a command $C\in \Cmd$.
$C$ may be a \emph{primitive command} $c \in \PCmd$  
or a \emph{compound command}, i.e., a sequential composition of commands, a conditional, or a loop.
A primitive command $c\in\PCmd$ 
may be either an assignment of an expression $e$ to a local variable $x$ ($\SAssign{x}{e})$,
an assignment of the value of a local variable $x$ to a field $F$ ($\SAssign{F}{x}$),
an assignment of the value of a field $F$ to a local variable $x$ ($\SAssign{x}{F}$),
an \emph{assert} command ($\SAssert{b}$),
 a call to a contract method  with a single argument $e$, 
		keeping the returned value in a local variable $x$ ($\SCallC{x}{\oid}{e}$),
		or a $\SSkip$ command.
Each contract has a single method, thus methods are not named, and may be colloquially referred to using the name of their contract.

Without loss of generatility, we assume that no two contracts contain a field with the same name.
In the following, we use the terms `contract' and `object' interchangeably.

\begin{figure}
$
\centering
\begin{array}{ll}
\begin{array}{l@{\,}l@{\,}l@{\,}c@{\,}ll}
\reg     & \in   & \Reg       & =     & \VID \partialto_{\mathit{fin}} \mathbb{Z} & \text{Local states} \\ 
\oovar   & \in   & \ObjVar    & =     & \FID \rightarrow_{\mathit{fin}} \mathbb{Z} & \text{Object states}\\ 
\stk     & \in   & \Frame    & =     & \CID \times \Cmd \times \Reg & \text{Frames}\\
\end{array}
&
\begin{array}{l@{\,}l@{\,}l@{\,}c@{\,}ll}
\Stk     & \in   & \Stack    & =     & \overline{\Frame}& \text{Stacks}\\
\store   & \in   & \Store    & =     & \CID \partialto_{\mathit{fin}} \ObjVar & \text{Stores} \\
\sstate   & \in   & \State    & =     & \Stack \times   \Store & \text{States}\\
\end{array}
\end{array}
$
\caption{Semantic domains.\label{fig:Semantic-domains}}
\end{figure}

\subsection{Semantics}\label{SubSec:Semantics}

$\PLname$ has a rather mundane stack-based operational semantics, which handle method calls using a \emph{stack} of activation records (\emph{frames}), and uses a \emph{store}
to record the values stored in object fields.
We refer to a state in which the stack is empty as a \emph{quiescent} state and to a non-quiescent state  as an \emph{active} state.
Once the execution reaches a \emph{quiescent} state, \emph{any} object method may start running.
We refer the reader's attention to three important points:
\begin{inparaenum}[(i)]
\item contract states are encapsulated: A contract $\oid$ can only access its own fields,
\item local variables are private to their invocation, and
\item \label{I:deterministic} once a contract method is invoked, the semantics is deterministic.
\end{inparaenum}
We denote the  \emph{code context} (\emph{context} for short) which provides the code of every called object by $\ocode\in\CID\partialto_{fin}\Cmd$, i.e., $\ocode(\objectid)$ denotes the   code of   object $\objectid$.

\paragraph{States}
\Cref{fig:Semantic-domains} defines the semantic domains.
A \emph{state} $\sstate=\ostatedefault$ is a pair comprised of a (possibly empty) \emph{stack} of \emph{frames} $\Stk\in\Stack$ 
and a \emph{store} $\store \in \Store$, 
denoted by $\SelStack(\sstate)=\Stk$ and $\SelStore(\sstate)=\store$, respectively.
The \emph{depth} of a state $\sstate$, denoted by $\statedepth(\sstate)$, is the number of elements in its stack, i.e., $\statedepth(\sstate)=|\SelStack(\sstate)|$.

We denote the \emph{top} of the stack in an active state $\sstate=\ostatedefault$ by $\topstk(\sstate)=\Stk(1)$.
Intuitively, $\topstk(\sstate)$ contains the \emph{local state} of the \emph{active} (i.e., currently executing) contract method, while the other frames record the locals states of \emph{pending} calls to contract methods.
A \emph{frame} $\stk=\frame{\objectid}{\cmd}{\reg}$ records the \emph{local state} of 
(a call to the contract method of) an object. 
Formally, $\stk$ is a triple comprised of an \emph{object identifier}, denoted by $\SelObject(\stk)=\objectid$, a command, denoted by $\SelCmd(\stk)=\cmd$, which the method needs to execute, 
and a \emph{local environment} $\reg\in\Reg$, denoted by $\SelEnv(\stk)=\reg$, which assigns values to the invocation's local variables.
A \emph{store} $\store\in\Store$ is a mapping from a finite number of object identifiers to their   \emph{object state}.

\begin{figure}
\centering
$
\begin{array}{rclll}
\B{\epsilon,\store} & \Rightarrow & \B{\frame{\oid}{\ocode(o)}{[\inarg \mapsto n]}, \, \store} 
\\
\B{\frame{\oid}{\SReturnV}{\reg},\, \store} & \Rightarrow &   \B{\epsilon,\store} & 
\\[1ex]
\B{\frame{\oid}{\SCallC{x}{\oid'}{e}}{\reg} \cdot \stk,\, \store} & \Rightarrow & \B{\ST{o',\ocode(o'),[\inarg \mapsto \reg(e)]} \cdot \frame{\oid}{\SAssign{x}{\res}}{\reg}\cdot \stk, \store} 
\\
\B{\frame{\oid'}{\SReturn}{\reg'} \cdot \frame{\oid}{\cmd}{\reg} \cdot \Stk, \store} 
& \Rightarrow &   
\B{\frame{\oid}{\cmd}{\reg[\res \mapsto \reg'(\ret)]} \cdot \Stk, \store} 
\end{array}
$
\caption{Operational semantics with a \emph{context} $\ocode$.
$\reg$ is naturally extended for expressions over variables in $\Var$. 
We denote $o \in \dom(\store), n \in\mathbb{N}$.
\label{Fig:CallOS}
}
\end{figure}

\paragraph{Transition relations}
We formalize the semantics of our programming language using a \emph{transition relation}.
A \emph{transition} is a pair $\transition=(\sstate,\sstate') \in \Tr \subseteq \State \times \State$ 
comprised of 
a \emph{source state} $\sstate$, denoted by $\src(\transition)$,
and \emph{target} state $\sstate'$, denoted by  $\trg(\transition)$.
For clarity, we sometimes write a $\transition=(\sstate,\sstate')$ as $\sstate \Rightarrow \sstate'$.
We denote the active object of the transition by $\objectid(\transition)=\objectid(\topstk(\src(\transition)))$, or $\oid_{\it{main}}$ if it starts in a quiescent state.
We denote by $\cmd(\transition)\in\PCmd$ 
the primitive command that justifies the transition. 

The meaning of primitive and compound commands is standard, and thus omitted.
We mention that primitive commands can only use local variables taken  from the top stack frame, 
and that only the fields of the \emph{active object} can be accessed.

\Cref{Fig:CallOS} defines meaning of  method calls and returns. 
When an object  $\objectid$ is called from a quiescent state, a new stack frame is pushed to the currently empty stack. 
The frame determines that the active object is $\objectid$, the command executing is the code $\ocode(\objectid)$ of $\objectid$, and the local environment for the invocation is the assignment of the value of $n$ to $\inarg$.
The last command in $\ocode(\objectid)$ is always a $\SReturn$, after which the frame is popped, leading to a quiescent state.
When a call $\SCallC{x}{\objectid'}{e}$ is made from an active state, a new stack frame is pushed as in the previous case.
We note that the local environment is initialized by assigning to $\inarg$ the value of $e$ in the local environment belonging to the caller, $\reg(e)$.
To handle retrieval of the return value from the callee, the command in the caller is modified to assign to $x$ the value of a specially designated variable $\res$. 
When the callee invocation of $\objectid'$ finishes, the command in the top frame is $\SReturn$ and we let $\reg'$ denote the local environment of the callee. 
The control transfers back to the caller object $\objectid$, and the value of $\res$ is set to be the value of $\ret$ in $\reg'$. 
The assigned value of $\res$ is then automatically assigned to $x$, as determined by the operational semantics of the call.
The primitive command associated with a call is $\SEnterV$, and with a return is $\SReturnV$. 

\paragraph{Executions}
An \emph{execution} $\exec = \exec(1) \ldots \exec(|\exec|)$ is a finite sequence of transitions coming from $\Tr$.
An execution $\exec$ is \emph{well-formed} if the target state of every transition is the source state of the following one, i.e.,
$\forall i\in\{2..|\exec|\}.\,\trg(\exec(i-1))=\src(\exec(i))$.
For clarity, we sometimes write an execution $\exec$ as $\exec =  \sstate_1 \Rightarrow \sstate_2 \Rightarrow \cdots \sstate_n$.
We use $\ocode \inctx \exec$ to denote that an execution $\exec$ takes objects' code from context $\ocode$. We omit the context when no confusion is likely.

We say that a transition $\transition$ \emph{appears in} a $\exec$, denoted by $\transition \in \exec$, if $\exec = \_ \cdot \transition \cdot \_ $.
We say that a state $\sstate$ \emph{appears in} a $\exec$, denoted by $\sstate \in \exec$, if there is a transition $\transition \in \exec$ such that
$\sstate \in \{\src(\transition),\trg(\transition)\}$. 
We denote the sets of  transitions and states that appear in an execution $\exec$ by $\execstates(\exec)$ and $\exectrs(\exec)$, respectively. 
An \emph{execution} $\exec'$ is a \emph{subexecution} of an execution $\exec$,   denoted by $\exec'\subexec\exec$, if it is a subsequence of $\exec$. 

We denote the \emph{first} and \emph{last} states of a non-empty execution $\exec$ by $\src(\exec)=\src(\exec(1))$ and $\trg(\exec)=\trg(\exec(|\exec|))$.
We say that $\exec=\tr\exec'\tr'$ is a \emph{complete execution} if $\src(\exec)$ and $\trg(\exec)$ are quiescent states and $\exec'$ contains only active states.
A \emph{run} is a concatenation of complete executions executed in the same code context $\ocode$,
By abuse of notation, we use $\exec$ to denote runs as well as executions.
In case we want to make the code context $\ocode$ of the run explicit, we write $\ocode \inctx \exec$. 

The \emph{minimal} and \emph{maximal depths} of a non-empty execution $\exec$, denoted by 
$\minexecdepth(\exec) = \min \{ \execdepth(\sstate) \mid \sstate \in \execstates(\exec)\}$ 
and 
$\maxexecdepth(\exec) = \max \{ \execdepth(\sstate) \mid \sstate \in \execstates(\exec)\}$ 
are the minimal, respectively, maximal depths of any of the states it contains.

A well formed execution $\exec'$ is an \emph{invocation} in an execution $\exec$ if there exist transition $\tr$ and $\tr'$ such that 
$\exec=\_ \cdot \tr \cdot  \exec' \cdot  \tr' \cdot \_$, where $\statedepth(\src(\tr))=\statedepth(\trg(\tr'))$ and $\minexecdepth(\exec')=\statedepth(\src(\tr))+1$.  
We refer to $\statedepth(\trg(\tr))$ as the \emph{depth} of the invocation $\exec'$ and denote it by $\execdepth(\exec')$.
Note that according to this definition, the depth of an invocation that results from calling a contract method on a quiescent state is one.

\paragraph{Traces}
We define an \emph{event} as a pair $\event=\ST{\objectid, \action}$, 
	consisting of  
	an \emph{object} $\objectid$, 
	and a \emph{primitive command} $\action$.
Each transition $\transition$ can be transformed to an event by $\event(\transition)=\ST{\objectid(\transition), \cmd(\transition)}$.
The \emph{object} and \emph{primitive command} of an event are can be retrieved with $\objectid(\event)$ and $\action(\event)$, respectively.
A \emph{trace} is a sequence of events, denoted by $\trace$.
The trace matching an execution $\exec$ is received by point-wise application of $\event(\cdot)$ on all the transitions in $\exec$, denoted $\trace(\exec)$.
We denote by $\trace|_\objectid$ the maximal subsequence of $\trace$ comprised of events whose object is $\objectid$.


\section{Execution Equivalence}\label{Se:EqExec}

We define two notions of equivalence of \emph{runs} (sequences of \emph{complete} executions) with respect to a given (arbitrary) object $\objectid$: \emph{final-state  equivalence} and \emph{conflict equivalence}.

\subsection{Object-Final-State Equivalence}
\ignore{
\begin{definition}
Executions $\exec_1$ and $\exec_2$ are \emph{final-state equivalent} if they start in the same state, and finish in the same state:
$$\exec_1 \fseq \exec_2 \iff \src(\exec_1) = \src(\exec_2) \land \trg(\exec_1)=\trg(\exec_2)$$
\end{definition}
}

\begin{definition}\label{De:obj-fin-eq}
Runs $\ocode_1 \inctx \exec_1$ and $\ocode_2 \inctx \exec_2$ are \emph{object-final-state equivalent for an object $\objectid$},
denoted by $\ocode_1 \inctx \exec_1 \fseqo{\objectid} \ocode_2 \inctx \exec_2$,  
if their respective first and last states have the same store for  $\objectid$, and their code contexts agree on~$\objectid$: 
$$
\begin{array}{l}
\ocode_1 \inctx \exec_1 \fseqo{\objectid} \ocode_2 \inctx \exec_2   \iff  {}\\
\qquad \SelStore(\src(\exec_1))(\objectid) = \SelStore(\src(\exec_2))(\objectid) \land \SelStore(\trg(\exec_1))(\objectid)=\SelStore(\trg(\exec_2))(\objectid)   \land \ocode_1(\objectid) = \ocode_2(\objectid)
\end{array}
$$
\end{definition}

\begin{remark}
Note that in \Cref{De:obj-fin-eq} it may be that $\ocode_1 \neq \ocode_2$ as long as $\ocode_1$ and $\ocode_2$ map $\objectid$ to the same code.
\end{remark}

\subsection{Object Conflict-Equivalence}


\paragraph{Conflicts}
Two primitive commands $\action$ and $\action'$ \emph{conflict},
denoted by  $\Conflict(\action,\action')$, 
if both access the same field and at least one of these accesses  is a write.

\begin{remark}
Recall that we assume that object states are comprised of integer-typed data members (fields); thus it is possible to detect the heap locations they access from the (syntactic) primitive commands that they execute.
Furthermore, recall that different objects contain different fields and an object cannot directly access the fields of another; thus if two primitive commands conflict then they must be executed by the same active object.
\end{remark}



\paragraph{Modular Well-Formed Executions}

We extend the definition of well-formed executions to \emph{modular} well-formed executions, which allow to 
replace the subexecution resulting from a method invocation $\SCallC{x}{\objectid'}{e}$ by  an assignment of arbitrary  value to $x$.
We refer to such an  (implicit) transition as a \emph{havoc transition}. 
Intuitively, a havoc transition allows to safely overapproximate the only effect that an object $\objectid$ may observe from the invocation of a method on an object $\objectid'$.
\begin{definition}
A  \emph{havoc transition}, denoted by $\sstate\rightsquigarrow\sstate'$, is a pair of states $\B{\sstate,\sstate'} \in  \State \times \State  $  such that
$\sstate=\defostate{\objectid}{\SCallC{x}{\objectid'}{e}}{\reg}{\store}$ and
$\sstate'=\defostate{\objectid}{\cdone}{\reg[\res \mapsto n]}{\store}$  for any values of $\objectid$, $\objectid'$, $e$, $\reg$, $\Stk$, $\store$ and $n$.
\end{definition}

\begin{definition}
A \emph{modular well-formed execution} is   a finite sequence of transitions coming from $\Tr$ such that 
for any consecutive transitions $\transition_1\transition_2$ it contains, 
the target state of $\transition_1$ and the source state of $\transition_2$ are either
(i) equal, i.e., $\trg(\transition_1)=\src(\transition_2)$, 
or (ii) induce a havoc transition, $\trg(\transition_1) \rightsquigarrow \src(\transition_2)$.
By abuse of notations, we use $\exec$  to denote modular well-formed executions too.
A sequence of transitions $\exec$ is a \emph{complete modular well-formed run} if it is modular well-formed and its first and last states are quiescent.
Such a run  is a \emph{complete modular well-formed execution} if, in addition, all other states are active.

\end{definition}

\paragraph{Projected Executions} 
\changed{We define what is the projection of an execution on an object, called \emph{projected execution}. 
The definition readily generalizes   to any sequence of transitions.}

\begin{definition}
Let $\exec$ be an execution.
The \emph{projected execution of $\exec$ on $\objectid$}, denoted by $\exec|_\objectid$,  
is the subsequence of $\exec$ comprised of the transitions whose active object is $\objectid$.
\end{definition}

\begin{lemma}
For any execution $\exec$ and object $\objectid$ it holds that  $\trace(\exec|_\objectid)=\trace(\exec)|_\objectid$.
\end{lemma}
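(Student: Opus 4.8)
The plan is to reduce the statement to a single pointwise observation and then finish by a routine structural induction on $|\exec|$. The key fact at the level of one transition is that the object of the event induced by $\transition$ coincides with the active object of $\transition$:
\[
\objectid(\event(\transition)) = \objectid(\transition) \qquad \text{for every transition } \transition .
\]
This is immediate from the definition $\event(\transition) = \ST{\objectid(\transition), \cmd(\transition)}$ together with the convention that $\objectid(\cdot)$ applied to an event reads off its first component. It holds uniformly, including for a transition starting in a quiescent state, where both sides equal the designated main object; note in particular that the identity does not depend on which primitive command $\cmd(\transition)$ is attached to such a transition, only on the event's first component.

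Granting this, the lemma is an instance of the general fact that a pointwise map commutes with a filter when the source-side predicate is the pullback, along the map, of the target-side predicate. Here $\trace(\cdot)$ is the pointwise application of $\event(\cdot)$; $\trace\proj{\objectid}$ keeps exactly the events $\event$ with $\objectid(\event) = \objectid$; and $\exec\proj{\objectid}$ keeps exactly the transitions $\transition$ with $\objectid(\transition) = \objectid$. By the pointwise fact, $\transition$ survives the execution-level projection iff $\event(\transition)$ survives the trace-level projection, which is precisely what is needed for the two sides to agree.

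Formally I would induct on $|\exec|$. The base case $|\exec| = 0$ is trivial, both sides being the empty trace. For the step, write $\exec = \transition \cdot \exec'$ and split on whether $\objectid(\transition) = \objectid$. If so, $\exec\proj{\objectid} = \transition \cdot (\exec'\proj{\objectid})$ gives $\trace(\exec\proj{\objectid}) = \event(\transition) \cdot \trace(\exec'\proj{\objectid})$, while $\trace(\exec) = \event(\transition) \cdot \trace(\exec')$ together with $\objectid(\event(\transition)) = \objectid$ gives $\trace(\exec)\proj{\objectid} = \event(\transition) \cdot (\trace(\exec')\proj{\objectid})$; the induction hypothesis on $\exec'$ then closes the case. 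If $\objectid(\transition) \neq \objectid$, both $\transition$ and $\event(\transition)$ are dropped by the respective projections, and again the induction hypothesis on $\exec'$ finishes. I do not anticipate a genuine obstacle; the only point deserving explicit care is checking the pointwise identity in the quiescent-start case, which is what the first paragraph handles.
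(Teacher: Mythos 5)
Your proof is correct: the identity $\objectid(\event(\transition))=\objectid(\transition)$ follows directly from the definition $\event(\transition)=\ST{\objectid(\transition),\cmd(\transition)}$, so the two projections select corresponding positions and the map/filter commutation goes through by your induction. The paper states this lemma without any proof, treating it as immediate from the definitions, and your argument is precisely the routine one it implicitly relies on.
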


\changed{
	It will be useful in the following sections to consider modular well-formed executions yielded by a projection on an object $\objectid$. 
	The following proposition states that it is possible to reverse that projection, namely, to find a minimal well-formed execution that, when projected, yields the modular well-formed execution we started with.
}
\begin{proposition}\label{Prop:ModularToWellFormed}
Let $\ocode \inctx \exec$ be a modular well-formed execution  where all   transitions have the same active object $\objectid$.
Then there is a context $\ocode'$ such that $\ocode'(\objectid)=\ocode(\objectid)$ 
 and an execution $\ocode' \inctx \exec'$ 
 such that $\trace(\exec)$ is a subsequence  in $\trace(\exec')$ 
 and $\ocode' \inctx \exec'$ is well-formed.
\changed{In addition, such an execution $\ocode' \inctx \exec'$ is minimal in the sense that $\trace(\exec|_{\objectid})=\trace(\exec'|_{\objectid})$}
\end{proposition}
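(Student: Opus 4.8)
\emph{Proof plan.}
The plan is to construct $\ocode'\inctx\exec'$ by \emph{replaying} $\exec$ transition by transition, turning each havoc transition — which over-approximates a call $\SCallC{x}{\objectid'}{e}$ to another object returning some value $n$ — into a genuine call into a purpose-built stub contract whose only job is to return exactly $n$. Because every transition of $\exec$ is on $\objectid$, the ``skeleton'' of $\exec$ is a bona fide execution of the code $\ocode(\objectid)$ punctuated only by havocs (and, possibly, by real recursive self-calls of $\objectid$), so it suffices to fill in each havoc with concrete transitions on an object whose code we are free to choose, in such a way that $\objectid$ cannot tell the difference. We may assume every havoced call targets some $\objectid'\neq\objectid$; this is the intended use of the havoc transition, which models the effect on $\objectid$ of invoking a \emph{different} object.

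First I would list, in order of occurrence in $\exec$, the havoced calls, let $\objectid'_1,\dots,\objectid'_p$ be the distinct callees, and for each $j$ let $n_{j,1},\dots,n_{j,R_j}$ be the return values of the successive havoced calls to $\objectid'_j$. For each $j$ I pick a field name fresh for $\ocode$ — possible since $\FID$ is infinite while $\ocode$ mentions only finitely many fields — and take $\ocode'(\objectid'_j)$ to be a contract that, on each invocation, reads that field into a local, increments it back into the field (through an auxiliary local, to respect the grammar's restriction that field assignments take a local variable), and then sets $\ret:=n_{j,r}$ when the value read was $r-1$, via a finite nest of conditionals on the counter value (returning an arbitrary value for larger counters, a branch never reached). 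This stub calls no one. Let $\ocode'$ be obtained from $\ocode$ by (re)defining each $\objectid'_j$ to this stub; since every $\objectid'_j\neq\objectid$, we have $\ocode'(\objectid)=\ocode(\objectid)$, as required.

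Next I would define $\exec'$ by starting from $\src(\exec)$ with its store extended so that each $\objectid'_j$ maps to the object state assigning $0$ to its fresh counter, and replaying $\exec$: every ordinary transition of $\objectid$ (field and local operations, asserts, and real self-calls and their returns) is performed verbatim, and each havoc transition — say the $r$-th call to $\objectid'_j$ — is replaced by the block of concrete transitions that enters $\objectid'_j$, runs its stub (reading the counter $r-1$, writing $r$, setting $\ret:=n_{j,r}$), returns to $\objectid$ writing $\res:=n_{j,r}$ into $\objectid$'s frame, and resumes $\objectid$; on $\objectid$ this block produces exactly the events, local environment and field effects that $\exec$ produces via the havoc transition together with the following resumption step. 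The core of the proof is a replay invariant, by induction on the length of the mirrored prefix: after mirroring a prefix of $\exec$, the current state of $\exec'$ and the current state of $\exec$ agree on $\objectid$'s top frame, on the frames below it, and on the $\objectid$-portion of the store, and differ at most on the $\objectid'_j$-portions of the store (the counters). Since $\objectid$'s primitive commands read and write only $\objectid$'s fields and $\objectid$'s own locals, and since every stub returns precisely the value the matching havoc used, each mirrored step is then enabled and has the same $\objectid$-observable effect, so $\ocode'\inctx\exec'$ is a well-formed execution. I expect this invariant to be the main obstacle — in particular, the need for \emph{stateful} stubs: the same $\objectid'_j$ may be called several times in $\exec$, possibly with equal arguments, yet be required to return different values, which a deterministic stateless callee cannot do, whereas a counter-driven one can, precisely because a callee legitimately mutates its own state across calls in a real execution.

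Finally, the trace conditions fall out of the shape of $\exec'$. A stub's enter transition is on $\objectid$ and mirrors the corresponding havoc, whereas the stub's body and its return transition are on $\objectid'_j\neq\objectid$ and so disappear under $|_{\objectid}$; a short case analysis on transition kinds then shows that $\exec'|_{\objectid}$ and $\exec$ carry the same sequence of events, i.e. $\trace(\exec'|_{\objectid})=\trace(\exec|_{\objectid})$, which is the asserted minimality. Deleting from $\trace(\exec')$ the events labelled by the stub objects leaves precisely $\trace(\exec|_{\objectid})$, and since every transition of $\exec$ is on $\objectid$ the lemma stated just above gives $\trace(\exec|_{\objectid})=\trace(\exec)$; hence $\trace(\exec)$ is a subsequence of $\trace(\exec')$. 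One also checks that $\src(\exec')$ and $\src(\exec)$, and $\trg(\exec')$ and $\trg(\exec)$, share the same store for $\objectid$, so that $\exec'$ really realizes the same behaviour of $\objectid$.
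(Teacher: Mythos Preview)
The paper states this proposition without proof, so there is no argument to compare against. Your approach---building stub callees that deterministically return the values recorded by the havocs, using a fresh counter field per callee to disambiguate repeated calls---is the natural construction and is correct. The replay invariant you describe is the right mechanism for establishing well-formedness of $\exec'$, and you are right to single out stateful stubs as the crux: the same callee may be havoced several times with different return values, so a stateless deterministic callee cannot work.

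One point deserves more care. You say the stub's enter transition ``mirrors the corresponding havoc'', but under the paper's definitions a havoc is not a transition and contributes no event to $\trace(\exec)$, whereas the call transition you insert has active object $\objectid$ (since $\objectid(\tau)=\objectid(\topstk(\src(\tau)))$ and the caller's frame is on top of the source state) and therefore contributes an $(\objectid,\SEnterV)$ event that survives projection onto $\objectid$. Read literally, this makes $\trace(\exec'|_{\objectid})$ strictly longer than $\trace(\exec|_{\objectid})$, contradicting the minimality clause. This tension is really a looseness in the paper's own definitions---by the same reasoning, the projection of a well-formed execution onto $\objectid$ retains the enter transitions to other objects and is then \emph{not} modular well-formed under the stated havoc definition---so the discrepancy is not a flaw in your construction; but you should flag it rather than assert equality outright. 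Your assumption that every havoced call targets some $\objectid'\neq\objectid$ should likewise be stated explicitly, since a havoced self-call cannot be realized without altering $\ocode'(\objectid)$, which the proposition forbids.
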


\changed{
	Finally, we present the definition for execution conflict equivalence with respect to an object $\objectid$. 
}
 \begin{definition}
Let $\ocode_1 \inctx \exec_1$ and $\ocode_2 \inctx \exec_2$ be modular well-formed runs and 
$\trace_1= \trace(\exec_1|_\objectid)$ and $\trace_2= \trace(\exec_2|_\objectid)$ be their the traces of their respective projections on $\objectid$. 
$\ocode_1 \inctx \exec_1$ and $\ocode_2 \inctx \exec_2$ are \emph{object conflict-equivalent for an object $\objectid$}, denoted $\exec_1 \ceq^{\objectid} \exec_2$, if: 
\begin{enumerate}[(i)]
\item 
$\ocode_1(\objectid) = \ocode_2(\objectid)$, 
\item 
$|\trace_1| = |\trace_2|$, 
\item 
there exists a permutation 
$\cep:\{ 1..|\trace_1|\}\rightarrow\{ 1..|\trace_1|\}$  
such that:  
\begin{enumerate}
\item for any $i\in\{1..|\trace_1|\}$ it holds that $\trace_1(i)=\trace_2(\cep(i))$, 
and 
\item for any $i,j\in\{1..|\trace_1|\}$, if   $\Conflict(\action(\trace_1(i)), \action(\trace_1(j)))$
then $i<j \iff \cep(i)<\cep(j)$.
\end{enumerate}
\end{enumerate}
\end{definition}

\section{Correctness conditions}\label{Sec:CorrectnessConditions}

In this section we give a formal definition for two notions of the ECF property: \emph{final-state ECF} and \emph{conflict ECF}.
We start by formally defining callbacks and callback-freedom in executions.

\newcommand{\cb}[2]{\mathbb{CB}_{#2}(#1)}
\newcommand{\ncb}[2]{{\mathbb{CB}^{f}_{#2}(#1)}}



A stack frame $\stk$ \emph{is a callback frame (to object $\SelObject(\stk)$)} in a stack $\Stk$ 
if there exist stack frames $\stk'$ and $\stk''$ such that
$\Stk =\_ \stk'' \_\,\stk' \_\, \stk \_$ and $\SelObject(\stk)=\SelObject(\stk'')$, but $\SelObject(\stk)\neq\SelObject(\stk')$.
A stack $\Stk$ contains a \emph{callback (to $\objectid$)}, denoted by $\cb{\Stk}{\objectid}$, if it contains a callback frame.
A state $\sstate$  contains a \emph{callback (to $\objectid$)}, denoted by $\cb{\sstate}{\objectid}$,  if its stack does, and 
an execution $\exec$ contains a \emph{callback (to $\objectid$)}, denoted by $\cb{\exec}{\objectid}$,  if it contains a state $\sstate$ such that $\cb{\sstate}{\objectid}$.
A stack resp. state resp. execution is \emph{callback-free (for $\objectid$)}, denoted by $\ncb{\Stk}{\objectid}$ resp. $\ncb{\sstate}{\objectid}$ resp. $\ncb{\exec}{\objectid}$, 
if it does not contain a callback.

We now define what it means for an execution to be \emph{effectively} callback free, or \emph{ECF}, with respect to a given object $\objectid$.
Note that ECF is a property of both an execution $\exec$ and some object $\objectid$.
Specfically, we are  interested in the case where $\exec$ has a callback to $\objectid$.
\begin{definition}
A well-formed complete execution $\ocode \inctx \exec$ is \emph{equivalently effectively callback-free for an object $\objectid$}, denoted by $\DECFo[\subFS]{\objectid}$, if there is a well-formed callback-free run $\ocode' \inctx \exec'$, which is final-state equivalent for $\objectid$ to $\exec$:
$$
\ocode \inctx \exec \vDash \DECFo[\subFS]{\objectid} \iff \exists \ocode',\exec'. \, 
\ocode \inctx\exec \fseqo{\objectid} \ocode' \inctx\exec'
\land
\ncb{\exec'}{\objectid}  
$$
\end{definition}

We say that the execution $\ocode' \inctx \exec'$ is a \emph{witness} for $\exec$ being a $\DECFo[\subFS]{\objectid}$ execution.


Checking $\DECFo[\subFS]{\objectid}$ is difficult in practice, and undecidable in general for models with an infinite state.
We describe a stronger definition of ECF, based on conflict-equivalence, called $\CECF$, 
which permits an efficient algorithm for checking it.
Interestingly, even though executions in our model do not allow for concurrency, callbacks can be thought of as allowing to express a limited subset of concurrent executions. 
In fact, the ECF property in our model is analogous to serializability in models that permit concurrency.
Using this analogy, \emph{invocations} are analogous to \emph{transactions}.
We show what this means to reorder invocations in the sequential semantics of $\PLname$.

In general terms, $\CECF$ requires to find a callback-free execution which is conflict-equivalent to the execution with the callbacks.
Conflict-equivalence requires that the trace of the callback-free execution is a permutation of the trace of the original execution.
It is thus useful to start with a characterization of the legal permutations of an execution.
Firstly, the permutation may not break program order of contract code.
That is, the permutation must retain the ordering of events whose transitions are part of the same invocation $\exec$,
 and their state have the same depth as $\execdepth(\exec)$.
Secondly, we want to allow permutations that remove callback invocations from their original call location, and sets them to execute in a quiescent state, and still receive a modular well-formed execution.


\usetikzlibrary{decorations.pathmorphing}
\tikzset{snake/.style={decorate, decoration=snake},
q/.style={circle,fill=black,draw,minimum size=0.1cm,inner sep=0pt,label={$\epsilon$}},
                    o1/.style={circle,fill=black,draw,minimum size=0.1cm,inner sep=0pt},
                    o2/.style={circle,red,fill=red,draw,minimum size=0.1cm,inner sep=0pt},
                    o3/.style={circle,fill=black,draw,minimum size=0.1cm,inner sep=0pt},
                    }

\begin{figure}

\begin{tikzpicture}[->,
                    ]

\node[q]    (q1) {};
\node[o1]   (1o1) [right = 0.8cm of q1] {};
\node[o1]   (2o1) [right = 0.8cm of 1o1] {};
\node[o2]   (1o2) [right = 0.8cm of 2o1] {};
\node[o2]   (2o2) [right = 0.8cm of 1o2] {};
\node[o1]   (3o1) [right = 0.8cm of 2o2] {};
\node[q]    (q2) [right = 0.8cm of 3o1] {};
\node[o3]   (1o3) [right = 0.8cm of q2] {};
\node[q]    (q3) [right = 0.8cm of 1o3] {};

\path[draw, thick, every node/.style={anchor=south}]
(q1) edge node {\smaller{$\trid^1_1$}} (1o1);

\path[draw, dashed, every node/.style={anchor=south}]
(1o1) edge node {$\cdots$} (2o1);

\path[draw=red, thick, every node/.style={anchor=south}]
(2o1) edge node {\red{\smaller{$\trid^1_2$}}} (1o2);

\path[draw=red, dashed, every node/.style={anchor=south}]
(1o2) edge node {\red{$\cdots$}} (2o2);

\path[draw=red, thick, every node/.style={anchor=south}]
(2o2) edge node {\red{\smaller{$\trid^m_2$}}} (3o1);

\path[draw, thick, every node/.style={anchor=south}]
(3o1) edge node {\smaller{$\trid^n_1$}} (q2)
(q2) edge node {\smaller{$\trid^1_3$}} (1o3);

\path[draw, dashed, every node/.style={anchor=south}]
(1o3) edge node {$\cdots$} (q3);

\end{tikzpicture}

\hspace{+2.6cm}
\begin{tikzpicture}[->,
                    ]

\node[q]    (q1) {};
\node[o1]   (1o1) [right = 0.8cm of q1] {};
\node[o1]   (2o1) [right = 0.8cm of 1o1] {};
\node[o1]   (3o1) [right = 2.6cm of 2o1] {};
\node[q]    (q2) [right = 0.8cm of 3o1] {};
\node[o3]   (1o3) [right = 0.8cm of q2] {};
\node[q]    (q3) [right = 0.8cm of 1o3] {};
\node[o2]   (1o2) [right = 0.8cm of q3] {};
\node[o2]   (2o2) [right = 0.8cm of 1o2] {};
\node[q]    (q4) [right = 0.8cm of 2o2] {};

\path[draw, thick, every node/.style={anchor=south}]
(q1) edge node {\smaller{$\trid^1_1$}} (1o1);

\path[draw, dashed, every node/.style={anchor=south}]
(1o1) edge node {$\cdots$} (2o1);

\path[draw, snake, every node/.style={anchor=south}]
(2o1) -- (3o1);

\path[draw, thick, every node/.style={anchor=south}]
(3o1) edge node {\smaller{$\trid^n_1$}} (q2)
(q2) edge node {\smaller{$\trid^1_3$}} (1o3);

\path[draw, dashed, every node/.style={anchor=south}]
(1o3) edge node {$\cdots$} (q3);

\path[draw=red, thick, every node/.style={anchor=south}]
(q3) edge node {\red{\smaller{$\trid^1_2$}}} (1o2);

\path[draw=red, dashed, every node/.style={anchor=south}]
(1o2) edge node {\red{$\cdots$}} (2o2);

\path[draw=red, thick, every node/.style={anchor=south}]
(2o2) edge node {\red{\smaller{$\trid^m_2$}}} (q4);

\end{tikzpicture}

\hspace{+2.6cm}
\begin{tikzpicture}[->,
                    ]

\node[q]    (q1) {};
\node[o1]   (1o1) [right = 0.8cm of q1] {};
\node[o1]   (2o1) [right = 0.8cm of 1o1] {};
\node[o1]   (3o1) [right = 2.6cm of 2o1] {};
\node[q]    (q2) [right = 0.8cm of 3o1] {};
\node[o2]   (1o2) [right = 0.8cm of q2] {};
\node[o2]   (2o2) [right = 0.8cm of 1o2] {};
\node[q]    (q3) [right = 0.8cm of 2o2] {};
\node[o3]   (1o3) [right = 0.8cm of q3] {};
\node[q]    (q4) [right = 0.8cm of 1o3] {};

\path[draw, thick, every node/.style={anchor=south}]
(q1) edge node {\smaller{$\trid^1_1$}} (1o1);

\path[draw, dashed, every node/.style={anchor=south}]
(1o1) edge node {$\cdots$} (2o1);

\path[draw, snake, every node/.style={anchor=south}]
(2o1) -- (3o1);

\path[draw, thick, every node/.style={anchor=south}]
(3o1) edge node {\smaller{$\trid^n_1$}} (q2);

\path[draw=red, thick, every node/.style={anchor=south}]
(q2) edge node {\red{\smaller{$\trid^1_2$}}} (1o2);

\path[draw=red, dashed, every node/.style={anchor=south}]
(1o2) edge node {\red{$\cdots$}} (2o2);

\path[draw=red, thick, every node/.style={anchor=south}]
(2o2) edge node {\red{\smaller{$\trid^m_2$}}} (q3);

\path[draw, thick, every node/.style={anchor=south}]
(q3) edge node {\smaller{$\trid^1_3$}} (1o3);

\path[draw, dashed, every node/.style={anchor=south}]
(1o3) edge node {$\cdots$} (q4);
\end{tikzpicture}

\hspace{-2.9cm}
\begin{tikzpicture}[->,
                    ]

\node[q]    (q1) {};
\node[o2]   (1o2) [right = 0.8cm of q1] {};
\node[o2]   (2o2) [right = 0.8cm of 1o2] {};
\node[q]    (q2) [right = 0.8cm of 2o2] {};
\node[o1]   (1o1) [right = 0.8cm of q2] {};
\node[o1]   (2o1) [right = 0.8cm of 1o1] {};
\node[o1]   (3o1) [right = 2.6cm of 2o1] {};
\node[q]    (q3) [right = 0.8cm of 3o1] {};
\node[o3]   (1o3) [right = 0.8cm of q3] {};
\node[q]    (q4) [right = 0.8cm of 1o3] {};

\path[draw=red, thick, every node/.style={anchor=south}]
(q1) edge node {\red{\smaller{$\trid^1_2$}}} (1o2);

\path[draw=red, dashed, every node/.style={anchor=south}]
(1o2) edge node {\red{$\cdots$}} (2o2);

\path[draw=red, thick, every node/.style={anchor=south}]
(2o2) edge node {\red{\smaller{$\trid^m_2$}}} (q2);

\path[draw, thick, every node/.style={anchor=south}]
(q2) edge node {\smaller{$\trid^1_1$}} (1o1);

\path[draw, dashed, every node/.style={anchor=south}]
(1o1) edge node {} (2o1);

\path[draw, snake, every node/.style={anchor=south}]
(2o1) -- (3o1);

\path[draw, thick, every node/.style={anchor=south}]
(3o1) edge node {\smaller{$\trid^n_1$}} (q3)
(q3) edge node {\smaller{$\trid^1_3$}} (1o3);

\path[draw, dashed, every node/.style={anchor=south}]
(1o3) edge node {$\cdots$} (q4);

\end{tikzpicture}
\caption{The callback reorder process. 
The first graph represents the original execution, which contains a callback (in red).
The three other graphs represent all possible callback free executions.
Red marks the moved callback transitions.
Wave edges indicate that a call was replaced with a havoc transition.
}\label{fig:CallbackExtraction}
\end{figure}
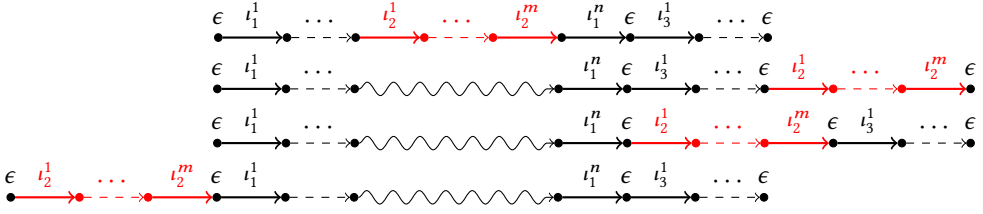

When we permute a trace such that a callback invocation is removed from its original place, we replace the call transition leading to the callback with a havoc transition.
An example can be seen in \Cref{fig:CallbackExtraction}, showing all legal permutations of a trace that has a callback-free execution with havoc transitions.

\changed{
\begin{remark}
A havoc transition is not part of the permuted trace; it is merely used to justify the execution which is no longer well-formed as we defined earlier.
\end{remark}
}

Using modular well-formed executions, we can formally define the $\CECF$ property for executions, $\DECF[\subC]$:
\begin{definition}
A well-formed complete execution $\ocode \inctx  \exec$ is \emph{conflict-equivalently effectively callback-free for an object $\objectid$},
denoted by $\ocode \inctx \exec \vDash \DECFo[\subC]{\objectid}$, 
if there is a modular well-formed callback-free run $\ocode  \inctx  \exec'$ which is object-conflict-equivalent to $\exec$ for $\objectid$:
$$
\ocode \inctx  \exec \vDash \DECFo[\subC]{\objectid} 
\iff 
\exists \exec'. \,
\ocode \inctx  \exec \ceq^{\objectid} \ocode \inctx  \exec'
\land
\ncb{\exec'}{\objectid} 
\,.
$$
\end{definition}

It is easy to prove that conflict equivalence implies final-state equivalence (see, e.g., \cite{BOOK:BHG87}).
Thus, it can be concluded that $\CECF$ implies $\FSECF$ \changed{as, by using \Cref{Prop:ModularToWellFormed},}{} we can use the same witness for $\exec$ being $\DECF[\subC]$ to prove that $\exec$ is also $\DECF[\subFS]$.
\begin{theorem}\label{Thm:CECFImpliesFSECF}
Let $\exec$ be a well-formed complete execution. 
If  $\exec \vDash \DECFo[\subC]{\objectid}$ then  $\exec \vDash \DECFo[\subFS]{\objectid}$.
\end{theorem}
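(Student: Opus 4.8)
The idea is to reuse the conflict-equivalence witness as the final-state witness, after bridging from \emph{modular} well-formed runs (which may contain havoc transitions) to genuine well-formed runs. So suppose $\ocode \inctx \exec \vDash \DECFo[\subC]{\objectid}$ and let $\ocode \inctx \exec'$ be the witnessing modular well-formed callback-free run with $\ocode \inctx \exec \ceq^{\objectid} \ocode \inctx \exec'$. I would exhibit a context $\ocode'$ with $\ocode'(\objectid)=\ocode(\objectid)$ and a well-formed callback-free run $\ocode' \inctx \exec''$ that is object-final-state equivalent to $\exec$ for $\objectid$; this is precisely a witness for $\DECFo[\subFS]{\objectid}$.

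\textbf{De-modularization.} First I would project onto $\objectid$: $\exec'|_{\objectid}$ consists solely of $\objectid$-transitions and is modular well-formed, with havoc transitions appearing exactly where $\objectid$ calls out to some $\objectid''\neq\objectid$ (the callee's transitions being dropped by the projection, $\objectid$ then resumes with $\res$ bound arbitrarily). This step crucially uses $\ncb{\exec'}{\objectid}$: because no callee ever re-enters $\objectid$, the dropped fragments are genuine maximal sub-invocations and the surviving $\objectid$-transitions chain into equalities and havoc steps. I would then apply \Cref{Prop:ModularToWellFormed} to $\ocode \inctx \exec'|_{\objectid}$ to obtain $\ocode'$ with $\ocode'(\objectid)=\ocode(\objectid)$ and a well-formed $\ocode' \inctx \exec''$ with $\trace(\exec''|_{\objectid})=\trace(\exec'|_{\objectid})$. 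Moreover $\exec''$ is callback-free for $\objectid$: the only non-$\objectid$ transitions it contains are those of the minimal sub-invocations introduced to replace the havoc transitions, which target objects other than $\objectid$ and never call $\objectid$; and purely recursive $\objectid$-calls do not form a callback frame, since that would require a frame of a different object between two $\objectid$-frames.

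\textbf{Final-state equivalence.} It then remains to show $\ocode \inctx \exec \fseqo{\objectid} \ocode' \inctx \exec''$, which I would derive from the classical fact that conflict-equivalent schedules started from a common initial state reach a common final state \cite{BOOK:BHG87}, specialized to the store of $\objectid$. The specialization is legitimate because of encapsulation: conflicting primitive commands share the same active object, so permuting $\objectid$'s projected trace while respecting its conflict order preserves the net effect on $\SelStore(\cdot)(\objectid)$, each of $\objectid$'s primitive commands acting deterministically on it (with $\SEnterV$ modeled as a write to $\res$ that neither reads nor changes $\objectid$'s fields). Hence $\trace(\exec|_{\objectid})$ and $\trace(\exec''|_{\objectid})=\trace(\exec'|_{\objectid})$, being conflict-equivalent and run from the same initial $\objectid$-store, yield the same final $\objectid$-store; together with $\ocode(\objectid)=\ocode'(\objectid)$ this gives the required equivalence.

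\textbf{Main obstacle.} The delicate point is the de-modularization: showing that the projection of a callback-free modular run is itself modular well-formed, and that \Cref{Prop:ModularToWellFormed} reinstates the dropped calls without re-entering $\objectid$ (so callback-freedom is preserved) and without disturbing $\objectid$'s store at the endpoints (so final-state equivalence survives). The conflict-serializability argument is textbook, but it has to be re-proved relative to one object's store rather than the whole state --- which is exactly what the encapsulation Remark provides --- and one must be careful that the witness run and $\exec$ share the initial store of $\objectid$, which is implicit in what it means for a conflict-equivalent run to be an admissible witness.
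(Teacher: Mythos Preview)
Your proposal is correct and follows essentially the same route as the paper's proof: take the $\CECF$ witness $\exec'$, project onto $\objectid$, invoke \Cref{Prop:ModularToWellFormed} to replace the modular run by a genuine well-formed run $\hat\exec$ (the paper's $\hat\exec$ is your $\exec''$), argue callback-freedom for $\objectid$ survives because the projected traces coincide, and then use the classical ``conflict equivalence $\Rightarrow$ final-state equivalence'' fact specialized to $\objectid$'s store via encapsulation. Your write-up is in places more careful than the paper's own proof---in particular, you flag that the shared initial $\objectid$-store is an implicit hypothesis of the serializability argument and that the de-modularization must not reintroduce callbacks to $\objectid$---but the decomposition and the key lemma are the same.
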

\changed{
\begin{proof}
Since $\ocode \inctx \exec \vDash \DECFo[\subC]{\objectid}$, there is a callback-free modular well-formed complete execution $\ocode \inctx \exec'$ 
 such that $\ocode \inctx \exec \ceq^{\objectid} \ocode \inctx \exec'$.
It is easy to see that since $\ncb{\exec'}{\objectid}$ then $\exec'|_{\objectid}$ is too a callback-free and modular well-formed run. 
From an immediate generalization of \Cref{Prop:ModularToWellFormed} to runs, 
we conclude that for $\ocode \inctx \exec'|_{\objectid}$ there is 
 a context $\ocode'$ 
 and a well-formed run $\ocode' \inctx \hat{\exec}$ 
 such that $\ocode'(\objectid)=\ocode(\objectid)$, 
  and $\trace(\exec'|_{\objectid})=\trace(\hat{\exec}|_{\objectid})$. 
  It should be noted that the equality of the trace projected on $\objectid$ implies callback-freedom for $\objectid$ is retained: 
  namely, $\ncb{\hat{\exec}}{\objectid}$.
Conflict equivalence implies final-state equivalence (\cite{BOOK:BHG87}),
 hence, $\src(\exec)=\src(\exec')$ and $\trg(\exec)=\trg(\exec')$.
Also, because conflict ordering in primitive commands of $\objectid$ is retained between $\exec'|_{\objectid}$ and $\hat{\exec}|_{\objectid}$, we conclude that $\exec' \fseq^{\objectid} \hat{\exec}$, therefore $\src(\exec')=\src(\hat{\exec})$ and $\trg(\exec')=\trg(\hat{\exec})$. 
It can be concluded then that since (i) $\src(\exec)=\src(\hat{\exec})$ and $\trg(\exec)=\trg(\hat{\exec})$, 
 (ii)  $\ocode(\objectid)=\ocode'(\objectid)$, and (iii) $\hat{\exec}$ is well-formed, then:
  $\exec \fseq^{\objectid} \hat{\exec}$.
  making $\ocode' \inctx \hat{\exec}$ a witness proving $\exec \vDash \DECFo[\subFS]{\objectid}$.
\end{proof}
}

Finally, as we are also interested in ECF as a property of objects (\SECF{}), 
we extend the definitions of $\FSECF$ and $\CECF$ to  objects (\SECF{\subFS} and \SECF{\subC}) instead of executions (\DECF{\subFS} and \DECF{\subC}), 
which we refer to as \emph{static ECF}.
\ignore{\emph{projected executions}, which include only transitions that pertain to a single object.
\begin{definition}
Let $\exec$ be an execution and $\objectid$ be an object which is the active object of a state $\sstate\in\execstates(\exec)$.
The \emph{projected execution of $\objectid$} is an execution $\exec_o$ whose trace contains only events $\event\in\trace(\exec)$ such that $\objectid(\event)=\objectid$.
\end{definition}
\changed{Projected executions are used to define the two notions of static ECF (\SECF{}):}
}

\begin{definition}
An object $\objectid$ is $\SECF[\subFS]$   
if for every complete execution $\ocode \inctx \exec$ it holds that \mbox{$\ocode \inctx \exec \vDash \DECFo[\subFS]{\objectid}$}.
$\objectid$ is $\SECF[\subC]$   
if for every complete execution $\ocode \inctx \exec$ it holds that $\ocode \inctx \exec \vDash \DECFo[\subC]{\objectid}$ .



\end{definition}

\section{Decidability}\label{Sec:Decidability}
This section discusses the decidability of verifying \ECF. 
Using Rice Theorem (see, e.g.,~\citet{ullman}), 
it is easy to show that 
verifying \SECF, namely, statically verifying whether  all executions of an object are  
\FSECF{}  or   $\CECF$, is an undecidable problem.
Interestingly, 
checking $\FSECF$ for a single execution ($\DECFo[\subFS]{\objectid}$) is also undecidable. 

\changed{
\begin{theorem}
Given an execution $\ocode \inctx \exec$, checking if it is $\DECFo[\subFS]{\objectid}$ is undecidable.
\end{theorem}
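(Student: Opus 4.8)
The plan is to reduce an undecidable problem about Turing machines (or, equivalently here, about the Turing-complete language embedded in the code context) to the complement of $\DECFo[\subFS]{\objectid}$-checking. The natural target is the halting problem: given a machine $M$ and input $w$, I want to construct, in a computable way, an execution $\ocode \inctx \exec$ together with a designated object $\objectid$ such that $\exec$ is $\DECFo[\subFS]{\objectid}$ if and only if $M$ does not halt on $w$ (or the reverse — the direction is immaterial for undecidability). Since $\PLname$ with contexts is Turing-powerful, I can encode a simulation of $M$ inside some auxiliary object $\objectid'$, and I have full freedom in choosing the code context $\ocode$ for all objects other than $\objectid$; only $\ocode(\objectid)$ is pinned down by the equivalence notion.

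**The key design idea** is to make $\objectid$ a small object whose local state (store restricted to $\objectid$) is the only thing $\FSECF$-equivalence constrains, and to arrange that the original execution $\exec$ contains a callback to $\objectid$ whose \emph{only} purpose is to be ``harmless'' precisely when $M$ diverges. Concretely: $\objectid$ calls out to $\objectid'$; $\objectid'$ simulates $M$ on $w$; if and only if the simulation halts, $\objectid'$ performs a callback into $\objectid$ that writes a value to a field of $\objectid$ which is observably different from the value reachable by any callback-free run starting from the same $\objectid$-store. If $M$ never halts, the callback never fires, the execution trivially has a callback-free witness (itself, or the projection on $\objectid$), so it is $\DECFo[\subFS]{\objectid}$; if $M$ halts, the final $\objectid$-store records an effect that no callback-free execution of $\ocode(\objectid)$ from the same initial $\objectid$-store can reproduce, so it is not $\DECFo[\subFS]{\objectid}$. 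The subtlety — and the reason $\DECFo[\subFS]{\objectid}$ is genuinely undecidable rather than merely depending on undecidable inputs — is that the quantifier in the definition ranges over \emph{all} contexts $\ocode'$ agreeing with $\ocode$ on $\objectid$ and all callback-free runs $\exec'$; so I must argue that \emph{no} such witness exists when $M$ halts. This is where I have to be careful: I should pick the ``bad'' effect on $\objectid$'s store to be unreachable by $\ocode(\objectid)$ under \emph{any} sequence of callback-free invocations and \emph{any} return values the havoc/external calls may supply — e.g., make $\ocode(\objectid)$ a method that, in a single top-level invocation with no callbacks, is deterministic and can only ever leave its field in one specific state, while the callback path sets it to a different constant.

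**The main obstacle** I anticipate is precisely this ``no witness exists'' direction: I need the code of $\objectid$ itself to be simple enough that its callback-free behavior is fully characterized, yet the execution $\exec$ must still be a legitimate well-formed complete execution in which the callback to $\objectid$ is nontrivial. A clean way to handle this is to let $\objectid$'s field start at $0$; $\ocode(\objectid)$'s body, when run callback-free, only ever sets the field to $1$; but the callback-from-$\objectid'$ path (reached via $\objectid$ calling $\objectid'$ which, upon $M$ halting, calls $\objectid$ again through a \emph{different} entry behavior — here I exploit that a contract has a single method, so I instead have the callback execute $\ocode(\objectid)$ again, meaning I need the second, nested invocation of $\objectid$'s own code to be distinguishable, which I can arrange by having the field act as a counter: callback-free top-level runs leave it at $1$, but a run with a callback leaves it at $2$). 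Then $\trg(\exec)$ has $\objectid$'s field $=2$, which is simply not in the set of $\objectid$-stores reachable by any callback-free run from field $=0$, regardless of context or return values — giving the ``not $\DECFo[\subFS]{\objectid}$'' side unconditionally when $M$ halts, and triviality when it does not.

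**In summary, the steps are:** (1) fix a standard Turing-complete encoding into $\PLname$ and recall the halting problem is undecidable; (2) define the three-object program ($\objectid$, a simulator $\objectid'$ for $(M,w)$, and the main caller), with $\objectid$'s field used as the observable marker; (3) exhibit the concrete execution $\exec$ — main calls $\objectid$, $\objectid$ calls $\objectid'$, $\objectid'$ simulates, and on halting calls back into $\objectid$; (4) show $M$ diverges $\Rightarrow$ $\exec$ is callback-free-equivalent to itself restricted/adjusted, hence $\DECFo[\subFS]{\objectid}$; (5) show $M$ halts $\Rightarrow$ the final $\objectid$-store is unreachable by any callback-free run, hence not $\DECFo[\subFS]{\objectid}$; (6) conclude the reduction is computable, so $\DECFo[\subFS]{\objectid}$-checking is undecidable. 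The only place demanding real care is step (5), the genuine non-existence of a witness across all contexts and return-value choices.
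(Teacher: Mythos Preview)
Your high-level strategy --- reduce from the halting problem by encoding a Turing-machine simulation into the system --- matches the paper's. But there is a genuine gap in \emph{where} you place the simulation, and it breaks the reduction.

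You propose that the execution $\exec$ itself contains the simulation: $\objectid$ calls $\objectid'$, $\objectid'$ simulates $M$ on $w$, and only if $M$ halts does $\objectid'$ call back into $\objectid$. The problem is that $\exec$ must be a \emph{finite} well-formed complete execution --- it is the input handed to the decision procedure, and the reduction must construct it computably from $(M,w)$. If $M$ diverges, your $\exec$ never reaches a quiescent state and is not a complete execution at all; you cannot produce $\exec$ without already knowing whether $M$ halts. Step~(4) of your plan (``$M$ diverges $\Rightarrow$ $\exec$ is trivially $\DECFo[\subFS]{\objectid}$'') is therefore vacuous: there is no $\exec$ in that case.

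The paper avoids this by pushing the simulation into the \emph{witness}, not into the input. The object $A$ branches on its argument: with $\inarg\neq 0$ it calls an external object that always calls back, driving the field $X$ to a fixed final value $2$; with $\inarg=0$ it instead runs $M$ and only upon termination sets $X:=2$. The input execution $\exec$ uses the $\inarg\neq 0$ branch and is finite and computable from $(M,w)$ regardless of $M$'s behaviour. One then argues that the only callback-free run reaching the same final $A$-store is the $\inarg=0$ branch, and that run exists as a finite execution precisely when $M$ halts. Thus $\exec \vDash \DECFo[\subFS]{A}$ iff $M$ halts, and the undecidability is carried by the existential quantifier over callback-free witnesses $\ocode'\inctx\exec'$, exactly where the definition places the unbounded search.

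Ironically, the step you flagged as delicate (your step~(5), non-existence of a witness across all contexts) is the routine part once the construction is set up correctly; the step you treated as routine (step~(3), exhibiting $\exec$) is where the real care is needed.
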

\begin{proof}
We show a reduction from the halting problem. 
$\PLname$ is Turing-complete, thus we encode the operation of a Turing machine $M$ as a command $c\in\Cmd$. 
In \Cref{fig:decfo-decidability-reduction}, we present the code of a contract \texttt{A}.
The store of \texttt{A} has a single field $X$ initialized as $0$, and a single argument denoted $arg$. 
The field $X$ is unchanged by $M$. 
 We also write the code of a contract \texttt{B}.
The method of \texttt{A} is separated to 3 branches. If $X\neq 0,1$ then the method returns without any effect. 
If $X=1$, then $X$ is updated to $2$ and the method returns.
If $X=0$, then $X$ is updated to $1$ and if the argument is equal $0$, we execute the TM $M$ and update $X$ to 2 if and when $M$ finished running.
If the argument is not equal $0$, we call contract \texttt{B}. 
If right after \texttt{B}'s execution $X$ is not equal $2$, then$X$ is updated to $3$.
The code of \texttt{B} is calling to \texttt{A}. Therefore, when \texttt{A} calls \texttt{B}, \texttt{B} always creates a callback to \texttt{A}.
\begin{figure}  
\centering
  \begin{subfigure}{2.5in}
    \centering
    \begin{footnotesize}
    \begin{alltt}
\begin{tabbing}
XX\=XX\=XX\=XX\=XX\=XX\=XX\=XX\=XX\=XX\=XX\=\kill
Object A \+\\
  int X \\
  M's fields \\
  enter (arg) \\
    if X == 0 then \+\\
      X := 1 \\
      if arg != 0 then \+\\
      	B() \\
      	if X != 2 then X := 3 \- \\
      else \+\\
        run M() \\
        X := 2 \-\-\\
    else if X == 1 then \+\\
      X := 2 \-\\
  return \-\\
\end{tabbing}
    \end{alltt}
    \end{footnotesize}
      \end{subfigure}
  \quad
  \begin{subfigure}{2.5in}
    \centering
      \begin{footnotesize}
    \begin{alltt}
    \begin{tabbing}
    XX\=XX\=XX\=XX\=XX\=XX\=XX\=XX\=XX\=XX\=XX\=\kill
Object B \+\\
  enter \+\\
    A(0) \-\\
  return \\
\end{tabbing} 
     \end{alltt}
    \end{footnotesize}
    \usetikzlibrary{arrows}
    \begin{tikzpicture}[line cap=round,line join=round,>=latex,x=0.5cm,y=0.5cm]
    \begin{scriptsize}
    \draw [->,line width=0.5pt] (5.2,2.1) -- (6.7,2.1); 
    \draw [->,line width=0.5pt] (6.7,2.1) -- (8.5,2.1); 
    \draw [->,line width=0.5pt] (8.5,2.1) -- (9.6,3.02); 
    \draw [->,line width=0.5pt] (8.5,2.1) -- (9.6,1.4); 
    \draw [->,line width=0.5pt] (9.6,1.4) -- (10.2,0.85); 
    \draw [->,line width=0.5pt] (10.2,0.85) -- (11.5,0.85); 
    \draw [->,line width=0.5pt] (11.5,0.85) -- (12.8,0.85); 
    \draw [->,line width=0.5pt] (12.8,0.85) -- (13.5,1.5); 
    \draw [->,line width=0.5pt] (13.5,1.5) -- (13.5,0.75); 
    \draw [->,line width=0.5pt] (13.5,1.5) -- (14.6,1.5); 
    \draw [->,line width=0.5pt] (14.6,1.5) -- (16.3,1.5); 
    \draw [->,line width=0.5pt] (16.3,1.5) -- (17,1.5); 
    \draw [->,line width=0.5pt] (9.6,3.) -- (10.6,3.); 
    \draw [->,line width=0.5pt] (10.6,3.) -- (12.,3.); 
    \draw [->,line width=0.5pt] (12.,3.) -- (12.8,3.); 
    \draw (3.6,2.2) node[anchor=north west] {$q_{X=0}$};
    \draw (6.0,2.0) node[anchor=north west] {$e_{A_{X=0}}$};
    \draw (6.6,2.8) node[anchor=north west] {$X\!:=\!1$};
    \draw (5.1,2.8) node[anchor=north west] {$A(a)$};
    \draw (8.,2.6) node[anchor=north west, rotate=43] {$a\!=\!0$};
    \draw (8.5,2.2) node[anchor=north west,  rotate=-41  ] {$a\!\neq\! 0$};
    \draw (9.3,3.8) node[anchor=north west] {$M()$};
    \draw (10.5,3.8) node[anchor=north west] {$X\!:=\!2$};
    \draw (12.5,2.9) node[anchor=north west] {$r_A$};
    \draw (9.6,2.) node[anchor=north west] {$B()$};
    \draw (9.8,0.85) node[anchor=north west] {$e_B$};
    \draw (10.2,1.65) node[anchor=north west] {$A(0)$};
    \draw (10.85,0.9) node[anchor=north west] {$e^{cb}_{A_{X=1}}$};
    \draw (11.3,1.6) node[anchor=north west] {$X\!:=\!2$};
    \draw (12.3,0.85) node[anchor=north west] {$r_B$};
    \draw (13.5,0.85) node[anchor=north west] {$r_A$};
    \draw (13.3,2.3) node[anchor=north west] {$X\!\neq\! 2$};
    \draw [->,line width=0.5pt] (12.8,3) to[out=50,in=60] (5.2,3.2);
    \draw (14.5,2.3) node[anchor=north west] {$X:=3$};
    \draw (16.2,1.3) node[anchor=north west] {$q_{X=3}$};
    \draw [->,line width=0.5pt] (13.5,0.75) to[out=-85,in=-110] (5.2,3.2);
    \draw (3.6,3.3) node[anchor=north west] {$q_{X=2}$};

    \draw [fill=black] (5.2,2.1) circle (1.0pt); %
    \draw [fill=black] (6.7,2.1) circle (1.0pt); 
    \draw [fill=black] (8.5,2.1 ) circle (1.0pt); %
    \draw [fill=black] (9.62,3.02) circle (1.0pt);
    \draw [fill=black] (9.62,1.4) circle (1.0pt);
    \draw [fill=black] (10.5,3.) circle (1.0pt);
    \draw [fill=black] (10.2,0.85) circle (1.0pt);
    \draw [fill=black] (11.5,0.85) circle (1.0pt);
    \draw [fill=black] (12.8,0.85) circle (1.0pt);
    \draw [fill=black] (13.5,1.5) circle (1.0pt);
    \draw [fill=black] (13.5,0.75) circle (1.0pt);
    \draw [fill=black] (12.,3.) circle (1.0pt);
    \draw [fill=black] (12.8,3.) circle (1.0pt);
    \draw [fill=black] (14.6,1.5) circle (1.0pt);
    \draw [fill=black] (16.3,1.5) circle (1.0pt);
    \draw [fill=black] (17,1.5) circle (1.0pt);
    \draw [fill=black] (5.2,3.2) circle (1.0pt); 
    \end{scriptsize}
    \end{tikzpicture}
\end{subfigure}

      \caption{The codes of two objects \texttt{A} and \texttt{B} showing that $\DECFo[\subFS]{\objectid}$ is undecidable, and a diagram showing the possible flows of the system. 
      Nodes marked $q$ are quiescent states, with the current value of \texttt{A}'s $X$ variable in subscript.
      Nodes starting with $e_o$ indicate entry to an object $o$, and $r_o$ a return from an object $o$.
      The notation $e^{cb}_o$ indicates the call is a callback. 
      Entry nodes are marked with the value of $X$ for better readability.
      Syntactic references to the code (assignments, calls, conditions) appear on the edges.
      Missing call edges from quiescent states indicate that the resulting quiescent state is the same, hence, $q_{X=2}$ and $q_{X=3}$ are sinks.
      }\label{fig:decfo-decidability-reduction}
      \end{figure}
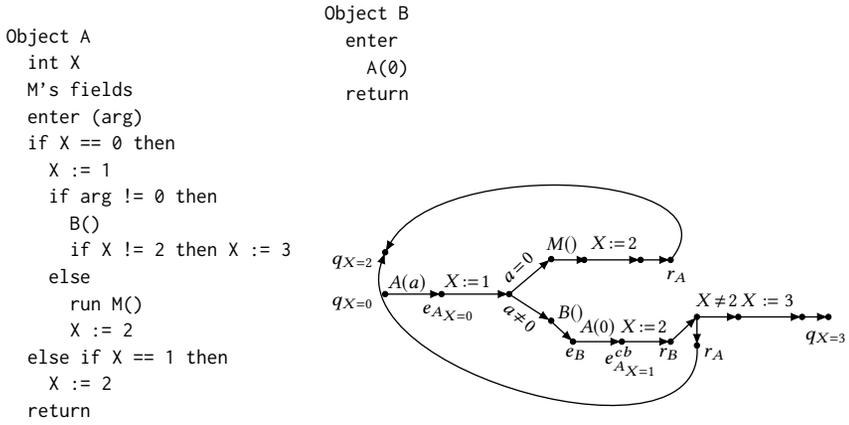
We consider the execution $\exec$ starting from the initial state in which \texttt{A}'s $X$ field is equal $0$, with $arg\neq 0$.
This execution calls the object \texttt{B}, which calls back to \texttt{A}, and in the callback the value of $X$ is set to $2$.
We show that $\exec$ is $\DECFo[\subFS]{\objectid}$ if and only if $M()$ halts.
For the `if' direction, we note that if $M()$ halts, then the execution of $A(0)$ from the state where $X=0$ leads to $X$ being set to $2$ right after $M$'s run finished, and that execution has no callbacks, as required. 
For the `only if' direction, we note that the only callback-free execution that starts from $X=0$ and ends with $X=2$ is the call \texttt{A}(0), which is the execution that executes $M()$, and it is a legal execution only if $M()$ halts. 
The reason that this must be the only execution, is that for any choice of input argument $arg\neq 0$ and context $\ocode$ that maps a different code for \texttt{B}, does not allow reaching the required final state $X=2$ unless callbacks are used. 
If $\ocode(\texttt{B})$ still calls back to \texttt{A} then clearly the resulting execution is not an eligible ECF witness.
If $\ocode(\texttt{B})$ does not call back to \texttt{A}, then $X$ is updated to $3$. 
Therefore, when $X=3$, any subsequent call to \texttt{A} cannot modify $X$, and in particular $X=2$ is not reachable.
\end{proof}

}
\changed{
\begin{figure}
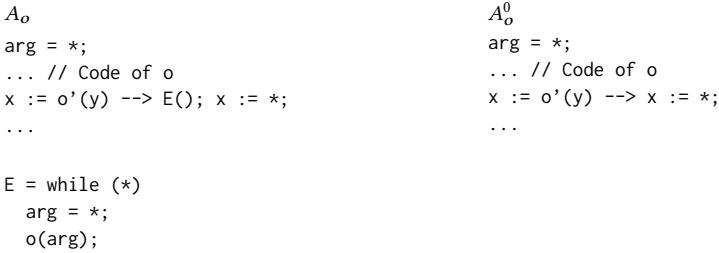

\begin{subfigure}[t]{2.5in}
\begin{footnotesize}
$A_\objectid$
\begin{alltt}
\begin{tabbing}
XX\=XX\=XX\=XX\=XX\=XX\=XX\=XX\=XX\=XX\=XX\=\kill
  arg = *; \\
  ... // Code of o \\
  x := o'(y) --> E(); x := *; \\
  ... \\
\end{tabbing}
\end{alltt}
\begin{alltt}
\begin{tabbing}
XX\=XX\=XX\=XX\=XX\=XX\=XX\=XX\=XX\=XX\=XX\=\kill
E = while (*) \+\\
    arg = *; \\
    o(arg); \-\\
\end{tabbing} 
\end{alltt}
\end{footnotesize}
\end{subfigure}
\begin{subfigure}[t]{2.5in}    
\begin{footnotesize}
$A^0_\objectid$
\begin{alltt}
\begin{tabbing}
XX\=XX\=XX\=XX\=XX\=XX\=XX\=XX\=XX\=XX\=XX\=\kill
  arg = *; \\
  ... // Code of o \\
  x := o'(y) --> x := *; \\
  ... 
\end{tabbing}
\end{alltt}
\end{footnotesize}
\end{subfigure}
\caption{The construction of automatons $A_\objectid$ and $A^0_\objectid$. 
The notation \texttt{-->} indicates that a command in the code of $\objectid$ is replaced with another.}\label{Fig:PDAConstruction}
\end{figure}
}
In contrast, checking $\CECF$ for a single execution (\DECF[\subC]) is obviously decidable, as we can enumerate all of the permutations of a particular input trace.

Thus, we focus on verifying \SECF, namely, statically verifying whether  all executions of an object are 
\FSECF{}  or  $\CECF$, 
where the domains of the object variables are restricted to finite sets.
Hence, such objects can be modeled with a pushdown-automaton ($\PDA$). 
\changed{
  Such a PDA for an object $\objectid$ is able to simulate any modular well-formed execution $\ocode \in \exec$ where the active object of all states in $\exec$ is $\objectid$. 
  We denote this construction $A_\objectid$. Its code is shown in the left side of \Cref{Fig:PDAConstruction}. 
  It executes the code of $\objectid$ with a non-deterministically chosen argument, and replaces every call to an external object with a sequence of arbitrary calls to $\objectid$ (the code in \texttt{E}), and a non-deterministic choice of the return value of the call.
}
We begin with a rather simple lemma that shows $\FSECF$ of objects is indeed decidable in this model. 
We assume that 
 variables may take values coming from a finite domain.


\begin{lemma}
\label{prop:FSECFDecidable}
Let $\objectid$ be an object, assuming a finite domain for variables. 
Then there is an algorithm that decides if $\objectid$ is $\SECF[\subFS]$.
\end{lemma}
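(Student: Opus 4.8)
The plan is to reduce $\SECF[\subFS]$ checking to a decidable problem about pushdown automata. The key observation is that, by \Cref{Prop:ModularToWellFormed}, whether an object $\objectid$ is $\SECF[\subFS]$ depends only on the modular well-formed executions all of whose states have $\objectid$ as their active object: a general complete execution $\ocode \inctx \exec$ is $\DECFo[\subFS]{\objectid}$ iff its projection $\exec|_{\objectid}$ admits a callback-free, final-state-equivalent witness that can in turn be lifted back (via \Cref{Prop:ModularToWellFormed}) to a genuine well-formed execution with the same code for $\objectid$. So it suffices to decide, over all modular well-formed executions of $\objectid$ alone, whether each one has a callback-free modular well-formed execution with the same first and last store for $\objectid$.

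First I would build the PDA $A_\objectid$ described in \Cref{Fig:PDAConstruction}: it runs the (finite-state) code of $\objectid$ with a nondeterministic argument, uses its stack to handle recursion through callbacks, and replaces every external call $\SCallC{x}{\objectid'}{e}$ by the loop $E$ that performs an arbitrary finite sequence of re-entrant calls to $\objectid$ followed by a nondeterministic assignment to $x$ (the havoc). Because the object's variable domain is finite and the only unbounded resource is the call stack, $A_\objectid$ is a genuine pushdown automaton, and its accepting runs (returning to a quiescent configuration) correspond exactly to the complete modular well-formed executions of $\objectid$. I would annotate $A_\objectid$ so that an accepting run records the pair $(\SelStore(\src)(\objectid), \SelStore(\trg)(\objectid))$ of initial/final object states; since this is a finite set, I can regard $A_\objectid$ as a finite union of PDAs $A_\objectid^{s_0,s}$, one per such pair, each recognizing the executions that start at object-state $s_0$ and end at $s$. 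In parallel I would build $A^0_\objectid$ (right side of \Cref{Fig:PDAConstruction}), the same construction but with $E$ deleted — i.e., external calls are havoced with no re-entrant callbacks — which recognizes exactly the callback-free modular well-formed executions of $\objectid$, again split into $A^{0,\,s_0,s}_\objectid$ by initial/final object-state pair.

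The object $\objectid$ is $\SECF[\subFS]$ iff for every $(s_0,s)$ reachable by $A_\objectid$ (equivalently: $L(A_\objectid^{s_0,s}) \neq \emptyset$) we also have $L(A^{0,\,s_0,s}_\objectid) \neq \emptyset$. This is because $\DECFo[\subFS]{\objectid}$ for an arbitrary execution is, after projection onto $\objectid$ and using \Cref{Prop:ModularToWellFormed} in both directions, precisely the statement that the $(s_0,s)$ realized by the original execution is also realized by some callback-free (modular well-formed) execution of $\objectid$. Now both sides of each implication are nonemptiness questions for (context-free languages recognized by) pushdown automata with a finite control, so each is decidable; iterating over the finitely many pairs $(s_0,s)$ gives the decision procedure. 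I would spell out the correctness argument via a cycle of reductions: a witness for $\DECFo[\subFS]{\objectid}$ restricted to $\objectid$'s transitions is an accepting run of some $A^{0,\,s_0,s}_\objectid$; conversely an accepting run of $A^{0,\,s_0,s}_\objectid$, lifted by \Cref{Prop:ModularToWellFormed} to a real context $\ocode'$ with $\ocode'(\objectid)=\ocode(\objectid)$, yields a genuine well-formed callback-free witness with the right endpoints.

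The main obstacle I anticipate is not the automata-theoretic machinery — PDA nonemptiness is textbook — but justifying that $A_\objectid$ (resp. $A^0_\objectid$) faithfully captures \emph{all and only} the (callback-free) modular well-formed executions of $\objectid$, in the sense needed. In particular one must check: that replacing an external call by the $E$-loop really subsumes every possible interleaving of callbacks that an adversarial context could induce, so that $L(A_\objectid)$ is not too small; that conversely $A_\objectid$ produces nothing spurious, so it is not too large; and that the havoc of return values is exactly the modular-well-formedness condition from \Cref{Se:EqExec}. Getting these simulation lemmas stated cleanly — together with the observation that, because the code of $\objectid$ is fixed and finite, the finite control of the PDA suffices and only the stack need be unbounded — is where the real work lies; once they are in place, the decidability conclusion is immediate.
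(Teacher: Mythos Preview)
Your proposal is correct and follows essentially the same approach as the paper: build $A_\objectid$ and $A^0_\objectid$ as in \Cref{Fig:PDAConstruction}, then iterate over the finitely many pairs $(s_0,s)$ of object states and check that every pair realizable in $A_\objectid$ is also realizable in $A^0_\objectid$. The only cosmetic differences are that the paper phrases the per-pair check as reachability of a regular set of configurations (citing Bouajjani et al.) rather than language nonemptiness, and explicitly notes that $A^0_\objectid$ is in fact a finite-state machine since without the $E$-loop there is no recursion; neither difference affects the argument.
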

\begin{proof}
\changed{	
We consider the pushdown automaton $A_\objectid$ and an automaton that allows only callback-free behaviors for $\objectid$, $A^0_\objectid$. 
Both are shown in \Cref{Fig:PDAConstruction}. 
Each execution of $A_\objectid$ consists of first choosing non-deterministically an argument $arg$ with which the object $\objectid$ is called. 
The automaton simulates the steps of $\objectid$ with the only difference being in invocations of other objects, e.g. $\objectid'$.
This command is replaced with the code of the ``environment'' object $E$, that performs a sequence of calls to $\objectid$ with an arbitrary argument chosen in each iteration. 
The sequence length is arbitrary and may be also $0$, i.e., no callbacks.
After running the callbacks, the automaton non-deterministically chooses a return value, stored in the original variable intended to store the return value of $\objectid'$, which in the figure is the variable $x$.

The automaton that allows only callback-free behaviors for $\objectid$ is simpler: 
 the only difference is that it does not run the code of $E$. 
In fact, $A^0_\objectid$ is a finite state machine.

We now utilize the result by~\cite{bouajjani1997reachability} for reachability of a regular set of configurations of a pushdown automaton.
Here, the set of configurations is $\State$, and the subset of configurations which we are interested in is the one with with an empty stack, i.e. $\{\ST{}\} \times \Sigma_\objectid$ where $\Sigma_\objectid$ is the set of $\objectid$'s possible states, which is finite and thus regular. 

We consider an arbitrary pair of states $\sigma_1,\sigma_2\in\Sigma_\objectid$. 
We first check if there is an execution of $A_\objectid$ starting from $\sigma_1$ and ending in $\sigma_2$. 
This can be done by checking if $\sigma_1$ is reachable from the initial state and as well as if $\sigma_2$ is reachable from $\sigma_1$.
As the path that shows reachability in $A_\objectid$ may not be callback-free, we check for reachability also in the finite state machine $A^0_\objectid$, in the same manner. 
Reachability in finite-state machines is known to be decidable. 
If there is no execution in $A^0_\objectid$ from $\sigma_1$ to $\sigma_2$, then the object $\objectid$ is not $\SECF[\subFS]$.
After checking for all pairs in $\Sigma_\objectid \times \Sigma_\objectid$ 
 that for each pair of states $(\sigma_1,\sigma_2)$ there is an execution of $A_\objectid$ starting in $\sigma_1$ and ending in $\sigma_2$ if and only if there is such an execution in $A^0_\objectid$, we verified that $\objectid$ is $\SECF[\subFS]$. 
The set of all pairs of states is finite, thus it describes a decision procedure for verifying $\SECF[\subFS]$ for an object $\objectid$.

}

\end{proof}

Showing the decidability of $\SECF[\subC]$ of objects is not that easy, because it requires
reasoning on permutations of events, which is not a regular property, even in the case of finite-state machines. 

\changed{
Our strategy for proving the decidability of $\SECF[\subC]$ will be as follows. 
As before, we consider executions of $A_\objectid$, which identify with the set of possible projected executions on $\objectid$.
We will show that it is enough to check subexecutions of $A_\objectid$ which involve at most two elements of the stack. 
Namely, it is the same as checking the set of executions of $A_\objectid$ with a limit of two to the depth of the stack, 
 where the initial state of the execution  may include both the initial states $\Sigma_\objectid$ of an execution of $A_\objectid$, and in addition any state $\sstate$ where a callback may be called.
 Explicitly, $\sstate$ is a state whose primitive command is a call. 
We describe such states using the set $\mathcal{I}_0=\{\sstate \mid \exists x,\oid',e. \SelCmd(\topstk(\sstate))=\SCallC{x}{\oid'}{e} \}$. 
  It is regular as we only consider the top element of the stack.
Finding for the set of states $\mathcal{I}_0$ 
 its subset of reachable states in $A_\objectid$ 
 can therefore be done using~\cite{bouajjani1997reachability}. 
 
We denote by $A^2_\objectid$ the automaton that is received by limiting the stack depth of $A_\objectid$ to two, and having initial states $\mathcal{I}_0\cup\Sigma_\objectid$. 
For $A^2_\objectid$, deciding if all its executions are $\DECFo[\subC]{\objectid}$ is decidable. 
We do so by constructing a monitor that receives as input execution traces of $A^2_\objectid$ and outputs an error if the executions are not $\DECFo[\subC]{\objectid}$. 
Then we will show that $M$ is a finite state machine and thus checking reachability of its error states is decidable.
Such a construction is possible because even though conflict-equivalence demands finding a permutation of a trace, which is not a regular property, the actual choice of permutations that have to be checked is much more limited.
\begin{lemma}
Let $\Pi$ denote the set of all executions of $A^2_\objectid$. 
There is an automaton $M$ that for each $\exec\in\Pi$, ends in an accepting state if $\exec \vDash \DECFo[\subC]{\objectid}$, and in a rejecting state otherwise.
\end{lemma}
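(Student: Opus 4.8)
The plan is to construct $M$ explicitly as a finite-state monitor over the trace of $\exec$, and then to prove that $M$ reaches an accepting state on exactly those $\exec\in\Pi$ with $\exec\vDash\DECFo[\subC]{\objectid}$. The starting observation is structural: since $A^2_\objectid$ bounds the stack depth by $2$ and all its transitions have active object $\objectid$, every $\exec\in\Pi$ decomposes into a single top-level invocation $I_0$ of $\objectid$ (the depth-$1$ frame) with callback invocations $C_1,\dots,C_n$ of $\objectid$ nested inside it (the depth-$2$ frames), listed in order of occurrence, each $C_j$ itself callback-free. By counting $\SEnterV$ and $\SReturnV$ events, $M$ can track the current depth (always in $\{0,1,2\}$), hence tell for each event whether it belongs to $I_0$ or to the callback currently on the stack; for executions that begin at a call-site state in $\mathcal{I}_0$ rather than a quiescent one, $M$ simply starts with empty read/write summaries, which is sound because $\DECFo[\subC]{\objectid}$ concerns only the visible trace.

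I would first reduce $\DECFo[\subC]{\objectid}$ to a two-sided scheduling question. A callback-free run that is object-conflict-equivalent to $\exec$ for $\objectid$ must realize $I_0$ and each $C_j$ as a separate top-level invocation of $\objectid$; using the modular-well-formed-run machinery (havoc transitions for the omitted external calls, as in the proof of \Cref{Thm:CECFImpliesFSECF}) each such invocation replays exactly the primitive-command sequence it performs inside $\exec$ — branch faithfulness holds because every havoc-able return can be re-picked to match $\exec$ and every field read sees the same value under a conflict-preserving permutation — so it appears as a contiguous, program-order-preserving block, and the only freedom is whether each $C_j$ is placed entirely before the $I_0$-block or entirely after it. Conflict-preservation then imposes: (a) a $C_j$ placed before $I_0$ may contain no event conflicting with an $I_0$-event that precedes $C_j$ in $\exec$; (b) a $C_j$ placed after $I_0$ may contain no event conflicting with an $I_0$-event that follows $C_j$ in $\exec$; (c) if $i<j$ and $C_i,C_j$ contain conflicting events, then $C_i$ cannot be placed after $I_0$ while $C_j$ is placed before it. I will call $C_j$ \emph{forced-after} if, at the point it returns, it conflicts with an already-seen $I_0$-event or with a previously discovered forced-after callback; an easy induction on the index shows a forced-after callback must be scheduled after $I_0$ in \emph{every} conflict-preserving callback-free reordering.

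The monitor $M$ would maintain: the running read/write sets $R_{I_0},W_{I_0}$ of the depth-$1$ events seen so far; the \emph{aggregated} read/write sets $R_{FA},W_{FA}$, the unions of the read/write sets of all forced-after callbacks found so far; the read/write sets of the callback currently on the stack (reset on each $1\!\to\!2$ transition); and a Boolean $\mathit{dead}$ flag. On a depth-$1$ field access $M$ updates $R_{I_0},W_{I_0}$ and sets $\mathit{dead}$ if the access conflicts with $(R_{FA},W_{FA})$; on a depth-$2$ field access $M$ updates the current callback's sets; on a $2\!\to\!1$ return $M$ compares the just-finished callback's sets with $(R_{I_0},W_{I_0})$ and with $(R_{FA},W_{FA})$ and, if either comparison exhibits a conflict, marks the callback forced-after and merges its sets into $R_{FA},W_{FA}$, otherwise discards it. $M$ accepts iff $\mathit{dead}$ is still false at the end of $\exec$. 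Since $\objectid$ has a fixed finite set of fields, every stored set ranges over a finite universe and the remaining state (depth, $\mathit{dead}$) is bounded, so $M$ is a finite automaton.

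It then remains to prove the two directions. For \textbf{soundness}: if $M$ sets $\mathit{dead}$ while reading a depth-$1$ event $f$, then $f$ conflicts with some forced-after $C_j$ all of whose events precede $f$ in $\exec$; in any conflict-preserving callback-free reordering $C_j$ lies entirely after the $I_0$-block, hence after $f$, contradicting that $C_j$'s conflicting event must remain before $f$; so no such reordering exists and $\exec\not\vDash\DECFo[\subC]{\objectid}$. For \textbf{completeness}: if $\mathit{dead}$ is never set, the schedule running all discarded (``free'') callbacks first in index order, then $I_0$, then all forced-after callbacks in index order, satisfies (a) by the definition of ``free'', (b) because a forced-after callback colliding with a later $I_0$-event would have set $\mathit{dead}$, and (c) because a $C_j$ conflicting with an earlier forced-after $C_i$ would itself have been marked forced-after; by the reduction this schedule is a modular well-formed callback-free run object-conflict-equivalent to $\exec$ for $\objectid$ (generalizing \Cref{Prop:ModularToWellFormed} to runs), witnessing $\exec\vDash\DECFo[\subC]{\objectid}$. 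The main obstacle is getting the ``forced-after'' bookkeeping exactly right: proving that the single aggregated pair $(R_{FA},W_{FA})$ retains all the information needed — i.e.\ that collapsing all forced-after callbacks into one read/write summary is sound — and that greedily placing every non-forced callback before $I_0$ never rules out an otherwise-feasible schedule.
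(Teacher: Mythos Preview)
Your construction is essentially the paper's: the paper's monitor keeps a \texttt{prefix} pair (your $(R_{I_0},W_{I_0})$), a \texttt{delayedCbs} pair (your $(R_{FA},W_{FA})$), marks a callback as delayed exactly when it conflicts with the current prefix or with the current delayed set (your ``forced-after'' rule), and asserts that each depth-$1$ segment commutes with \texttt{delayedCbs} (your per-access \textit{dead} check); the witness it extracts on acceptance is also the same --- non-delayed callbacks before $I_0$, delayed ones after, each group in original order. The only cosmetic difference is that the paper tests commutation at the segment boundaries ($\SEnterV$/$\SReturnV$) while you test at each depth-$1$ field access, which is equivalent since a segment's read/write pair conflicts with $(R_{FA},W_{FA})$ iff one of its individual accesses does; your correctness argument (the induction showing every forced-after callback must sit after $I_0$ in \emph{any} witness, and the case analysis for (a)--(c)) is in fact more explicit than the paper's, which leaves those points largely to the reader.
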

\begin{figure}
\begin{footnotesize}
{\begin{tt}
\begin{tabbing}
XX\=XX\=XX\=XX\=XX\=XX\=XX\=XX\=XX\=XX\=XX\=\kill
$M(\exec)$ = \+\\
  d = 0, prefix = $(\emptyset,\emptyset)$, delayedCbs = $(\emptyset,\emptyset)$ \\
  for ($\event \in \trace(\exec)$) \+\\
  if ($\cmd(\event)=\SAssign{x}{F}$) \+ \\
    R := R$\cup \{F\}$ \- \\
  if ($\cmd(\event)=\SAssign{F}{x}$) \+ \\
    W := W$\cup \{F\}$ \- \\
  if ($\cmd(\event)=\SEnterV$ \&\& d=0) // Execution starts \+ \\
    R, W := $(\emptyset, \emptyset)$  \\
    d := d+1 \- \\  
  if ($\cmd(\event)=\SReturnV$ \&\& d=1) // Execution ends \+ \\
    assert((R,W) commutes with delayedCbs) \\
    R, W := $(\emptyset, \emptyset)$ \\
    d := d-1 \\
    return \- \\
  if ($\cmd(\event)=\SEnterV$ \&\& d=1) // Callback starts \+ \\
    assert((R,W) commutes with delayedCbs) \\
    prefix := (R(prefix) $\cup$ R, W(prefix) $\cup$ W) \\
    R, W := $(\emptyset, \emptyset)$ \\
    d := d+1 \-\\
  if ($\cmd(\event)=\SReturnV$ \&\& d=2) // Callback ends \+ \\
    if (!((R,W) commutes with prefix) || !((R,W) commutes with delayedCbs)) \+ \\
      delayedCbs := (R(delayedCbs) $\cup$ R, W(delayedCbs) $\cup$ W) \- \\
    R, W := $(\emptyset, \emptyset)$ \\
    d := d-1 
\end{tabbing}
\end{tt}
}
\end{footnotesize}
\caption{The code of $M$ which accepts an execution $\exec$ of $A^2_\objectid$ and verifies if it is $\DECFo[\subC]{\objectid}$}\label{fig:DECFForLevel2DecisionProcedure}
\end{figure}
\begin{proof}
We write the code of the automaton $M$ in \Cref{fig:DECFForLevel2DecisionProcedure}.
The automaton loops on each event in the trace of the execution. 
The automaton states consist of a depth variable {\tt d}, and two pairs of read and write sets: {\tt prefix} and {\tt delayedCbs}.
The sets {\tt R} and {\tt W} are updated in each command that reads from or writes to the object store.
 They are reset when a callback starts or ends.
The {\tt prefix} pair retains  the accumulated reads and writes in the invocations at depth $1$, 
 i.e. the first invocation of $\objectid$ and which we refer to sometimes as the \emph{main} invocation.
The {\tt delayedCbs} pair retains the accumulated reads and writes by callbacks that we choose to execute after the main invocation ends.

Intuitively, the monitor checks each time a pair of {\tt (R,W)} is finalized and before it is reset, if it satisfies conditions that will allow to find a conflict-equivalent execution.
For a callback execution, we check if the pair commutes with the prefix, i.e. all portions of the main invocation already executed.
 If it does not commute with the prefix, we mark it as a delayed callback, and update the {\tt delayedCbs} read and write sets.
  As we do not know whether delayed callbacks actually commute with the rest of the invocation, 
   and with future callbacks that may be executed before the main invocation, 
   we retain the conflict information in {\tt delayedCbs} to be checked later against any finalized {\tt (R,W)} sets that is belonging either to the main invocation, or to a callback that is chosen to be executed before the main invocation.
  Therefore, even in the case when a callback commutes with the prefix, but in which it does not commute with the delayed callbacks (which are `jumping over' the callback under consideration in order of execution),
   we have to try to execute it as a delayed callback as well. 
   Otherwise, the execution is surely not $\DECFo[\subC]{\objectid}$, and we move to an error state.
For a portion of the main invocation (the first or last one, or between callbacks), we check if it does not conflict with any of the relevant delayed callbacks. 
The {\tt delayedCbs} pair is always updated correctly with the currently delayed callbacks' read and write locations. 

It can be seen from the construction that if $M$ accepts, then we can build from $M$'s execution a witness for $\exec$ being $\DECFo[\subC]{\objectid}$ by ordering all non-delayed callbacks before the main invocation and the delayed callbacks after the main invocation, both of those in the order in which they appear in the original execution.
E.g., if callbacks $i_1$ and $i_2$ are both delayed callbacks, then if $i_1$ is executed before $i_2$ in $\exec$ then the ordering between them in the witness is not changed. The same argument applies for non-delayed callbacks.

In addition, if $\objectid$ is $\DECFo[\subC]{\objectid}$ then $M$ must accept.
 This is because any witness for $\exec$ being $\DECFo[\subC]{\objectid}$ is conflict-equivalent to the witness implicitly produced by $M$.
The reason for that is that in the witness, we can also identify a set of callbacks executed before the main invocation and a set of callbacks executed after it.
The internal ordering of callbacks that are executed before the main invocation does not matter as long as it does not reorder conflicts, and thus may in fact be the same as in the original execution, ditto for the other set of callbacks.
\end{proof}

We note that $M$ has a finite state: {\tt d} in $\{0,1,2\}$, and {\tt prefix} and {\tt delayedCbs} in $2^F \times 2^F$ where $F$ is the set of $\objectid$'s fields.
Therefore, $M$ is a finite state machine, and thus reachability is decidable. In particular, we can check if the error states are reachable. 
However, $M$ is a machine that works on any execution of any object. 
It is not difficult, though, to build $M$ as a monitor of $A^2_\objectid$. We denote this construction $M_{A^2_\objectid}$.
The result is still a finite state machine with decidable reachability, since $A^2_\objectid$, which has a bounded stack depth of 2, is also a finite state machine.

\begin{corollary}\label{corol:ConflictECFA2Decidable}
It is decidable to check if all executions of $A^2_\objectid$ are $\DECFo[\subC]{\objectid}$
\end{corollary}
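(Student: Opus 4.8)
The plan is to derive the corollary directly from the preceding lemma, by turning the monitor $M$ into a monitor of $A^2_\objectid$ and reducing the question to a graph-reachability query on a finite automaton.

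First I would observe that $A^2_\objectid$ is itself a finite-state machine: its stack is bounded to depth two, every frame ranges over $\CID \times \Cmd \times \Reg$ with the command component drawn from the finitely many syntactic continuations of $\objectid$'s code and of the environment object $E$, and the variable domain is finite by assumption; hence $A^2_\objectid$ has only finitely many configurations. Consequently it emits events from a finite alphabet along every execution, and the set of traces $\{\trace(\exec) \mid \exec \in \Pi\}$ it generates is recognized by $A^2_\objectid$ viewed as a finite automaton over events.

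Next I would form the synchronous product $M_{A^2_\objectid}$ that steps $A^2_\objectid$ and feeds each emitted event into $M$ from the previous lemma. Both components are finite-state, so $M_{A^2_\objectid}$ is finite-state; its error states are exactly those inherited from $M$ upon a failed \code{assert}, and by construction such states are absorbing. Its initial states are $\mathcal{I}_0 \cup \Sigma_\objectid$, matching the initial states of $A^2_\objectid$. By the lemma, for every $\exec \in \Pi$ the run of $M$ on $\trace(\exec)$ avoids the error states iff $\exec \vDash \DECFo[\subC]{\objectid}$, so the same holds for the run of $M_{A^2_\objectid}$ that produces $\exec$ and monitors it.

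Finally I would reduce: all executions of $A^2_\objectid$ are $\DECFo[\subC]{\objectid}$ iff no error state of $M_{A^2_\objectid}$ is reachable from an initial state in $\mathcal{I}_0 \cup \Sigma_\objectid$. Since reachability in a finite directed graph is decidable, a single forward reachability pass over $M_{A^2_\objectid}$ decides the property, which proves the corollary. The substantive work was already done in constructing $M$ and proving it correct, so I do not expect a genuine obstacle here; the one point worth spelling out carefully is the bookkeeping of the product, namely that $M$ is fed only the traces $A^2_\objectid$ can actually produce (not arbitrary event sequences) and that, because $M$'s error states are absorbing, "the run of $M$ ends in an accepting state" coincides with "the error sink is never entered" — so ordinary state-reachability, rather than a more elaborate acceptance condition, suffices.
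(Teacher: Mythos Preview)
Your proposal is correct and follows essentially the same approach as the paper: observe that $A^2_\objectid$ is finite-state because the stack depth is bounded by two, take its product with the finite-state monitor $M$ from the preceding lemma to obtain $M_{A^2_\objectid}$, and reduce the question to reachability of an error state in a finite automaton. The paper's justification is terser but makes exactly these points; your additional remarks about absorbing error states and the product bookkeeping are sound elaborations rather than a different route.
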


We now show that the previous corollary implies that if all executions of $A^2_\objectid$ are $\DECFo[\subC]{\objectid}$, then then so are all executions of $A_\objectid$. 
In particular, it shows that $\objectid$ is $\SECF[\subC]$. 
Importantly, the converse is also true: if an object $\objectid$ is $\SECF[\subC]$, then all executions of $A^2_\objectid$ are $\DECFo[\subC]{\objectid}$. 
This is trivial to show: 
 for an execution $\exec$ of $A^2_\objectid$ such that $\exec \neg{\vDash} \DECFo[\subC]{\objectid}$ 
 it is easy to find a complete execution $\exec'$ in $A_\objectid$ of the form $\exec' = \exec_0 \hat{\exec} \exec'_0$ which is also not $\DECFo[\subC]{\objectid}$.
 Specifically, $\exec_0$  will be a subexecution that reaches the initial state of $\exec$, 
               $\hat{\exec}$ will be identical to $\exec$, only changing the stacks in $\exec$'s states to account for $\exec_0$'s transitions, 
               and $\exec'_0$ will complete the execution. 
  Any permutation of $\trace(\exec')$ will induce a conflict equivalence breaking permutation on $\hat{\exec}$, 
     whose conflicts are the same as those of $\exec$, 
     therefore $\exec'$ is not $\DECFo[\subC]{\objectid}$.

\begin{lemma}\label{lem:ConflictECFA2ToGeneral}
An object $\objectid$ is $\SECF[\subC]$ if all executions of $A^2_\objectid$ are $\DECFo[\subC]{\objectid}$.
\end{lemma}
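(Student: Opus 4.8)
The plan is to prove the contrapositive: if $\objectid$ is \emph{not} $\SECF[\subC]$, then there is some execution of $A^2_\objectid$ that is not $\DECFo[\subC]{\objectid}$. So suppose there is a complete execution $\ocode \inctx \exec$ with $\exec \not\vDash \DECFo[\subC]{\objectid}$. By \Cref{Prop:ModularToWellFormed} (in its generalization to runs) the projected execution $\exec|_\objectid$ is a modular well-formed run that is itself not $\DECFo[\subC]{\objectid}$, and it is exactly an execution of $A_\objectid$, so it suffices to exhibit an execution of $A^2_\objectid$ that is non-$\DECFo[\subC]{\objectid}$. The key claim is a ``depth reduction'': from the witness-refuting structure of a non-$\DECFo[\subC]{\objectid}$ execution of $A_\objectid$ at arbitrary stack depth, one can extract a sub-scenario living at stack depth at most $2$ that is still non-$\DECFo[\subC]{\objectid}$, and whose initial state is either in $\Sigma_\objectid$ or in $\mathcal{I}_0$ (a state whose active primitive command is a call — precisely the point at which a callback can be spawned).

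The key steps, in order. First I would use the characterization of $\DECFo[\subC]{\objectid}$ via the monitor $M$ of \Cref{fig:DECFForLevel2DecisionProcedure}: by the previous lemma, $\exec|_\objectid \not\vDash \DECFo[\subC]{\objectid}$ iff running $M$ on $\trace(\exec|_\objectid)$ reaches an error state, i.e.\ some finalized $(R,W)$ set fails the required commutativity assertion. Second, I would localize that failure: the offending assertion involves a pair of invocation-segments of $\objectid$ — either a callback segment against the accumulated \texttt{prefix}/\texttt{delayedCbs}, or a main-invocation segment against \texttt{delayedCbs}. In either case the conflict is witnessed by at most \emph{two} actual invocations of $\objectid$ plus a fixed prefix of main-invocation code, all of which can be arranged to occur at stack depth $1$ or $2$: the main invocation runs at depth $1$, and the single callback responsible for the conflict runs at depth $2$ (invoked at some call-point, a state in $\mathcal{I}_0$). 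Third, I would build the corresponding execution $\exec^2$ of $A^2_\objectid$: it starts either from the relevant state in $\Sigma_\objectid$ (if the conflict is among two main-invocation-level segments separated by a callback) or from the relevant state in $\mathcal{I}_0$ (letting $A^2_\objectid$ spawn the callback from its $E$-loop), runs the two conflicting segments, and — because limiting the depth to $2$ changes nothing about those segments' reads and writes — produces the same conflict. Hence $M$ (as monitored by $M_{A^2_\objectid}$) rejects $\exec^2$, so $\exec^2 \not\vDash \DECFo[\subC]{\objectid}$, contradicting \Cref{corol:ConflictECFA2Decidable}'s hypothesis that all executions of $A^2_\objectid$ are $\DECFo[\subC]{\objectid}$.

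The main obstacle is the localization/depth-reduction step: one must argue that a conflict-equivalence obstruction occurring deep in the call stack can always be ``projected down'' to depth $2$ without losing it. The intuition — stated already in the paper as ``non-ECF executions which occur in high depth of nesting must also occur in depth $2$'' — is that conflicts are between primitive commands of $\objectid$, and $\objectid$'s commands only appear at depth $1$ (main invocation) or at the top of a callback frame; deeper frames belonging to other objects contribute no conflicting accesses, and deeper \emph{nested} callbacks to $\objectid$ can themselves be treated as fresh top-level-after-$\mathcal I_0$ invocations because the monitor $M$ already collapses the effect of any callback into the flat \texttt{prefix}/\texttt{delayedCbs} summary. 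So the careful part is matching the monitor's bookkeeping at depth $2$ against the witness-refuting permutation structure at arbitrary depth, and checking that the reachability of the needed initial states ($\Sigma_\objectid \cup \mathcal{I}_0$) in $A_\objectid$ is exactly what $A^2_\objectid$ is seeded with. Everything else — the projection via \Cref{Prop:ModularToWellFormed}, the equivalence of ``$M$ rejects'' with ``non-$\DECFo[\subC]{\objectid}$'', and stitching the short scenario into a complete execution — is routine.
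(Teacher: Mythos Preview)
Your contrapositive-plus-localization route is genuinely different from the paper's argument, but as written it has a real gap.

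The paper does \emph{not} try to extract a depth-$2$ counterexample from an arbitrary-depth non-$\DECFo[\subC]{\objectid}$ execution. Instead it proves the direct implication by an iterative flattening: starting from any execution $\exec$ of $A_\objectid$, take the deepest two-level slice (some $\exec_0$ with $\minexecdepth(\exec_0)=D-1$ and $\maxexecdepth(\exec_0)=D$), view it---after dropping the bottom $D-2$ frames---as an execution of $A^2_\objectid$, apply the hypothesis to replace it by a conflict-equivalent callback-free $\exec'_0$ living entirely at depth $D-1$, and repeat. Each step strictly decreases the maximum depth while preserving conflict-equivalence, so the process terminates with a callback-free witness. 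No monitor is needed, and nothing has to be ``localized'' to two invocations.

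Your argument, by contrast, hinges on two steps that are not justified. First, you invoke the monitor $M$ of \Cref{fig:DECFForLevel2DecisionProcedure} on $\exec|_\objectid$ at arbitrary depth; but the previous lemma establishes the equivalence ``$M$ rejects $\Leftrightarrow$ not $\DECFo[\subC]{\objectid}$'' only for executions in $\Pi$, the executions of $A^2_\objectid$, and the code of $M$ literally branches only on $d\in\{0,1,2\}$. So you cannot use $M$ to detect the failure in the first place. Second---and this is the substantive issue---your localization claim that ``the conflict is witnessed by at most two actual invocations of $\objectid$'' is not argued. A non-$\CECF$ obstruction is a \emph{cycle} of ordering constraints among possibly many invocations at many nesting levels; it is not clear how to project such a cycle onto a single caller/callback pair without losing it. Your sketch (``deeper nested callbacks can be treated as fresh top-level-after-$\mathcal I_0$ invocations because $M$ already collapses \ldots'') is circular, since that collapsing is precisely what $M$ does only at depth $2$. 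The paper's iterative approach avoids this entirely: it never needs a minimal counterexample, it just applies the depth-$2$ hypothesis once per level.
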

\begin{proof}
We show that for every execution $\exec$ of $A_\objectid$ there is a witness $\exec'$ for it being $\DECFo[\subC]{\objectid}$.
The witness is built recursively. 
We initialize $\exec'=\exec$.
We denote $D=\maxexecdepth(\exec)$.
We start with the longest subexecution in $\exec_0 \subexec \exec$ which has $\maxexecdepth(\exec_0)=D$ and $\minexecdepth(\exec_0)=D-1$.
By ignoring the first $D-2$ stack frames, 
 $\exec_0$ can be seen as an execution of $A^2_\objectid$, 
 thus it is $\DECFo[\subC]{\objectid}$. 
 Hence, by taking the witness in $A^2_\objectid$ and returning the first $D-2$ stack frames, 
 we have a subexecution $\exec'_0 \ceq^\objectid \exec_0$ which is callback-free: $\minexecdepth(\exec'_0)=\maxexecdepth(\exec'_0)=D-1$. 
 We update $\exec'$ by replacing the transitions of $\exec_0$ with $\exec'_0$. 
The process continues by taking in each step the longest and deepest subexecution which has a callback and replacing it with a callback-free subexecution which is conflict-equivalent to it. 
In each step, the resulting $\exec'$ is conflict-equivalent to $\exec$.
The premise of the lemma ensures that this process will not stop until $\exec'$ is callback-free.
\end{proof}

From \Cref{corol:ConflictECFA2Decidable} and \Cref{lem:ConflictECFA2ToGeneral} we immediately get the following result:
\begin{theorem}
  Let $\objectid$ be an object, assuming a finite domain for variables. 
Then there is an algorithm that decides if $\objectid$ is $\SECF[\subC]$.
\end{theorem}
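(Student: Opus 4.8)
The plan is to obtain the final theorem as an immediate consequence of the two results that precede it, namely \Cref{corol:ConflictECFA2Decidable} and \Cref{lem:ConflictECFA2ToGeneral}, together with the trivial converse discussion given just before \Cref{lem:ConflictECFA2ToGeneral}. So the proof is essentially a composition: first observe that $\objectid$ is $\SECF[\subC]$ \emph{iff} all executions of $A^2_\objectid$ are $\DECFo[\subC]{\objectid}$, and then invoke the decidability of the latter condition.

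Concretely, I would proceed in three short steps. First, recall that $A_\objectid$ faithfully simulates exactly the set of modular well-formed executions whose active object throughout is $\objectid$, i.e. the projected executions on $\objectid$; hence by the definition of $\SECF[\subC]$ it suffices to decide whether every execution of $A_\objectid$ is $\DECFo[\subC]{\objectid}$. Second, combine the two directions of the equivalence ``every execution of $A_\objectid$ is $\DECFo[\subC]{\objectid}$'' $\iff$ ``every execution of $A^2_\objectid$ is $\DECFo[\subC]{\objectid}$'': the ($\Leftarrow$) direction is exactly \Cref{lem:ConflictECFA2ToGeneral}, which builds a callback-free conflict-equivalent witness for an arbitrary deep execution by repeatedly replacing the deepest, longest subexecution of stack-span two (a legal input to $A^2_\objectid$ once the bottom $D-2$ frames are ignored) with the conflict-equivalent callback-free witness guaranteed at depth two; and the ($\Rightarrow$) direction is the trivial embedding already sketched, where a bad depth-two execution $\exec$ is padded by a reaching prefix $\exec_0$ and a completing suffix $\exec'_0$ into a bad execution $\exec' = \exec_0\hat{\exec}\exec'_0$ of $A_\objectid$ whose conflicts on the copied portion are identical to those of $\exec$. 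Third, apply \Cref{corol:ConflictECFA2Decidable}, which says that checking whether all executions of $A^2_\objectid$ are $\DECFo[\subC]{\objectid}$ is decidable (via the finite-state monitor $M_{A^2_\objectid}$ and reachability of its error states), to conclude that $\SECF[\subC]$ is decidable.

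I do not expect a genuine obstacle here, since all the heavy lifting — the depth-two reduction and the monitor construction — has already been carried out in the preceding lemmas; the only thing to be careful about is stating cleanly that the reduction is an equivalence (both directions), so that decidability of the depth-two question transfers to decidability of the general $\SECF[\subC]$ question rather than merely giving a sound-but-incomplete check. If I wanted to be maximally self-contained I would also remark why the finite domain assumption is used: it is what makes $\Sigma_\objectid$ finite and regular and $A_\objectid$ a genuine pushdown automaton with a decidable regular-reachability problem, and what makes $A^2_\objectid$ (and hence $M_{A^2_\objectid}$) a finite-state machine. The proof itself can then be written in two or three sentences: ``By \Cref{lem:ConflictECFA2ToGeneral} and the converse observation preceding it, $\objectid$ is $\SECF[\subC]$ iff all executions of $A^2_\objectid$ are $\DECFo[\subC]{\objectid}$; by \Cref{corol:ConflictECFA2Decidable} the latter is decidable; hence so is the former.''
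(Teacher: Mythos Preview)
Your proposal is correct and takes essentially the same approach as the paper: the paper's proof is literally the one-line ``From \Cref{corol:ConflictECFA2Decidable} and \Cref{lem:ConflictECFA2ToGeneral} we immediately get the following result,'' and you have spelled out exactly this composition (including the converse direction discussed just before \Cref{lem:ConflictECFA2ToGeneral}) in more detail than the paper itself does.
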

}

\section{Object-level analysis}\label{Sec:Modularity}
While the ECF property is capable of detecting unwanted executions which do not satisfy it,
it can be further used  for modular analysis of objects.
We will show that in environments in which objects are encapsulated, 
we can consider ECF for executions of a single object only, to help simplify object-level analysis.

We define the notion of a \emph{most general client (MGC)} in our model.
The most general client for an object $\objectid$, $MGC_\objectid$, is an external program that works on a system that includes a single object in the store. 
The store $\store$ of $MGC_\objectid$ contains a single object $NoCB(\objectid)$, which is built based on the original object $\objectid$.
Every invocation of an object of the form $\SCallC{x}{\objectid'}{e}$ in $\objectid$ is replaced with a non-deterministic choice of the value of $x$ as returned by the call, in correspondence with the definition of havoc transitions in \Cref{Sec:CorrectnessConditions}.
Furthermore, $MGC_\objectid$ is allowed to repeatedly call $NoCB(\objectid)$ with any parameter and in any order.
As such, the semantics of $MGC_\objectid$ soundly approximate all executions of the object $\objectid$ (see, e.g.,~\citet{GotsmanYangICALP11}), while every execution in $MGC_\objectid$ is in fact a projected, callback-free execution of $\objectid$.

We show that, if the object $\objectid$ is ECF in the general model, then any object-level assertion can be soundly verified on $MGC_\objectid$. 
This is because, all reachable states of an object $\objectid$ in any arbitrary code context $\ocode$ that does not change $\objectid$'s code, are reachable in the system containing only $NoCB(\objectid)$.
\begin{theorem}\label{Thm:ObjLevelAnalysisSoundness}
Let $R$ be the set of all states of the object $\objectid$ in a quiescent state,
and let $R_0$ be the set of all states of the object $\objectid$ in a run of $MGC_\objectid$.
If $\objectid$ is $\SECF[\subFS]$ then $R_0\supset R$.
%
\end{theorem}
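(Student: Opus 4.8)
\textbf{Proof plan for \Cref{Thm:ObjLevelAnalysisSoundness}.}
The plan is to show that every state $\sstate \in R$ of $\objectid$ reachable in a quiescent state under some arbitrary context $\ocode$ (with $\ocode(\objectid)$ equal to $\objectid$'s code) is already reachable as a state of $NoCB(\objectid)$ in some run of $MGC_\objectid$, so that $R_0 \supseteq R$; strictness then follows since $MGC_\objectid$ may also generate states via havoced return values that need not arise in any ``real'' context. The heart of the argument is the observation that $MGC_\objectid$'s runs are exactly the callback-free, modular well-formed runs of $\objectid$ (up to projection), which is precisely the class of witnesses produced by the $\DECF[\subFS]$ property.

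First I would fix a state $\sstate \in R$, so there is a run $\ocode \inctx \exec$ ending in a quiescent state whose store assigns $\sstate$ to $\objectid$. Decompose $\exec$ into its complete executions $\exec = \exec_1 \cdots \exec_k$; the final state of $\exec_k$ is quiescent with $\objectid$-store equal to $\sstate$. Each $\exec_j$ is a complete execution, and since $\objectid$ is $\SECF[\subFS]$ we have $\ocode \inctx \exec_j \vDash \DECFo[\subFS]{\objectid}$, giving a callback-free run $\ocode_j \inctx \exec_j'$ with $\ocode_j(\objectid) = \ocode(\objectid)$ that is final-state equivalent for $\objectid$ to $\exec_j$. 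Concatenating the witnesses $\exec_1' \cdots \exec_k'$ — after checking their endpoint $\objectid$-stores chain up, which holds because each $\exec_j'$ agrees with $\exec_j$ on the $\objectid$-store at both ends and the $\exec_j$'s chain up by well-formedness of $\exec$ — yields a callback-free run reaching a quiescent state whose $\objectid$-store is $\sstate$.

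Next I would project this callback-free run onto $\objectid$ and invoke \Cref{Prop:ModularToWellFormed} (generalized to runs, exactly as used in the proof of \Cref{Thm:CECFImpliesFSECF}) to obtain a minimal well-formed, callback-free run that, on $\objectid$'s transitions, behaves identically. Because this run is callback-free for $\objectid$, every external call made by $\objectid$ returns to $\objectid$ without any intervening $\objectid$-transition, so from $\objectid$'s local viewpoint each such call is indistinguishable from a havoc transition: the returned value is some integer, and the object's fields are untouched during the subexecution. This is exactly the behavior of $NoCB(\objectid)$, and the unrestricted calling discipline of $MGC_\objectid$ (any parameter, any order, any number of times, quiescent in between) subsumes the call structure of the projected run. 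Hence $\sstate$ appears as a state of $NoCB(\objectid)$ along a run of $MGC_\objectid$, i.e. $\sstate \in R_0$.

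The main obstacle I anticipate is the bookkeeping in the middle step: making precise that a callback-free execution of $\objectid$ in an arbitrary context induces a genuine $MGC_\objectid$ run, i.e. that replacing each external-call subexecution by a single havoc transition is sound and that the resulting sequence is a legal $MGC_\objectid$ trace. This requires that the havoc transition can reproduce whatever return value the real callee produced (immediate, since havoc ranges over all integers) and that no hidden coupling through $\objectid$'s fields occurs during the callee's run — which is guaranteed precisely by callback-freedom together with encapsulation (only the active object touches its own fields, and $\objectid$ is not active inside a callback-free external call). Strictness of the inclusion ($R_0 \supsetneq R$, matching the paper's ``$\supset$'') is the easy part: $MGC_\objectid$ can pick return values for havoced calls that no real context would ever return, producing states outside $R$; a one-line example suffices.
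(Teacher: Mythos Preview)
The paper does not actually prove \Cref{Thm:ObjLevelAnalysisSoundness}: the theorem is stated and followed only by the remark that it also holds under $\SECF[\subC]$ (via \Cref{Thm:CECFImpliesFSECF}) and by a discussion of its use with Dafny. So there is no paper proof to compare against; I can only evaluate your argument on its own merits.

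Your core argument for $R_0 \supseteq R$ is sound and is the natural one: decompose a run reaching the target $\objectid$-state into complete executions, replace each by its $\DECFo[\subFS]{\objectid}$ witness, chain the $\objectid$-stores, and observe that a callback-free run, seen from $\objectid$'s side, is exactly what $MGC_\objectid$ can produce. One simplification: the detour through \Cref{Prop:ModularToWellFormed} is unnecessary. That proposition goes from a modular well-formed $\objectid$-projection back to a well-formed execution in some context, but you want to go the other way---from the callback-free well-formed witness to an $MGC_\objectid$ run. The direct step is simply to project the witness onto $\objectid$: callback-freedom plus encapsulation already guarantee that each external call in the projection is a havoc transition, and $MGC_\objectid$'s calling discipline (any argument, any number of times, from quiescent states) subsumes the resulting top-level call sequence. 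No reconstruction of a well-formed run is needed.

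Your strictness argument, however, does not hold in general. If $\objectid$ makes no external calls, or ignores all return values, then $NoCB(\objectid)$ coincides with $\objectid$ and $MGC_\objectid$ produces exactly the callback-free-reachable states, giving $R_0 = R$. The paper's ``$\supset$'' should be read as $\supseteq$ (sound overapproximation), not strict containment; the theorem is a soundness result, not an imprecision claim. Drop the strictness paragraph and the proof is fine.
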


Clearly, the theorem holds if the object is shown to be $\SECF[\subC]$, since this implies it is also $\SECF[\subFS]$ (see \Cref{Thm:CECFImpliesFSECF}).


This analysis is not overly imprecise, since in real environments, such as Ethereum, we could simulate such behaviors.
This is particularly correct since in Ethereum the store of the objects is updateable, and new objects may be added to the system.


Importantly, the analysis simply assumes ECF, and does not require to prove it: 
An alternative formulation of \Cref{Thm:ObjLevelAnalysisSoundness} is assuming that the runtime system enforces $\DECF$ on all executions.
In that case, the analysis is still sound. 
It is not unreasonable to assume such a dynamic analysis of $\DECF$, 
because we found out an efficient method to verify it, presented in \Cref{Sec:Dynamic}. 

We illustrate \Cref{Thm:ObjLevelAnalysisSoundness} using the example shown in \Cref{Fi:DaoContract}. 
We implemented \code{Dao} as a class in Dafny~\cite{Leino2010}  
with two methods: \code{deposit} and \code{withdrawAll}, 
whose pre- and post- conditions capture the object invariant which should be valid after every execution of the \code{Dao} object. 
Primarily, we wish to ensure that the data elements in the \code{credit} map are not negative, and that the sum of all these elements is equal to the balance of the \code{Dao}. 
We model the \code{pay} method without the recursive call to \code{withdrawAll}, but annotate it as possibly modifying (any field of) the \code{Dao} object. 
This annotation generalizes the possible behaviors of the \code{Dao} object without the ECF property. 
With such weak assumptions, and perhaps unsurprisingly, Dafny fails to verify the postconditions for both the original and the fixed versions of the \code{Dao} object. 
In contrast, when we assume that the ECF property holds, technically by adding a postcondition to \code{pay} which ensures that the previously read fields of the \code{Dao} object, i.e.,  \code{balance} and \code{credit[o]}, are not modified, Dafny is able to establish the post condition. 
\Cref{Thm:ObjLevelAnalysisSoundness} implies the fixed \code{Dao} contract respects the given specification when executed using the original runtime system and the original DAO  object respects the specification if it is executed on a runtime system which enforces ECF.

\section{Dynamic Verification}\label{Sec:Dynamic}
We describe a sound procedure for verifying the \DECF[\subC] property dynamically.
More precisely, for each execution, it checks for every object that participates in the execution, if the subsequence of the transitions that pertain only to that object (the \emph{projected execution}) is $\CECF$.
We assume the existence of an interpreter or virtual machine implementing the semantics defined in \Cref{Sec:Preliminaries}.
Below is a description of the data structures used by the algorithm, as well as the instrumentation of the object code to maintain these data structures. We then present a higher-level description of the algorithm, followed with pseudo-code and a complexity analysis.
We use the example presented in the overview section in \Cref{Fi:DaoContract} to explain the procedure.

The general structure of the procedure is that the instrumentation step starts every time we exit a quiescent state, and ends when we reach the next quiescent state. 
Once instrumentation has completed, 
the algorithm runs on the instrumented structures and returns whether all projected executions derived from the execution are ECF. 
The procedure repeats each time we enter an active state.

\subsection{Data Structures}
A \emph{segment} is a data structure that captures metadata about a portion of the execution's states.
This portion consists of a sequence of adjacent transitions, whose top stack frames have the same active object. 
That is, an invocation of a different object marks the beginning of a new segment, as well as returning from an invocation to a caller invocation which is executed in the context of a different object.
	However, a call from one object to itself does not break the current segment (This is motivated by the definition of callbacks in \Cref{Sec:CorrectnessConditions}).
In simpler terms, a new segment is defined each time the active object changes, 
either when we push a stack frame with a different object, 
or pop a stack frame such that the new top frame has a different active object.
We show how segments are determined in the instrumentation in \Cref{fig:Instrumentation}, using hooks on calls and returns.

\begin{example}\label{Exmp:Execution}
In the example DAO contract in \Cref{Se:Overview}, an attack execution consists of 6 segments:
(1) the first invocation of \code{withdrawAll}, lines 1-3;
(2) an invocation of \code{pay}, lines 3-5;
(3) the second invocation of \code{withdrawAll}, lines 1-3;
(4) a full invocation of \code{pay}, lines 3-4,7;
(5) the second invocation of \code{withdrawAll}, line 5;
(6) the first invocation of \code{withdrawAll}, line 5;
\end{example}

\begin{definition}[Segments]
A \emph{segment} $\segment$ is representative of a maximal sequence of adjacent transitions pertaining to the same object.
A segment $\segment=\ST{\readset{},\writeset{},\depth{},\indexInExec{}}$ contains information about fields accessed in the segment, 
denoted $\readset{\segment}$ and $\writeset{\segment}$ for the read- and write- sets, respectively.
In addition, a segment contains information about the depth of the invocation (denoted $\depth{\segment}$), which is equal to the depth of the transitions' states.
Last, the index in the execution (denoted $\indexInExec{\segment}$), is strictly increasing according to order of creation of the segments. 
\end{definition}

The primary metadata saved in each segment is the read and write sets of the fields of the object that were accessed by commands executed in the transitions that pertain to the segment.
 Other metadata includes the depth of the invocations in the stack, and an index to maintain the order of the segments in the execution. 

\begin{example}\label{Exmp:Segments}
We write down the segments that pertain to the \code{DAO} object in the overview example of the attack execution,
in the same order as they appear in the execution:
 $$\begin{array}{ll}
 	\segment_1=\ST{\SET{\icredit[\code{Attacker}],\ibalance},\SET{\ibalance},0,1} &
 	\segment_2=\ST{\SET{\icredit[\code{Attacker}],\ibalance},\SET{\ibalance},1,2} \\
 	\segment_3=\ST{\SET{},\SET{\icredit[\code{Attacker}]},1,3} &
 	\segment_4=\ST{\SET{},\SET{\icredit[\code{Attacker}]},0,4}
 	\end{array}
 	$$
\end{example}

An execution can be represented as a linear sequence of segments. 
Furthermore, from these segments 
we can determine 
the invocations that the execution contains.
\begin{remark}
Segments can be used as an alternative representation of executions and invocations, 
that generalize data saved by a sequence of transitions.
In this section only, we redefine the notions of executions and invocations to refer to segments instead of transitions.
\end{remark}
\begin{definition}[Executions, Invocations, and Callbacks]
An \emph{execution} can be represented using a sequence of its instrumented segments $\exec=\ST{\segment_1,\ldots,\segment_n}$.
We can access the j'th segment of the execution using $\exec(j)=\segment_j$. We trivially have that $\indexInExec{\segment_j}=j$.
An \emph{invocation} is a sequence of segments $\inv=\ST{\segment^{\inv}_1,\ldots,\segment^{\inv}_k}$ 
	such that there is a number $d$ for which all of the following holds:
	\begin{enumerate}
		\item $\forall \segment\in\inv. \depth{\segment}=d$ (all segments of the invocation are in the same depth).
		\item $d > \depthNoArg(\exec(\indexInExecNoArg(\segment^{\inv}_1)-1))$ (the first segment before the first segment in the invocation has lower depth, proving it is indeed the beginning of an invocation).
		\item $d > \depthNoArg(\exec(\indexInExecNoArg(\segment^{\inv}_k)+1))$ (the first segment after the last segment in the invocations has lower depth, proving it is indeed the end of an invocation).
		\item $\forall j. \indexInExec{\segment^{\inv}_1} < j < \indexInExec{\segment^{\inv}_k} \implies \depth{\exec(j)}\geq d$ (the invocation does not end before the last segment, that is all segments of depth $d$ in the given range belong to the same invocation).
	\end{enumerate}
As all segments included in the invocation has the same depth $d$, we denote the \emph{depth of an invocation} by $\depthNoArg(\inv)=d$.
We say that an invocation $\inv$ is a \emph{callback} in another invocation $\inv'$ (denoted $\inv\subexec\inv'$) if $\indexInExec{\inv(1)}>\indexInExec{\inv'(1)} \land \indexInExec{\inv(1)}<\indexInExec{\inv'(|\inv'|)}$. 
\end{definition}
\begin{remark}
Unlike the definition of invocations in \Cref{Sec:Preliminaries},
here invocations capture only the transitions in the same depth as the depth of its first transition, and not transitions in higher depth.
This allows to define $\depthNoArg(\inv)$ for an invocation $\inv$.
\end{remark}
\begin{example}\label{Exmp:Invocations}
In the attack execution presented in \Cref{Se:Overview},
the first invocation of \code{withdrawAll} is $\inv_{wd_1}=\ST{\segment_1,\segment_4}$, and the second invocation is $\inv_{wd_2}=\ST{\segment_2,\segment_3}$.
$\inv_{wd_2}$ is a callback of $\inv_{wd_1}$: $\inv_{wd_2}\subexec\inv_{wd_1}$.
\end{example}

We associate with each segment in depth $>1$ a \emph{prefix-set} and \emph{suffix-set} of all segments in the caller that precede, or respectively, proceed it:
\begin{definition}[Prefix and Suffix segments]
Let a set of segments representing an invocation $\inv=\ST{\segment^i_{\it{caller}}}$, 
and a single segment $\segment_{\it{cb}}$ with $\depthNoArg(\segment_{\it{cb}})>\depthNoArg(\inv)$ 
and $\indexInExecNoArg(\segment_{\it{cb}}) \in \SET{\indexInExec{\inv(1)},\ldots,\indexInExec{\inv(|\inv|)}}$. 
 We define for $\segment_{\it{cb}}$ its \emph{prefix and suffix sets relatively to a caller $\inv$} by partitioning the segment in $\inv$ to segments whose index in the execution is smaller than the index of the callback segment $\segment_{\it{cb}}$ (prefix), and segments whose index in the execution is larger than it (suffix):
$$
\begin{array}{lll}
\prefix{\inv,\segment_{\it{cb}}} & = & \{\segment_{\it{caller}}\in\inv \mid \indexInExec{\segment_{\it{caller}}} < \indexInExec{\segment_{\it{cb}}} \}
\\
\suffix{\inv,\segment_{\it{cb}}} & = & \{\segment_{\it{caller}}\in\inv \mid \indexInExec{\segment_{\it{caller}}} > \indexInExec{\segment_{\it{cb}}} \}
\end{array}
$$
\end{definition}
\begin{example}\label{Exmp:PrefixSuffixSegments}
The prefix and suffix segments of $\segment_2$ and $\segment_3$ with respect to $\inv_{wd_1}$ are:
$$
\begin{array}{rl}
\prefix{\inv_{wd_1},\segment_2} = \prefix{\inv_{wd_1},\segment_3} & =\SET{\segment_1} \\
\suffix{\inv_{wd_1},\segment_2} = \suffix{\inv_{wd_1},\segment_3} & =\SET{\segment_4}
\end{array}
$$
\end{example}

The instrumentation process creates the segments and the invocations. 
We show pseudo-code of the instrumentation procedure in \Cref{fig:Instrumentation}.
\begin{figure}	
\begin{lstlisting}[mathescape=true,basicstyle=\ttfamily\scriptsize]
Segment { Obj, Caller, R, W, D, I }
Invocation { Caller, Obj }

Init():
 execution := ()
 curSegment := $\bot$
 invocations := Map<Invocation -> $\overline{\texttt{Segment}}$>

UponInvocation(object):
 if fromQuiescent // Procedure starts
  Init()

 if object != curSegment.Obj
  caller := fromQuiescent ? TopInvocation : curSegment.Caller
  inv := Invocation(caller, object)
  AddSegment(object, inv, curSegment.D+1, curSegment.I+1)

UponReturn(object):
 caller := toQuiescent ? TopInvocation : curSegment.Caller.Caller
 if caller.Obj != object
  AddSegment(object, caller, curSegment.D-1, curSegment.I+1)

 if caller == TopInvocation // End of instrumentation step
  CheckECFForAllObjects() // Run the algorithm, and finish procedure

AddSegment(object, caller, D, I):
  segment := Segment(object, caller, {}, {}, D, I)
  Append(execution, segment)
  Append(invocations[caller], segment)
  curSegment := segment

UponObjectVarRead(object, F):
 curSegment.R[F] := 1

UponObjectVarWrite(object, F):
 curSegment.W[F] := 1
\end{lstlisting}
\caption{
Instrumentation procedures, implemented as hooks called upon call commands, return commands, and object variable read/write access command. Generates the \code{execution}, which is a list of segments, and \code{invocations}, a map of invocation identifiers to an invocation object keeping the caller of an invocation and the list of segments that are part of the invocation in the same depth.
The top-level invocation is identified as \code{TopInvocation}.
}\label{fig:Instrumentation}
\end{figure}


The basic check on segments is the commutativity check. We define segment commutativity using read and write sets. 
We will show that we actually check commutativity of a segment with either a prefix or suffix segment.
As the prefix/suffix segments are sets of segments, 
the read and write sets of prefix/suffix segments are a union of the respective read and write sets of all the segments contained in the prefix or suffix segment. 
\begin{definition}[Commutative Segments]\label{def:CommutativeSegments}
Segments $\segment_1$ and $\segment_2$ commute, denoted by $\Commute{\segment_1}{\segment_2}$, if:
$$
\Commute{\segment_1}{\segment_2} \eqdef \readset{\segment_1}\cap\writeset{\segment_2}=\emptyset \land \readset{\segment_2}\cap\writeset{\segment_1}=\emptyset \land \writeset{\segment_1}\cap\writeset{\segment_2}=\emptyset  
$$
If segments $\segment_1$ and $\segment_2$ do not commute, we denote this by $\nCommute{\segment_1}{\segment_2}$.
\end{definition}
\begin{example}\label{Exmp:Commutativity}
In the attack execution presented in \Cref{Se:Overview}, indeed we have that $\nCommute{\segment_2}{\segment_1}$,
as $\readset{\segment_1}\cap\writeset{\segment_2}=\SET{\ibalance}$.
Similarly, $\nCommute{\segment_3}{\segment_1}$ because of $\icredit[o]$, 
as $\readset{\segment_1}\cap\writeset{\segment_3}=\SET{\icredit[o]}$,
and therefore also $\nCommute{\segment_2}{\segment_4}$. 
However, $\segment_3$ does commute with $\segment_4$: $\Commute{\segment_3}{\segment_4}$.
\end{example}

\subsection{Algorithm}
We start with a high-level description of the algorithm.
The algorithm is called every time the system reaches a quiescent state, working on the last complete execution.
The algorithm generates a relation of invocations that defines 
constraints on the ordering of invocations in different stack depths, similar to a `happens-before'~\cite{Lamport:1978:TCO:359545.359563} relation.
We name this relation the \emph{invocation order constraint (IOC) graph}.
For example, if a segment $\segment$ of a callback invocation $\inv_{\it{cb}}$ is not commuting with its prefix with respect to one of its calling invocations $\segment_{\it{caller}}$ (i.e., $\nCommute{\segment}{\prefix{\inv_{\it{caller}},\segment}}$), then we add the constraint that the invocation of the caller has to occur before the callback: $\hb{\inv_{\it{caller}}}{\inv_{\it{cb}}}$. 
The IOC relation of invocations is thus defined as:
$$
\begin{array}{lll}
\hb{\inv}{\inv'} & \eqdef & (\inv'\subexec\inv \land \exists \segment\in\inv'. \nCommute{\segment}{\prefix{\inv,\segment}}) \\
& & \lor (\inv\subexec\inv' \land \exists \segment\in\inv. \nCommute{\segment}{\suffix{\inv',\segment}})
\end{array}
$$
\begin{example}\label{Exmp:IOC}
The IOC relation of the attack execution in \Cref{Se:Overview} can be easily calculated with the previous metadata given in examples \ref{Exmp:PrefixSuffixSegments} and \ref{Exmp:Commutativity}.
We have that $\hb{\inv_{wd_1}}{\inv_{wd_2}}$ as $\inv_{wd_2}\subexec\inv_{wd_1}$ and for $\segment_2\in\inv_{wd_2}$, $\nCommute{\segment_2}{\prefix{\inv_{wd_1},\segment_2}}$.
Similarly, $\hb{\inv_{wd_2}}{\inv_{wd_1}}$ as for $\segment_2\in\inv_{wd_2}$, $\nCommute{\segment_2}{\suffix{\inv_{wd_1},\segment_2}}$.
\end{example}

After the IOC relation is defined, the algorithm considers the graph induced by this relation, 
and checks it has no cycles. 
A cycle in the graph could appear if, for example, there is a callback invocation and some caller invocation that contains it, for which there is both (1) a segment that does not commute with its prefix with respect to the caller; and (2) a segment that does not commute with its suffix with respect to the caller.
As each vertex in this graph represents an invocation, the topological sorting returns an ordering of the invocations, which is $\CECF$. 
We are merely interested if there is such a topological sorting, that is, if the IOC relation does not contain a cycle.
\begin{theorem}
Let $\exec$ be an execution and let $\it{Inv}$ be a map of the instrumented invocations to their segments.
We denote by $\hb{}{}$ the IOC on $\it{Inv}$.
If $\hb{}{}$ has no cycle, then $\exec$ is $\CECF$.
\end{theorem}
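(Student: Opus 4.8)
The plan is to unfold ``$\exec$ is $\CECF$'' as ``$\ocode \inctx \exec \vDash \DECFo[\subC]{\objectid}$ for every object $\objectid$ participating in $\exec$'', fix an arbitrary such $\objectid$, and manufacture from the acyclicity hypothesis an explicit modular well-formed callback-free run $\ocode \inctx \exec'$ (in the same context $\ocode$) with $\ocode \inctx \exec \ceq^{\objectid} \ocode \inctx \exec'$. Throughout I work with the projection of $\exec$ on $\objectid$ viewed, as in this section, as a sequence of segments grouped into invocations. First I would record the structural facts this rests on: the index ranges of any two invocations of $\objectid$ are laminar---nested or disjoint---which follows from the four conditions defining an invocation together with the fact that a segment lying inside an invocation's index range has depth at least that invocation's depth; consequently $\subexec$ on invocations is a forest order, and two invocations that are not nested occur one entirely before the other in $\exec$.

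Next, acyclicity of $\hb{}{}$ lets me choose a linear extension of it on the (finite) set of $\objectid$'s invocations; I would take the \emph{stable} one, namely the topological order that additionally breaks ties between $\hb{}{}$-incomparable invocations by order of first appearance in $\exec$. Call it $\prec$. I then build $\exec'$ by listing $\objectid$'s invocations in $\prec$-order and, for each, running $\objectid$'s code from a quiescent state, emitting that invocation's segments in their original relative order, and replacing every external call $\SCallC{x}{\objectid'}{e}$ that $\objectid$ performs by a havoc transition whose returned value is the one that call returned in $\exec$. By construction $\exec'$ is callback-free for $\objectid$ (each $\objectid$-invocation runs in isolation, with no $\objectid$-frame beneath it) and $\ocode(\objectid)$ is unchanged.

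Two things then remain. (i) $\exec'$ must be a genuine modular well-formed run: the store of $\objectid$ seen by each emitted command must match what $\objectid$'s deterministic code produced in $\exec$. (ii) $\ocode \inctx \exec \ceq^{\objectid} \ocode \inctx \exec'$: the two projected traces have equal length, $\objectid$'s code agrees, and some permutation matches them event-by-event while preserving program order inside invocations and keeping every conflicting pair in its $\exec$-order. For (ii) the program-order clause is immediate; the conflict clause is a case analysis on a conflicting pair $\segment \in \inv$, $\segment' \in \inv'$ with $\nCommute{\segment}{\segment'}$: if $\inv = \inv'$ the order is kept since segment order within an invocation is preserved; if $\inv' \subexec \inv$ then, according to whether $\segment$ precedes or follows $\segment'$ in $\exec$, $\nCommute{\segment}{\segment'}$ forces $\nCommute{\segment'}{\prefix{\inv,\segment'}}$ or $\nCommute{\segment'}{\suffix{\inv,\segment'}}$ (the offending field survives the union), which by definition puts $\hb{\inv}{\inv'}$ resp.\ $\hb{\inv'}{\inv}$ into the IOC graph and hence pins $\prec$ on $\inv,\inv'$ to agree with $\exec$; and if $\inv,\inv'$ are disjoint they keep their $\exec$-order because $\prec$ is stable and $\exec$ already orders them. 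Claim (i) then follows from (ii) by the standard conflict-serializability swapping argument: $\exec'$ is reached from the projection of $\exec$ on $\objectid$ (with havocs substituted for external calls) by a sequence of transpositions of adjacent commuting steps, each preserving executability and $\objectid$'s store, and conflicting steps are never transposed; the inclusion ``conflict equivalence $\Rightarrow$ final-state equivalence'' of~\cite{BOOK:BHG87}, already used in \Cref{Thm:CECFImpliesFSECF}, is the same fact. Finally $\exec'$ witnesses $\ocode \inctx \exec \vDash \DECFo[\subC]{\objectid}$, and since $\objectid$ was arbitrary, $\exec$ is $\CECF$.

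I expect the delicate point to be the conflict clause of (ii)---precisely, that acyclicity of $\hb{}{}$ is genuinely \emph{sufficient}: one must rule out a conflicting pair of segments imposing an ordering constraint invisible to the IOC graph and its stable tie-break. For a nested pair this is exactly what the definition of $\hb{}{}$ via $\prefix$/$\suffix$ delivers; the care is in handling conflicts between segments of invocations that are callbacks at different stack depths, and between sibling callbacks, while keeping everything consistent with the laminar nesting structure. The companion issue is the bookkeeping that justifies relocating an entire callback invocation to before or after its caller rather than into the middle of it---this is where the havoc transitions and the matching return-value choice do their job, and where, should a non-modular witness be wanted, one can appeal to \Cref{Prop:ModularToWellFormed}.
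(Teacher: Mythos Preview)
Your approach mirrors the paper's: linearly extend the acyclic IOC, run each invocation from a quiescent state in that order with external calls havoc'd, and argue conflict preservation by cases on where the two conflicting transitions sit. The paper's proof is much terser and does not introduce a stable tie-break; it simply asserts in one line that the chosen total order ``respects'' every inter-invocation conflict.

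The gap you yourself flag as delicate is real, and your stable ordering does not close it. You write: ``if $\inv,\inv'$ are disjoint they keep their $\exec$-order because $\prec$ is stable.'' But stability only governs pairs that are \emph{incomparable} in the IOC; a directed IOC path through a third invocation can still force a reversal of two disjoint siblings. Concretely: let $\inv''$ have segments $s_1$ reading $F$, $s_2$ idle, $s_3$ reading $G$; between $s_1$ and $s_2$ a callback $\inv$ writes $F$; between $s_2$ and $s_3$ a sibling callback $\inv'$ reads $F$ and writes $G$. Then $\hb{\inv''}{\inv}$ (prefix conflict on $F$) and $\hb{\inv'}{\inv''}$ (suffix conflict on $G$), and there is no IOC edge between the siblings since neither is $\subexec$ the other. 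The IOC is acyclic with the \emph{unique} linear extension $\inv',\inv'',\inv$ --- no tie-breaking is available --- yet $\inv$'s write to $F$ precedes $\inv'$'s read of $F$ in $\exec$, so every topological order reverses this conflict. Indeed the actual conflict constraints ($s_1$ before $\inv$, $\inv$ before $\inv'$, $\inv'$ before $s_3$) become cyclic once $\inv''$ is rendered atomic, so this $\exec$ is not $\DECFo[\subC]{\objectid}$ at all. The paper's own proof elides this with the bare claim that ``$\hb{}{^t}$ respects that conflict,'' so the defect is shared; but your disjoint-case argument is incorrect as written, and no choice of linear extension of $\hb{}{}$ repairs it.
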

\begin{proof}
We assume $\hb{}{}$ has no cycle. 
We take a total order $\hb{}{^t}$ of $\it{Inv}$ induced by the transitive closure on $\hb{}{}$.
From $\hb{}{^t}$ we build a run $\exec'$ such that every invocation in $\it{Inv}$ starts in a quiescent state in the order determined by $\hb{}{^t}$.
$\exec'$ is conflict-equivalent to $\exec$.
To show this, we consider two transitions $\transition_1$ and $\transition_2$ which conflict in $\exec$.
If $\transition_1$ and $\transition_2$ are both captured in the same segment during instrumentation, then their ordering is kept in $\exec'$ which only reorders invocations. In particular, the program order of invocations is kept.
The same argument applies when $\transition_1$ and $\transition_2$ are not captured by the same segment, but their respective segments are both part of the same invocation.
In the general case, $\transition_1$ and $\transition_2$ each belong to different segments, pertaining to different invocations.
In that case, their ordering in $\exec'$ is kept as $\hb{}{^t}$ respects that conflict.
%
\end{proof}

\begin{example}
In continuation to our running example, it is immediate that the IOC relation of the attack execution on the DAO object has a cycle: $\hb{\inv_{wd_1}}{\hb{\inv_{wd_2}}{\inv_{wd_1}}}$. 
Therefore, the algorithm cannot determine the attack execution is $\CECF$.
But indeed, the attack execution is not ECF, thus it cannot be $\CECF$.
\end{example}

We already saw in \Cref{Sec:Modularity} that due to state encapsulation, ECF is a modular property.
Therefore, the procedure may either check ECF for the entire execution, by searching for a cycle in the full IOC relation, or to check ECF for one object at a time.
	A modular ECF check can be done by projecting the relation only on invocations of the object under examination.
To align with the actual implementation of the algorithm, 
	we chose to present it in its modular version here as well.

We give the complete pseudo-code of the algorithm in \Cref{fig:DynamicMonitorAlgorithm}.
It begins with an additional step of preprocessing which is calculating the commutativity matrix of all segments against all prefix and suffix segments of all their enclosing invocations (invocations that directly or indirectly call the invocation in which the segment is included). 
The commutativity matrix assists in calculating the IOC relation.
We then iterate over all objects encountered in the execution, project the IOC relation on a single object in each iteration, and check if it has a cycle. 
If the check returns that it is a DAG, then we verified the projected execution is ECF. Otherwise, the cycle describes the invocations which cannot be moved, and helps identify the callbacks that cause the violation of ECF.

\begin{figure}
\begin{lstlisting}[basicstyle=\ttfamily\scriptsize]]
CheckECFForAllObjects():
 commute_matrix := CalculateCommutativityMatrix()
 hbRelation := CalculateIOCRelation(commute_matrix)
 for each unique object in execution:
  if not CheckECF(object):
   Print "Object " object " is not ECF"

CheckECF(object):
 // It is guaranteed that IOC applies only to invocations of the same object
 hbRelationO := project hbRelation on invocations of object only
 return isDAG(hbRelation)
		
CalculateCommutativityMatrix():
 matrix := new Map<Invocation, Segment -> Bool, Bool>
 for each inv in invocations, segment in execution
  if encloses(inv, segment)
   prefix := (s for s in inv where s.I < segment.I)
   suffix := (s for s in inv where s.I > segment.I)	
   prefixRS, prefixWS := Union(s.R for s in prefix), Union(s.W for s in prefix)
   suffixRS, suffixWS := Union(s.R for s in suffix), Union(s.W for s in suffix)
   prefixCommute := isCommutative(prefixRS, prefixWS, segment.R, segment.W)
   suffixCommute := isCommutative(suffixRS, suffixWS, segment.R, segment.W)

   if prefixCommute == False && suffixCommute == False
    Abort("Not ECF")
   matrix[inv, segment] := prefixCommute, suffixCommute

 return matrix

encloses(inv, segment):
  return inv.Obj = segment.Obj
	&& segment.I between first segment and last segment in inv

CalculateIOCRelation(commute_matrix):
  rel := new Map<Invocation, Invocation -> Bool>
  for each inv1, inv2 in invocations
	if encloses(inv1, first segment in inv2)
	  for each segment in inv2
		if commute_matrix[inv1, segment] == False, True
		  rel[inv1, inv2] := True

	if encloses(inv2, first segment in inv1)
	  for each segment in inv1
		if commute_matrix[inv2, segment] == True, False
		  rel[inv1, inv2] := True
\end{lstlisting}
\caption{Algorithm for verifying ECF of an execution.
The code of \code{isDAG} and \code{isCommutative} is not given. The definition of \code{isCommutative} is given according to Definition \ref{def:CommutativeSegments}.
}\label{fig:DynamicMonitorAlgorithm}
\end{figure}

\subsection{Complexity}
\subsubsection{Time.}
The instrumentation step adds a constant factor of work to the runtime.
To analyze the algorithm, we begin by looking at the preprocessing steps first.
Let $n$ denote the number of invocations and $m$ the number of segments ($n<m$).
	In addition, let $k$ denote the maximal number of object variables accessed in an object participating in the execution ($k<m$).
The \code{CalculateCommutativityMatrix} procedure loops on all invocations and all segments.
For each pair of an invocation and a segment, the \code{encloses} predicate can be implemented to take constant time.
The calculation of the prefix set and suffix set is taking time linear in the number of segments in an invocation, bounded by $m$.
The time to calculate the read and write sets of the prefix and the suffix set is linear in $k$.
Commutativity check, which involves checking set intersection, where our sets are implemented as associative arrays, is linear in $k$.
Thus the time of \code{CalculateCommutativityMatrix} is $O(nm(m+k))=O(nm^2)$.
For \code{CalculateIOCRelation}, we have a loop over pairs of invocations, and another pair of non-nested loops over segments in an invocation, giving $O(mn^2)$.
Projecting the IOC relation is linear in its size which is $O(n^2)$.
The \code{isDAG} check is linear in the size of the projected relation, which is bounded by $O(n^2)$. (The graph it represents has $O(n)$ vertices and $O(n^2)$ edges, and checking for a graph to be a DAG is $O(|V|+|E|)$).
In total, we have $O(nm^2)$.

\subsubsection{Space.}
The instrumentation adds $O(m)$ space for keeping the segments, and $O(nm)$ for keeping the invocations.
The commutativity matrix takes $O(nm)$ space, 
	and the IOC relation takes $O(n^2)$ space.
Therefore, the space complexity of the algorithm is $O(nm)$.


\section{Evaluation}\label{Sec:Evaluation}
We developed a prototype implementation for a dynamic monitor verifying ECF for \emph{Ethereum}.\footnote{The source code is available at \url{https://github.com/shellygr/ECFChecker}.}
For each execution, it checks if any of the participating contracts has a non-ECF (projected) execution, and outputs all detections of non-ECF executions.\footnote{The monitor was implemented on top of the Go~\cite{GoProgrammingLanguage} client for Ethereum, called \emph{geth}~\cite{Eth:Geth}, version 1.5.9.}
We ran our experiments by importing the entire blockchain from its inception on July 30, 2015 until March 30, 2017. \footnote{Without delving into the specifics of the blockchain paradigm, executions are organized in a structure called \emph{blocks}. Our primary experiment was to import the first 3,444,354 blocks of the main Ethereum blockchain.}
The host we used is a 64-bit Ubuntu 16.04 with two 2.2~GHz Intel Xeon E5-2699 processors (22 cores each with 2 threads per core) and 256~GB of RAM.
Both the instrumentation and the algorithm were integrated directly into the \emph{Ethereum Virtual Machine (EVM)} module using hooks, as described in \Cref{Sec:Dynamic}.


Our monitor operated in ``'detect-mode'' to avoid affecting the results, and for statistics gathering only.
However, it is trivial to change it to ``'prevent-mode'', that actively invalidates and reverts complete executions which are not ECF.
Had all the Ethereum clients used such a monitor by design, the DAO incident would have been avoided, along with the controversial hard fork. 
As our experiments prove, the false-positive rate of the monitor is minuscule: only 10 executions out of about 100 million were legitimately non-ECF. There is also no concern of performance impact, as the measured overhead of running the monitor was less than $3.5\%$. Furthermore, the benchmarks of the monitor were performed in an ideal environment that actually makes the overhead larger than it is in a normal environment. The reason behind it is that normal environments have additional overheads such as networking and disk accesses, which we disabled in order to scale our experiments.

\paragraph{Experiments.}
In \Cref{fig:ECFGeneralExperiment}, we show a short list of experiments conducted. 
We also included the number of contracts created as an additional metric of the blockchain.
The primary experiment was checking for ECF in all executions since the creation of blockchain until March 30, 2017. 
	Of note is that less than $0.01\%$ of the  executions were non-ECF.
	In the second experiment, we processed all executions starting from March 30, 2017 until June 23, 2017
	\footnote{The second experiment processed all blocks from block no. 3,444,355 to block no. 3,918,380.}.

It is interesting to compare the results of the first experiment, conducted on a snapshot of the Ethereum blockchain taken in Mar. 30th, 2017, which we used for benchmarks, to the second experiment, in which we let our modified client to process all newer executions, until June 23rd, 2017.
The number of non-ECF contracts decreased in both absolute quantity and in percentage of total executions. 
 The newer executions expose the maturity of the network, expressed in both the number of total contracts created (almost $150\%$ more contracts created in less than 3 months than in the entire existence of the blockchain, from August 2015 till the snapshot date), and the number of executions.\footnote{Assuming a new block is generated at an almost constant rate, there were 32 executions per block on average in the second experiment, compared with 23 executions in the first.}
 Moreover, the number of executions with callbacks increased significantly, indicating more complex contracts.
 	In the first experiment there were $128,670$ executions containing callbacks, and in the second experiment there were $155,668$ executions with callbacks. 
 	This amounts to a $641\%$ increase in the number of callbacks in the later period compared with the earlier period.
 	While the percentage of executions with callbacks is still only $1\%$ of all executions, 
 	the absolute number of executions with callbacks is large enough to indicate that callbacks are inevitable,
 	either because they are useful, or necessary.
	This means that contracts show an increasing use of callbacks, and thus more complex code, that may be prone to bugs resulting from unintended interaction between contracts.
In both experiments, the overall percentage of non-ECF executions out of the executions with callbacks, was $1.17\%$, and less than $0.01\%$ out of all executions.
\begin{figure}
$$
\begin{array}{lrrrrr}
\textbf{Blockchain} 	& \textbf{Date} & \textbf{Contracts}	& \textbf{Executions} & \textbf{Callbacks} & \textbf{Non-ECF (\%)} 	 \\
\text{Ethereum} 	& \text{30.VII.2015-30.III.2017} & $138,457$ & $81,097,421$ & $128,670$	& $3,315$\, ($0.004\%$)	 \\
\text{Ethereum} 	& \text{30.III.2017-23.VI.2017} & $203,859$	& $15,311,650$ 	& $155,662$	& $6$\, ($<0.001\%$)	 \\
\text{Eth. Classic}	& \text{30.VII.2015-29.VI.2017}	& $91,191$	& $32,494,464$ 	& $81,731$	& $2,288$ \, ($0.007\%$) \\
\end{array}
$$
\caption{
Experimental results.
We use dates to mark the portion of the blockchain checked in the experiment.
The Contracts column shows how many contracts were created (but not necessarily executed) in the relevant time period.
The Executions column records the number of method invocations and the 
Callbacks column shows how many of these invocations were callbacks.
Non-ECF column counts how many non-ECF executions were detected, and their percentage out of the total number of executions. 
}
\label{fig:ECFGeneralExperiment}
\end{figure}

\paragraph{Discussion of non-ECF examples.}
We present a list of all contracts that demonstrated non-ECF executions in \Cref{fig:ECFContracts}.
Contracts C2, C4  are related to the DAO. C2 is the original DAO~\cite{Eth:DAOAttack}. C4 is known as `The Dark DAO'~\cite{Eth:DarkDAO}, an object containing a copy of the DAO's code, as created by the attack (The mechanism of the DAO was such that, every withdrawal of funds, manifested in the form of a new object whose code is a copy, or `split', of the DAO code).
Contract C1 is an unrelated contract which suffered a vulnerability very similar to the DAO's. The vulnerability, also stemming from non-ECF behavior, was discovered during a security audit and disclosed shortly before the attack on the DAO~\cite{Eth:MakerDAO,Eth:NonECFMention}.
Contract C5 is an exercise published on the blockchain to demonstrate the DAO attack~\cite{Eth:DAOExercise}, and indeed a non-ECF execution was detected.
In some contracts, it is difficult to pinpoint the exact cause of the existence of non-ECF executions, as the only available code is EVM bytecode, which is not trivial to analyze and reverse.
We tried to connect these incognito contracts with their creators or users.
With this approach, we found evidence that C3 is also related to the DAO~\cite{Eth:0x34a}.

The contract at C7~\cite{Eth:0xbf7Creator} was traced back to Validity Labs~\cite{Validity}.
We contacted the authors and they provided us with the Solidity source-code of the contract~\cite{Eth:Validity}.
It was deliberately designed to have a DAO-style callback exploit, and was used in their training workshops to demonstrate its dangers.

\begin{figure}

\begin{minipage}{0.5in}
\begin{footnotesize}
\begin{alltt}
\begin{tabbing}
Object C \\

  Object Sender \\

  Method call(data, sender) \\
    if (Sender != nil) throw \\
    Sender = sender; ret = this.do(data); Sender = nil \\
\\
  Method do(data) \\
    ... // read Sender
\end{tabbing}
\end{alltt}
\end{footnotesize}
\end{minipage}
\caption{Pattern used by contracts C6, C8 and C9. 
\code{Sender} is initialized to \code{nil}.
\code{call} is a method that throws when \code{Sender} is not \code{nil}, and otherwise sets it, calls method \code{do}, and nullifies \code{Sender} afterwards.
}\label{fig:Ambisafe}
\end{figure}

\begin{figure}
$$
\begin{array}{llrrrrr}
\textbf{Name} & \textbf{Contract address} & \textbf{Execs.} & \textbf{Execs. w. cbs.} & \textbf{Non-ECF} & \textbf{Stack depth} \\
\multicolumn{7}{c}{\textit{Ethereum Network (ETH)}} \\
\text{C1} & $\small{0xd654bdd32fc99471455e...}$ &  $924$ & $143$ & $10$ & $3$ \\
\text{C2} & $\small{0xbb9bc244d798123fde78...}$ & $274,820$ & $103,064$ & $3,296$ & $2-146$ \\ 
\text{C3} & $\small{0x34a5451ef61a567ee088...}$ & $91$ & $8$ & $1$ & $46$ \\
\text{C4} & $\small{0x304a554a310c7e546dfe...}$ & $13,223$ & $2,812$ & 1 & $3$ \\
\text{C5} & $\small{0x59752433dbe28f5aa59b...}$ & $15$ & $6$ & $1$ & $3$ \\
\text{C6} & $\small{0x97361ea911d6348cf2af...}$ & $44$ & $42$ & $6$ & $2$ \\
\text{C7} & $\small{0xbf78025535c98f4c605f...}$ & $25$ & $22$ & $3$ & $3-9$ \\
\text{C8} & $\small{0x232f3a7723137ced12bc...}$ & $144$ & $142$ & $1$ & $2$ \\
\text{C9} & $\small{0x7c525c4e3b273a3afc4b...}$ & $35$ & $33$ & $2$ & $2$ \\

\multicolumn{7}{c}{\textit{Ethereum Classic Network (ETC)}} \\
\text{C1} &  $\small{0xd654bdd32fc99471455e...}$			& $850$ & $143$ & $10$		& $3$	\\
\text{C2} &  $\small{0xbb9bc244d798123fde78...}$		 & $195,428$ & $86,573$ & $805$			& $2-146$\\
\text{C3} &  $\small{0x34a5451ef61a567ee088...}$		 & $18$ & $9$ & $1$			& $46$	\\
\text{C4} &  $\small{0x304a554a310c7e546dfe...}$			& $14,150$ & $3,064$ & $1$		& $3$\\
\text{C10} & $\small{0xf4c64518ea10f995918a...}$		& $428$ & $177$ & $11$		& $42-122$\\
\text{C11} & $\small{0xb136707642a4ea12fb4b...}$	 & $2,582$ & $305$ & $201$			& $17-20$\\
\text{C12} & $\small{0x0e0da70933f4c7849fc0...}$	 & $5,330$ & $3,992$ & $1,259$		& $12-57$
\end{array}
$$
\vspace*{-3mm}
\caption{A sample of interesting Non-ECF contracts in Ethereum.
Contracts are given a name $\text{C1},\ldots,\text{C12}$, and are ordered chronologically, by the date of the first  non-ECF execution.
The Executions and Executions with callbacks columns show statistics on usage style. 
The Non-ECF column shows how many executions were detected as non-ECF. 
Stack depth column indicates the range of the depths of the non-ECF subexecutions.}
\label{fig:ECFContracts}
\end{figure}

\begin{figure}
$$
\begin{array}{llll}
					& \textbf{Time} & \textbf{Memory (max)} & \textbf{Memory (end)} \\
\text{Monitor off}	 & $16h 17m$ & $5.5GB$ & $803MB$ \\
\text{Monitor on}  & $16h 50m$ \, ($3.38\%$ \text{ overhead}) & $5.5GB$ \, ($0\%$) & $940MB$ \, ($17\%$ \text{ overhead}) \\
\end{array}
$$
\vspace*{-3mm}
\caption{Performance statistics. Benchmark experiment was importing the Ethereum main network blockchain, from its creation in July 30, 2015 until March 30, 2017. Compares the import with monitor on or off.}
\label{fig:Performance}
\end{figure}



The same high-level Solidity code of contracts C6, C8, and C9, were provided to us by their creators at Ambisafe~\cite{Eth:Ambisafe}.
The pattern used by these contracts gives rise to behaviors that are purposefully non-ECF.
We show a snippet illustrating the pattern in \Cref{fig:Ambisafe}.
This pattern is inherently non-ECF.\footnote{In the formal definition, it actually is ECF, because a call of a contract to itself is not a callback. 
The contracts under examination were discovered due to a deviation of our monitor's implementation from the full definition of ECF. However, this example can be fitted into a slightly modified pattern which is not-ECF even according to the full definition, by adding an intermediary contract between \code{call} and \code{do}.}
The method \code{do} assumes the value of \code{Sender} is not \code{nil}, but this only occurs in the context of an invocation of \code{call}. 
The purpose behind this behavior, is to have \code{Sender} act as a lock, protecting against unexpected callbacks.
Such a design may be avoided in presence of a monitor that allows only ECF executions.

The bottom part of the table in \Cref{fig:ECFContracts} shows non-ECF contracts found in~\citet{Eth:ETHClassic}.
Ethereum Classic (or ETC) is the continuation of the original Ethereum blockchain following the controversy of the hard-fork due to the DAO bug.
Until July 20, 2016, both blockchains, Ethereum and Ethereum Classic, contain the same executions, and thus the same non-ECF executions.
Our result and investigation show that all non-ECF executions discovered in the Ethereum Classic network are of copies of the DAO~\cite{Eth:DAOinETC}.

Generally, it is important to stress that: (1) there may be other non-ECF contracts, as crafting and deploying contracts that exploit non-ECF entails investment of real money, thus requires a strong incentive to do so; (2) attacking is harder as Ethereum employs (not bullet-proof) heuristics to limit callbacks; (3) a better playground may be the Ethereum TestNet on which we did not run the experiment, but may provide insight as a future work. 

The actual overhead measured by enabling the ECF monitor is given in \Cref{fig:Performance}.
We used the first experiment, where the blocks were imported, as a benchmark. 
Normally, there is an additional overhead of network download times, which can vary significantly.
The measured overhead is about $3.5\%$, when calculating the difference in time of importing the blockchain with the monitor off, and importing it with the monitor on.
We believe the actual overhead is even smaller in most realistic scenarios.
First, most clients import the blockchain using the network, which may cause unexpected latencies, unrelated to the monitor.
Additionally, the process was pointed to a directory created on a 200~GB RAM disk to improve the scalability of the experiment.\footnote{The Ethereum blockchain suffered a DoS attack~\cite{Eth:DoSAttack,Eth:DoSAttack2} affecting the blockchain in the range of block numbers 2.2M-2.7M, causing all peers participating in the blockchain to make frequent accesses to disk. Running on a RAM disk was necessary to minimize the runtime of experiments.}
Most clients use a physical disk and not a RAM disk. 
Even if the physical disk is an SSD drive, the experiment slows down significantly, and takes about 20h (18\% more than with a RAM disk). 

The additional memory footprint measured in the end of the import is about $140MB$, or $17\%$. 
It should be noted, that as the implementation is written in Go, which includes automatic garbage collection, the memory consumption varies between tests. The relative difference with the monitor on or off was consistent across repeated tests.
 The maximal memory used by the process is $5.5GB$ and is not related to the monitor.
High memory consumption occurred during the processing of one of the DoS attacks on the blockchain.

\section{Related Work}\label{Sec:RelatedWork}

\subsection{Modular Reasoning}
Modular reasoning is a topic which has been studied extensively
with the seminal works of~\citet{DBLP:journals/acta/Hoare72} and~\citet{DBLP:books/ph/Dijkstra76}.
For more recent studies on modularity we refer the readers to~\citet{DBLP:conf/fm/LeinoM05,DBLP:journals/jacm/BanerjeeN05}.

  \emph{Averroes}~\cite{RW:AliLhotakECOOP13} is
a tool for generalizing call-graphs of applications by leaning on a
\emph{separate compilation assumption} to generate a general stub
library for applications. This allows analysis tools to be modular,
as generating full call-graphs 
is both expensive and imprecise.
They show how encapsulation assumptions and proofs can be leveraged to improve
the feasibility and the precision of analyses.
In our work, we give
a sufficient condition, ECF, for the ability to soundly reason about
a single object in isolation from any other object.

The work of~\citet{RW:Leino2002ValidStateBit} presents an idiom for
verifying if an object behaves as expected in the presence of
callbacks, called \emph{Valid/state specification idiom}. Every
object $o$ maintains a `valid' bit that indicates if its state is
valid, i.e., satisfies its object invariants. The bit should be true
in every first invocation of $o$ in an execution. When
$o$ calls a method of another object $o'$, $o$ turns off its `valid'
bit. This way, if the execution of $o'$ leads to another method call
of $o$, before the original call to $o$ completed, the code of $o$
can take into account the fact that its object invariants do not
necessarily hold. The existence of such a
`valid' bit is helpful to achieve modular soundness, that is the
ability to reason about an object in isolation.
This paper achieves modular soundness by relying on the encapsulation of the object's
state.
 Essentially, an ECF object is an object for which the
`valid' bit is always turned on, as it is guaranteed that the object
state changes only from within the object's methods, and that those
methods too are only executed where originally the `valid' bit would
be turned on. Thus, with the assumption on all executions being ECF,
there is no need to define a separate behavior of the code for when
the `valid' bit is turned on or off. To enable sound modular
reasoning, we simply ignore external calls and assume any return
value returned from any such external call.
We note that the absence of shared state drastically simplify our life.

\citet{LogozzoCLSS09}~presents a method for \emph{modular inference of class invariants}.
Specifically, it is shown that the trace semantics of an isolated class are sound and complete with respect to the trace semantics of a whole program.
The goal is to find the strongest state-based sound class invariant, that holds in both the isolated and non isolated cases.
Abstraction is used in order to compute such an invariant.
If it the class invariant matches the specification of the class,
 then it is ensured that the class itself matches the specification
 even in the context of a whole program.
The mentioned work enables modular reasoning by using abstraction.
Our work does not attempt to find such a sound class invariant,
but rather to satisfy the necessary conditions for being able to statically verify
any specification of an object in isolation of other objects.
The benefit here is that we do not depend on the precision of an abstraction,
which may output an invariant that overapproximates the specification, and thus does not meet it.


\subsection{Verification of Smart Contracts}
Even before the events surrounding the bug in the DAO,
there were discussions in the Ethereum community
 about formal verification of smart contracts.
Following the extreme measures taken to
avert the effects of the attack on the DAO by hard-forking the blockchain and effectively
rewrite its history of executions, the discussion became more wide-spread.

\citet{RW:LuuCCS16}~characterized a class of security bugs in smart contracts called \emph{Transaction-Ordering Dependence (TOD)}. A contract inflicted with TOD bugs may behave unexpectedly when there is more than one client using the system and the effect of the execution of one client depends on whether the other client already executed or not.
In both TOD and ECF the bugs arise from the fact that the execution is performed in an unexpected state of the contract.
However, TOD bugs arise when there is more than one execution (since smart contracts are executed in a distributed environment), whilst non-ECF arises even in a single execution which contains callbacks.
One of the solutions suggested for TOD bugs is \emph{guarded transactions}.
The idea is to allow contract writers to define guard conditions which are verified by the virtual machine executing the contract code.
The execution must satisfy the guard condition, otherwise it fails without any effect.
However, by enabling modular reasoning on contracts by proving or asserting at runtime that the executions are ECF, we can verify similar conditions statically.
The only addition that may be required for the virtual machine is the online ECF check, which we found to be inexpensive in practice.
Checking arbitrary conditions at runtime may either be inefficient,
or not expressible enough to specify fully correct contract behavior.
In addition, by verifying at runtime that executions are ECF,
 we are already able to detect and prevent executions which are,
 with high probability, unwanted or unexpected.

\citet{RW:LuuCCS16} presents a tool called \emph{Oyente}~\cite{Eth:OyenteOnline}, based on symbolic execution of contracts.
The tool's web version reports on the existence of `reentrancy bugs', which is how the family of bugs such as the bug in the DAO were dubbed by the Ethereum community.
We attempted to verify both ECF and non-ECF contract variations based on the DAO object presented in \Cref{Fi:DaoContract}.
We received a report on a `reentrancy bug', even on ECF contracts.
We reported the false positives to the web Oyente team, and will submit the issue request by the camera-ready deadline.

The Why3~\cite{RW:Why3filliatre13esop} tool was also applied to verify smart contracts written in Solidity.
This requires whole code analysis and user supplied loop invariants.

\emph{}~\citet{BhargavanPLAS16}~translate a subset of the high-level Solidity language for Smart Contract development to F*, enabling using F*'s verification framework on Smart Contracts.
They also presented a decompiler for EVM bytecode to F*.
Similarly to the Why3 approach, the authors faced the issue of translating peculiar syntactic features
of the smart contract language Solidity to F*.

It should be noted that both F* and Oyente are successful in detecting other bugs, such as mishandled exceptions.
For technical clarity, we omit discussion of the semantics of exceptions and rollbacks in Ethereum.
Primarily, to arrive at general results that can be applied in domains other than Ethereum,
and secondly, to not overbear the reader with technical details
on the myriad ways Ethereum contracts may be invoked, and how exceptions may be handled in each of these ways.

\citet{MillerIACR2015}~discuss their insights from an educational smart contracts lab they held, and
published example contracts used in the lab.
We manually analyzed one such contract, implementing a \emph{rock, paper, scissors} game~\cite{Eth:EthereumLab}.
We identified several control paths in which a non-ECF execution might manifest.
Specifically, there are two control paths in registration to the game (in which players provide a sum as bounty), and three additional paths in the collection of the prize.
However, the authors put a constraint on the ability to execute callbacks by limiting to a minimum the amount of \emph{gas} available to the execution.
\emph{gas} is a novel concept in Ethereum that effectively bounds the runtime by associating with each low-level opcode a cost.
If an execution is not provided with enough \emph{gas} when called, it throws a special \emph{out-of-gas} exception.

\citet{RW:SergeyArxiv17} offer an analogy between the nomenclature of Smart Contracts and that of concurrent objects.
Specifically, the scenario of a contract calling another contract is compared to cooperative multitasking,
in which contract invocation is analogous to the case where the caller yields control.
One of the main challenges mentioned is that of being able to verify contracts in isolation of other contracts.
The ECF property brings us closer to that goal,
 by allowing to check properties that can be specified as `contract invariants' in a modular way.

\section{Conclusion}\label{Sec:Conclusion}

In this paper we have presented a simple generic correctness
condition for callbacks called Effective Callback Freedom and
studied its usefulness.
We have shown that it enables modular reasoning in environments with local-only mutable states like Ethereum.
We have also shown that in Ethereum it can be used to prevent bugs without drastically limiting programming style, 
and that it can be checked dynamically with low runtime overhead.
In the future, we expect to apply the concept of ECF and prove its usefulness in other environments such as Microservices and Amazon~$\lambda$.




\begin{acks}  
We would like to thank the reviewers for their helpful comments. The research leading to these results has received funding from the  European Research Council under the European Union's Seventh Framework Programme (FP7/2007-2013) / ERC grant agreement n$^{\circ}$ [321174], Len Blavatnik and the Blavatnik Family foundation, Blavatnik Interdisciplinary Cyber Research Center at Tel Aviv University, and the Pazy Foundation.
\end{acks}



\bibliography{biblio}


\newpage

\appendix

\end{document}